\documentclass[twocolumn]{autart}    

\usepackage{graphicx} 
\usepackage{xcolor}
\usepackage{cite}
\usepackage{graphicx}
\usepackage{subfigure}
\let\theoremstyle\relax
\usepackage{amsmath,amsthm,amssymb,amsfonts}

\usepackage{algorithm} 
\usepackage{algorithmic}
\usepackage{comment}
\usepackage{xurl}
\usepackage{natbib}
\usepackage{booktabs}
\usepackage{wrapfig}

\newcommand{\tr}{\mathsf{T}}
\newcommand{\poly}{\operatorname{poly}}

\newtheorem{theorem}{Theorem}

\newtheorem{remark}{Remark}[section]
\newtheorem{lemma}{Lemma}
\newtheorem{proposition}{Proposition}[section]
\newtheorem{corollary}{Corollary}[section]

\newtheorem{example}{Example}

\newtheorem{assumption}{Assumption}

\newcommand{\trace}{\operatorname{tr}}

\newcommand{\Ric}{\operatorname{Ric}}

\definecolor{moccasin}{rgb}{0.98, 0.92, 0.84}

\begin{document}

\begin{frontmatter}

\title{
Regret Guarantees for Model-Free Cooperative Filtering under Asynchronous Observations 
\vspace{-5mm}\thanksref{footnoteinfo}} 
\thanks[footnoteinfo]{This work is supported by NSF CMMI 2320697 and NSF CAREER 2340713. Jiachen Qian and Yang Zheng are with the Department of Electrical and
Computer Engineering, University of California, San Diego, 
CA 92093 USA}

\author[UCSD]{Jiachen Qian}\ead{jiq012@ucsd.edu},    
\author[UCSD]{Yang Zheng}\ead{ zhengy@ucsd.edu}               

\address[UCSD]{Department of Electrical and
Computer Engineering, University of California San Diego \vspace{-3mm}}  

\begin{keyword}                           
Optimal Estimation; Cooperative Filtering; Online Learning; Logarithmic Regret.               
\end{keyword}                            

\begin{abstract}
Predicting the output of a dynamical system from streaming data is fundamental to real-time feedback control and decision-making. 
We first derive an autoregressive representation that relates future local outputs to \textit{asynchronous} past outputs. Building on this structure, we propose an online least-squares algorithm to learn this autoregressive model for real-time prediction. We then establish a regret bound of $\mathcal{O}(\log^3 N)$ relative to the optimal model-based predictor, which holds for marginally stable systems. Moreover, we provide a sufficient condition characterized via a symplectic matrix, under which the proposed cooperative online learning method provably outperforms the optimal model-based predictor that relies solely on local observations.
From a technical standpoint, our analysis exploits the orthogonality of the innovation process under asynchronous data structure and the persistent excitation of the Gram matrix despite delay-induced asymmetries. 
Overall, these results offer both theoretical guarantees and practical algorithms for model-free cooperative prediction with asynchronous observations, thereby enriching the theory of online learning for dynamical systems. \vspace{-3mm}
\end{abstract}

\end{frontmatter}

\section{Introduction}
\vspace{-3mm}
Online prediction of dynamical systems is a
fundamental problem in many areas \cite{kalmanfilter, box2015time}, with applications ranging from control systems \cite{Andersonoptimal}, robotics \cite{barfoot2024state} to natural language processing \cite{belanger2015linear} and computer vision \cite{coskun2017long}.  
For linear Gaussian stochastic systems, 
the celebrated Kalman filter \cite{kalmanfilter} provides the optimal prediction of system states among all possible filters in the sense of minimal mean square error. Many theoretical results on the Kalman filter, including stability and asymptotic performance, have been well-established in the past \cite{Andersonoptimal, kailath2000linear}. 

While the Kalman filter serves as the foundation and benchmark for prediction with a single centralized data stream, modern large-scale systems pose new challenges that go beyond this classical setting. In applications such as field temperature monitoring \cite{jackson2016novel}, traffic flow control \cite{li2017dynamical}, and power system estimation \cite{xie2012fully}, data are often collected from heterogeneous sources, and the prediction process itself is increasingly decentralized \cite{hashemipour364} or distributed \cite{Olfati4434303}. In such settings, traditional single-source prediction methods are often inadequate.
Consequently, recent works have explored cooperative prediction and information fusion techniques that leverage correlations among different information sources to improve prediction accuracy \cite{hall1997introduction, ren2023effects}.
A key technique in this context is to design~appropriate weights, either scalar or matrix-valued, to effectively fuse the available observations~\cite{niehsen2002information, battistelli2014kullback}. 

\vspace{-3mm}

However, fusing information from different sources in real time introduces two new challenges: 1) information delay induced by signal processing or network communication \cite{wittenmark1995timing, shi2009packetdelay}, and 2) the lack of prior knowledge of the underlying correlation among different sources. 
Temporal asynchrony across sources poses significant difficulties for accurate prediction. 
While techniques such as system augmentation and classical Kalman filter theory allow for deriving the closed-form optimal predictor with delayed information  \cite{Andersonoptimal, lu2005kalman}, these methods often incur substantial computational overhead \cite{larsen1998incorporation} and are further complicated by practical issues such as packet loss  \cite{gu2021kalman}. 
Moreover, in real-world applications such as complex chemistry processes \cite{wu2019machine} or traffic flow \cite{li2017dynamical}, the underlying system dynamics are either unavailable or difficult to identify accurately. In these model-free scenarios, there is no established guideline for leveraging delayed multi-source information for cooperative filtering. 

In this paper, we develop a model-free cooperative prediction algorithm to effectively integrate both local and delayed \textit{external} observations. 
In particular, we consider a linear stochastic system 
\begin{equation}\label{eq: LinearSystem}
\begin{aligned}
x_{k+1}=Ax_k + \omega_k,\;\; y_{k}=Cx_k + v_{k}, \; k = 0,1,2,\ldots 
\end{aligned}
\end{equation}
where $x_k\in\mathbb{R}^n$ is the inherent state vector of the system and $y_{k}\in\mathbb{R}^{m}$ is the measurement output vector, $\omega_k$ and  $v_{k}$ are the process noise and observation noise, respectively. In addition to the local measurement $y_k$, we consider the presence of an external information source, modeled as   
\begin{equation} \label{eq:external-source}
y^{\mathrm{e}}_k=C^{\mathrm{e}}x_k+v^{\mathrm{e}}_k,
\end{equation}
with noise $v_k^{\mathrm{e}} \in \mathbb{R}^{\tilde{m}}$, which may be received via communication or retrieved from the cloud. 
Let $Y_{k_1:k_2}\triangleq\begin{bmatrix}
    y_{k_1},\dots,y_{k_2}
\end{bmatrix},\; Y^{\mathrm{e}}_{k_1:k_2}\triangleq\begin{bmatrix}y^{\mathrm{e}}_{k_1},\dots,y^{\mathrm{e}}_{k_2}\end{bmatrix}$ denote the collection of local and external observations, respectively, from time step $k_1$ to $k_2$. 
We aim to design a cooperative predictor that uses local past observations $Y_{0:k}$ and \textit{delayed} external observations $Y^{\mathrm{e}}_{0:k-d}$, i.e., $\tilde{y}_{k+1}=f\left(Y_{0:k},Y^{\mathrm{e}}_{0:k-d}\right)$, where $d$ is the delay in receiving the external observations. 
One key challenge lies in the {model-free} setting:  
\textit{with no knowledge of the system model $A,C,C^{\mathrm{e}}$ and noise statistics, how can we fully exploit the information in $Y_{0:k}\bigcup Y^{\mathrm{e}}_{0:,k-d}$ to predict $y_{k+1}$?}

\subsection{Our contributions} In this paper, we make four contributions to address this challenge.  

\vspace{-2.5mm}

\begin{itemize}
    \item 
{\bf Autoregressive analysis for multi-source asynchronous observations.} With conditional distribution theory, we first derive the optimal model-based cooperative predictor with partial time delays (Proposition~\ref{theorem:optimal-delayed-filter}). 
Building on this, we construct an autoregressive model that links past delayed outputs to the future output. 
While the delay in external information introduces asymmetry and leads to non-identical innovation dynamics, we prove that this does not compromise the \textit{orthogonality} of the innovations (Theorem~\ref{thm: property}). This preserved {orthogonality} is key to ensuring a logarithmic regret for our online algorithm~next. 

\item {\bf Model-free cooperative filtering with logarithmic regret.} Based on the new auto-regressive model, we propose an online least-squares learning algorithm that leverages multi-source information for model-free cooperative prediction. 
With new insights into the statistical structure of delayed multi-source observations, we establish a regret bound of $\mathcal{O}(\log^3 N)$ for our online algorithm, measuring against the model-based optimal delayed predictor, where $N$ denotes the length of the whole time horizon. This logarithmic regret bound holds for marginal stable systems $\rho(A)=1$ (Theorem~\ref{thm: regret}). 
While the general idea of least-squares learning has been exploited in recent results on model-free Kalman filtering \cite{tsiamis9894660, rashidinejad2020slip,qian2025logarithmic}, our logarithmic regret requires solving a new technical challenge related to asymmetry induced by asynchronous observations. 
\item {\bf Fundamental improvement with asynchronous observations.} 
According to classical model-based information fusion theory, the availability of external data does not necessarily guarantee improved prediction accuracy. We introduce an additional condition based on the symplectic matrix and establish a precise condition under which external delayed information leads to a fundamental performance improvement for online prediction. We then show that, with high probability and for a sufficiently large time horizon $N$, the proposed online cooperative method outperforms the optimal model-based predictor that relies solely on local past outputs (Theorem~\ref{thm: improvement}, Corollary~\ref{coro: improvement}). These results 
demonstrate that external, delayed outputs can indeed improve online prediction performance.

\item {\bf Techniques for Handling asymmetry induced by asynchronous observations.} 
Due to delayed external observations, the autoregressive model exhibits asymmetry across different information sources, making standard analytical techniques in the literature \cite{tsiamis9894660, rashidinejad2020slip, hazan2017learning,qian2025logarithmic} inapplicable. 
We develop new analytical tools to deal with this asymmetry. In particular, we establish a sufficient condition for the persistent excitation of an asymmetric Gram matrix, parameterized by the delay $d$. For any sufficiently large $k\ge\poly\left(d,\log\left(\frac{1}{\delta}\right)\right)$, we prove that the Gram matrix is uniformly persistently exciting with high probability $1-\delta$, i.e., its smallest eigenvalue grows at least linearly with time $k$. 

\end{itemize}

\vspace{-1mm}
\subsection{Related works}

\vspace{-2mm}

The most relevant works can be categorized into two main parts.

\vspace{-2mm}
\begin{itemize}
    \item 
{\bf Cooperative estimation and prediction with network-induced time delay.} 
It is well recognized that using multi-source data improves online prediction performance \cite{hall1997introduction, wu2011wireless}.
  Still, the time delay induced by communication remains a critical challenge. Classical work focused on describing the effect of time delay on the estimation process \cite{alexander1991state, wittenmark1995timing}. Subsequent works aim to design the optimal filter with delayed observations \cite{shidistributed5200489} or suboptimal filters with reduced computational complexity \cite{larsen1998incorporation, shan2015delayed}. The closest work to our setup is \cite{shidistributed5200489}. We provide an improved analysis of innovation orthogonality in a more concise form than that in \cite{shidistributed5200489}. We further quantify the improvement in prediction by external delayed information in terms of the delay step $d$.
\item 
{\bf Model-free online prediction. }  
A central theoretical question in this line is whether an online learning algorithm can achieve sublinear \textit{regret} against the model-based optimal predictor \cite{hazan2017learning,ghai2020no,tsiamis9894660,qian2025model,qian2025logarithmic}. For linear Gaussian stochastic systems, \cite{tsiamis9894660} demonstrated that a truncated regression model combined with an online least squares algorithm can ensure \textit{logarithmic regret}. With a spectral filtering technique, which leverages low rank approximation to mitigate the effect of long-term dependence, \cite{hazan2017learning} established a sublinear regret bound $\sqrt{N}$ for general linear stochastic systems,  and \cite{rashidinejad2020slip} further provided an improved logarithm regret bound $\log^{11} N$ for Gaussian stochastic systems.  { Our recent work \cite{qian2025model} introduces an exponential forgetting strategy to mitigate overfitting arising from the unbalanced autoregressive structure in online prediction, while still preserving logarithmic regret. In \cite{qian2025logarithmic}, we further extend the online regression framework to multi-step prediction and establish that, relative to the optimal multi-step Kalman predictor, the regret remains logarithmic in the time horizon, with the constant factor scaling polynomially with the prediction horizon. }
Despite these recent advances, model-free cooperative filtering is much less well developed, 
and theoretical guarantees for model-free cooperative filtering remain largely open. Our work establishes the first logarithmic regret for model-free cooperative prediction using asynchronous observations.     
\end{itemize}

\subsection{Notations}
We write $\succ$ for the Loewner order on the positive semidefinite cone.
The norms $\|\cdot\|_2$, $\|\cdot\|_F$, and $\|\cdot\|_1$ denote the spectral (operator) norm, Frobenius norm, and entrywise $\ell_1$ norm, respectively.
For a sub-Gaussian random variable $X$, $\|X\|_{\psi_2}$ denotes its $\psi_2$ norm \citep[Definition~2.5.6]{vershynin2018high}.
We write $\rho(A)$ for the spectral radius of a matrix $A$, and $\operatorname{poly}(x)$ for a polynomial in the components of $x$.
Expectation and probability are denoted by $\mathbb{E}[\cdot]$ and $\mathbb{P}[\cdot]$, respectively.
The notation $O(f(x))$ and $\tilde O(f(x))$ have their usual meanings, where $\tilde O(f(x))$ suppresses polylogarithmic factors.
The Kronecker delta is $\delta_{kl}=1$ if $k=l$ and $\delta_{kl}=0$ otherwise. 
Finally, we define the Riccati operator
$
\Ric(A,C,Q,R,P)\triangleq \;APA^\tr+Q-APC^\tr\left(CPC^\tr+R\right)^{-1}CPA^\tr.
$
\section{Preliminaries}

\vspace{-1mm}

\subsection{System model and centralized Kalman filter} 

\vspace{-2mm}

We consider the linear stochastic system \eqref{eq: LinearSystem} in the presence of an external information source \eqref{eq:external-source}. 
The sequences $\left\{\omega_k\right\}^{\infty}_{k=0}$, $\left\{v_{k}\right\}^{\infty}_{k=0}$ and $\left\{v_{k}^{\mathrm{e}}\right\}^{\infty}_{k=0}$ are {mutually uncorrelated white Gaussian noises}. Their covariance matrices are denoted as $Q$, $R$, and $R^{\mathrm{e}}$, respectively, and they are positive definite.
At time $k\geq d$, we have the local observations $y_0, \ldots, y_k$ and delayed external observations $y_0^{\mathrm{e}}, \ldots, y_{k-d}^{\mathrm{e}}$. Based on the observations, we aim to predict the next observation $\hat{y}_{k+1}$. 

If we only have the local observations $Y_{0:k}$, the classical theory minimizes the mean-square error as:  
\begin{equation}\label{MMSEProblem-main}
    \hat{y}_{k+1} \triangleq \arg \min _{z \in \mathcal{F}_{k}} \mathbb{E}\left[\left\|y_{k+1}-z\right\|_{2}^{2} \mid \mathcal{F}_{k}\right], 
\end{equation}
where $\mathcal{F}_{k} \triangleq \sigma\left(y_{0}, \ldots, y_{k}\right)$  denotes the filtration generated by the local observations  $y_{0}, \ldots, y_{k}$. 
It is now well-known that the solution to \eqref{MMSEProblem-main} takes a recursive form, i.e., the \textit{Kalman filter} \cite{kalmanfilter}, 
\begin{equation}\label{eq: localOptimal-main}
\begin{aligned}
\hat{y}_{k}  =C \hat{x}_{k},\qquad\hat{x}_{k+1} =A \hat{x}_{k}+L_k\left( y_{k}-\hat{y}_{k}\right),
\end{aligned}
\end{equation}
where $L_k=AP_kC^\tr\left(CP_kC^\tr+R\right)^{-1}$ with $P_k$ satisfying the Riccati recursion
\begin{equation}\label{eq: PartialRicRecursion-main}
    P_{k+1}=\Ric(A,C,Q,R,P_k). 
\end{equation}
The prediction error's covariance is 
$$\mathbb{E}\left\{(y_{k+1}-\hat{y}_{k+1})(y_{k+1}-\hat{y}_{k+1})^\tr\right\}
=CP_{k+1}C^\tr+R.
$$
\vspace{-7mm}

The solutions in \eqref{MMSEProblem-main}-\eqref{eq: PartialRicRecursion-main} are standard, but without using the external observations from \eqref{eq:external-source}. Intuitively, 
 access to this additional information can enhance prediction performance, particularly by reducing the covariance of the prediction error. The simplest case occurs when there is no delay in receiving the external observations, i.e., $d = 0$. In this case, we let $y^{\mathrm{c}}_k\triangleq\left[y_k^\tr,y_k^{\mathrm{e}\tr}\right]^\tr$ denote the centralized observation at each time step $k$; accordingly, we denote $\bar{C}=\left[C^\tr,C^{\mathrm{e}\tr}\right]^\tr$ and $\bar{R}=\operatorname{diag}\left\{R,R^{\mathrm{e}}\right\}$. 
With no delay, the filtration in \eqref{MMSEProblem-main} will be $\mathcal{F}_{k} \triangleq \sigma(y_{0}^{\mathrm{c}}, \ldots, y_{k}^{\mathrm{c}})$, where each $y_k^{\mathrm{c}}$ is the combination of $y_{k}$ and $y_{k}^{\mathrm{e}}$. Then, the corresponding optimal Kalman filter takes the same recursive form: 
\begin{equation}\label{eq: globalOptimal}
\begin{aligned}
\bar{y}_{k}  =C \bar{x}_{k},\qquad\bar{x}_{k+1} =A \bar{x}_{k}+\bar{L}_k\left( y_{k}^{\mathrm{c}}-\bar{C}\bar{x}_{k}\right),
\end{aligned}
\end{equation}
where the corresponding Kalman gain is $\bar{L}_{k}=A\bar{P}_k\bar{C}^{\tr}\left(\bar{C}\bar{P}_k\bar{C}^\tr+\bar{R}\right)^{-1}$ with $\bar{P}_k$ satisfying the same Riccati recursion \eqref{eq: PartialRicRecursion-main} in which $C$ and $R$ are replaced by $\bar{C}$ and $\bar{R}$ respectively. For \eqref{eq: globalOptimal}, the prediction error's covariance is 
$\mathbb{E}\left\{(y_{k}-\bar{y}_{k})(y_{k}-\bar{y}_{k})^\tr\right\}=C\bar{P}_kC^\tr+R$. 

Indeed, we can show that $P_k \succeq \bar{P}_k,\;\;\forall k>0$ (see, e.g., \cite{ren2023effects}), thus, accessing the external observation reduces the covariance of the prediction error. To illustrate this, we present a simple example below.

\begin{example} \label{example:comparison}
    Consider a second-order system with noise statistics below
    \begin{equation}\label{eq: examplePara}
    \begin{aligned}
        A&=\begin{bmatrix}
    0.2&0.8\\
    0.4&0.6
\end{bmatrix}, \quad \bar{C}=\begin{bmatrix}
    C\\C^{\mathrm{e}}
\end{bmatrix}=
\begin{bmatrix}
1&0\\0&1
\end{bmatrix},\\  
Q &= \begin{bmatrix}
    1&0\\0&1
\end{bmatrix},
\qquad \bar{R} =\begin{bmatrix}
R\\&R^{\mathrm{e}}
\end{bmatrix}=\begin{bmatrix}
    1&0\\0&1
\end{bmatrix}.
    \end{aligned} 
\end{equation}
This system has a two-dimensional state space. The local sensor only measures the first state, while an external sensor measures the second state. As guaranteed by the centralized Kalman filter, we can reduce the covariance of the prediction error by using both measurements without delay. This is indeed the case: with only the local sensor, the steady-state error covariance is { 3.10} while this is reduced to { 2.40} with both sensors. We illustrate the error covariances in Fig.~\ref{fig: Motivation}.  \hfill $\square$
\end{example}

\begin{figure}[t]
    \centering
    \subfigure[Local vs. Cooperative filtering.]{\includegraphics[width=0.4\textwidth]{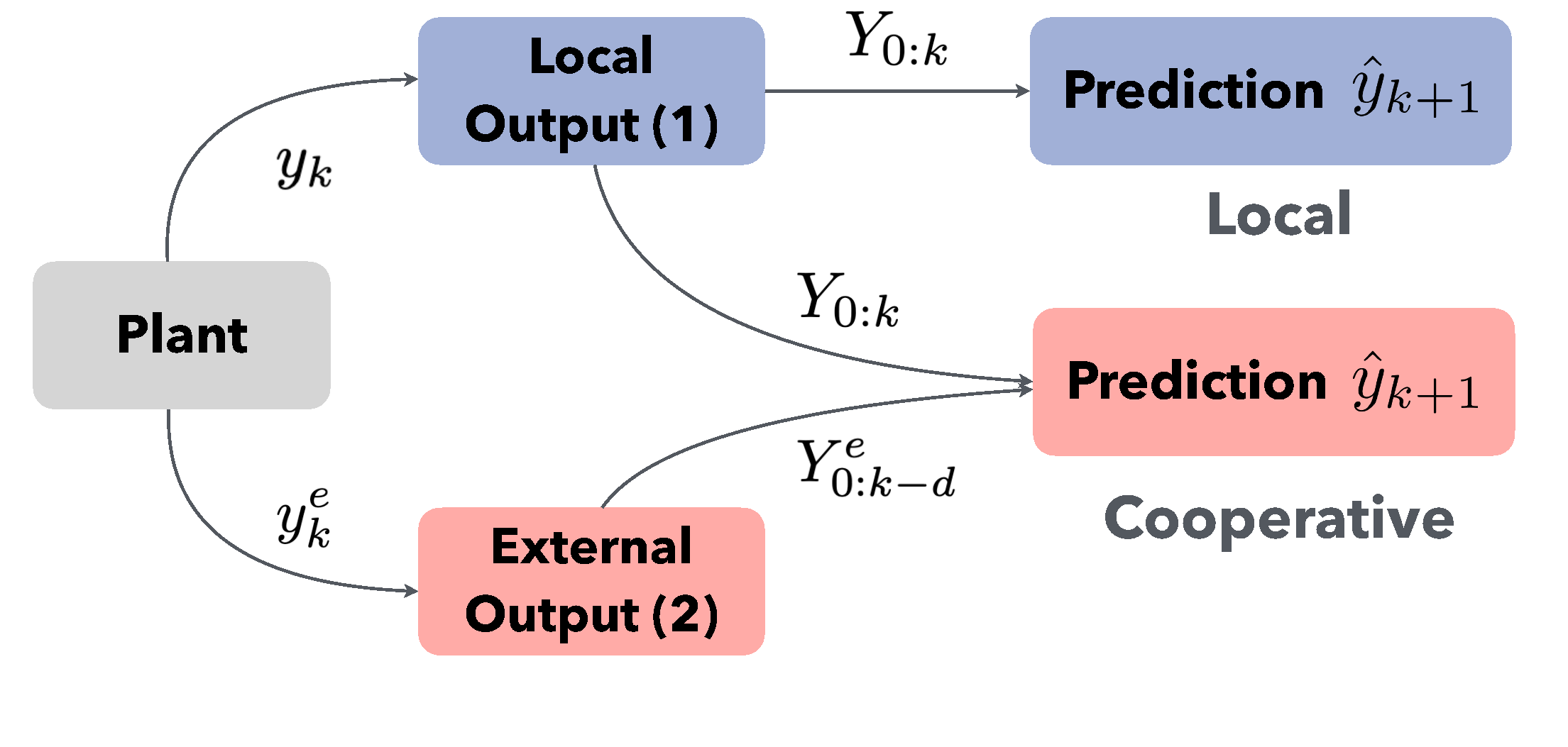}
\label{figMotivation}}
\hspace{6mm}
     \subfigure[Numerical comparison using Example~\ref{example:comparison}.]
{\includegraphics[width=0.4\textwidth]{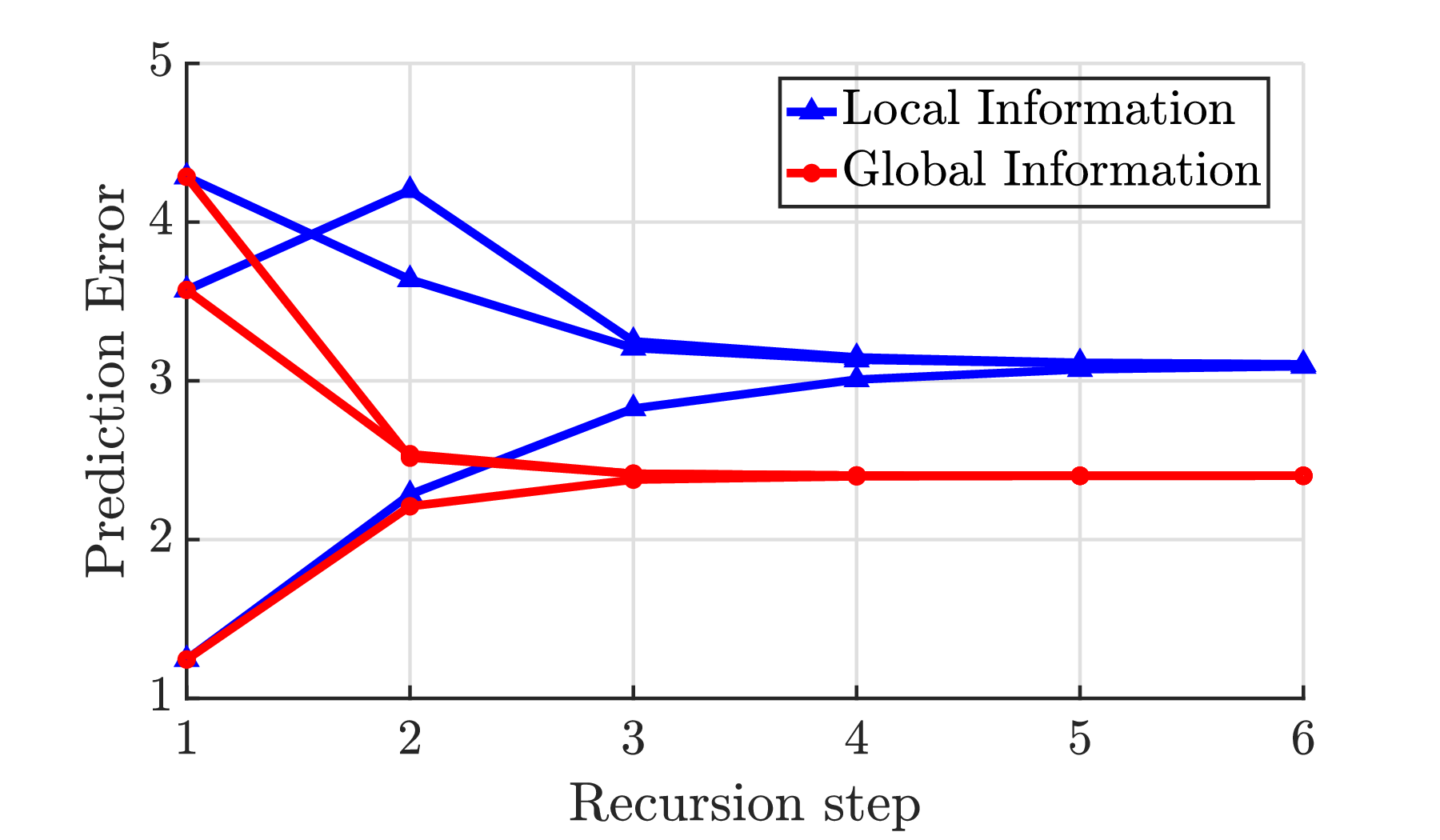}
\label{fig: Motivation}}
    \caption{Comparison between a standard local filter and our cooperative filter with delayed external observations. (a) Schematic illustration where our cooperative filter utilizes delayed external observations in \eqref{eq:external-source}; (b) Numerical results in which accessing additional observations reduces the prediction error's covariance (see Example~\ref{example:comparison}). } 
\end{figure}

\vspace{-1.5mm}
\subsection{Online learning and regret}

\vspace{-1mm}

The standard Kalman filter theory requires the exact system model and noise statistics, as shown in \eqref{eq: localOptimal-main} and \eqref{eq: PartialRicRecursion-main}. In this paper, we focus on \textit{model-free} online learning algorithms to predict observations without requiring knowledge of the system matrices or noise covariances.

Following \cite{tsiamis9894660,rashidinejad2020slip,qian2025model}, we quantify the performance of the online prediction in terms of the \textit{regret} measured against the Kalman filter \eqref{eq: localOptimal-main} that has full system knowledge. 
If we only use local measurement, the goal is to design an online algorithm $\tilde{y}_k=f_k(y_0,\dots,y_{k-1})$ to minimize the following regret
\begin{equation}\label{regret}
    \mathcal{R}_{N} \triangleq \sum_{k=1}^{N}\left\|y_{k}-\tilde{y}_{k}\right\|^{2}-\sum_{k=1}^{N}\left\|y_{k}-\hat{y}_{k}\right\|^{2},
\end{equation}
where $\hat{y}_{k}$ is the steady-state  Kalman filter's prediction from \eqref{eq: localOptimal-main}.  
Due to the exponential convergence of the Kalman filter, it is sufficient to compare the online algorithm with the steady-state predictor $\hat{y}_k$. 
Recent advances in \cite{tsiamis9894660,rashidinejad2020slip,qian2025model} have shown that it is possible to obtain \textit{logarithmic} regrets for the standard Kalman filter.  
One key observation is that the Kalman filter \eqref{eq: localOptimal-main} can be utilized to build up an \textit{auto-regressive} structure for $y_k$ after expanding the past $p$ observations: %
\begin{equation}\label{regression}
    y_{k}=G_{p} Z_{k, p}+C(A-L C)^{p} \hat{x}_{k-p}+r_{k},
\end{equation}
where $r_k = y_k - \hat{y}_k$ denotes the innovation, and $Z_{k, p}$ denotes the past $p$ observations at time $k$, i.e., 
$
Z_{k, p} \triangleq\begin{bmatrix}
    y_{k-p}^{\tr}& \ldots & y_{k-1}^{\tr}
\end{bmatrix}^\tr,
$
and $G_{p}$ denotes the Markov parameters of the closed-loop Kalman filter: 
\begin{equation} \label{eq:regressor}
G_{p} \triangleq\begin{bmatrix}
C(A-L C)^{p-1} L, & \cdots, & C L
\end{bmatrix} \in \mathbb{R}^{m \times pm},
\end{equation}
where $L$ is the steady-state Kalman gain from the Riccati equation induced by the steady-state form of \eqref{eq: PartialRicRecursion-main}.

\subsection{Cooperative filtering with asynchronous observations and its regret}

\vspace{-2mm}

We aim to exploit the external observation from \eqref{eq:external-source} while accounting for practical asynchrony modeled by a time delay $d$. In this setting, neither the centralized Kalman filter \eqref{eq: globalOptimal} nor the standard regret measure \eqref{regret} is directly applicable. First, the information of $Y_{0:k}^{\mathrm{e}}$ is not fully available at time step $k$; in contrast, we can only obtain delayed observations $Y_{0:k-d}^{\mathrm{e}}$. Second, the regret benchmark in \eqref{regret} might be too weak as it ignores the contribution of external observations. 

To address both limitations, we develop a model-free cooperative filtering algorithm with asynchronous observations. Our cooperative predictor 
integrates local past observations $Y_{0:k}$ and \textit{delayed} external observations $Y^{\mathrm{e}}_{0:k-d}$, i.e., $\tilde{y}_{k+1}=f\left(Y_{0:k},Y^{\mathrm{e}}_{0:k-d}\right)$. This predictor performs better than the local Kalman filter \eqref{eq: localOptimal-main}. We will develop the optimal model-based cooperative predictor with partial time delays and use it as the benchmark in the regret measure. 
Our model-free prediction algorithm will be based on a new autoregressive model that links past delayed outputs to future observations (Section \ref{section:auto-regressive-model}). With technical tools including persistent excitation of structurally asymmetric Gram matrices and uniform logarithmic bounds for stochastic processes, we further establish a \textit{logarithmic regret} for our model-free cooperative filter (Section \ref{section:algorithm&regret}). Before proceeding to our technical results, we make two assumptions in this work.

\vspace{6pt}
\begin{assumption}\label{asp: Ob-main-text}
   $(A, C)$ and $\left(A,C^{\mathrm{e}}\right)$ are detectable and $A$ is marginally stable, i.e., $\rho(A)\leq 1$. 
\end{assumption}
\vspace{6pt}

\begin{assumption}\label{asp: Diagonal}
    The matrix $A-\bar{L}\bar{C}$ is diagonalizable, where $\bar{L}$ is the steady-state Kalman gain~in~\eqref{eq: globalOptimal}. 
\end{assumption}
The first assumption is standard in the design of optimal filters \cite{Andersonoptimal} to guarantee the existence of the steady-state filtering covariance $P$ and gain $L$. Even without this assumption, we can still apply a linear transformation to \eqref{eq: LinearSystem} and then discuss the filtering performance in the detectable subspace \cite{battistelli2014kullback}.  The second assumption does not affect our filter design but simplifies our regret analysis. Assumption \ref{asp: Diagonal} is also used in the analysis of model-free centralized Kalman filter \cite{tsiamis9894660,rashidinejad2020slip,qian2025model}. Removing Assumption \ref{asp: Diagonal} does not pose a fundamental difficulty, but this will make the regret guarantee less clean.

\vspace{-2mm}

\section{Autoregressive modeling with asynchronous observations} \label{section:auto-regressive-model}

\vspace{-1mm}
\subsection{Optimal predictor with asynchronous observations}
\vspace{-1mm}

Recall that $Y_{0:k}$, $Y_{0:k}^{\mathrm{e}}$, and $Y_{0:k}^{\mathrm{c}}$ denote the local observation sequence, external observation sequence, and centralized observation sequence, respectively, from time $0$ to $k$. We here develop the optimal predictor $\hat{y}_{k+1}$ with local observations $Y_{0:k}$ and delayed external observations $Y_{0:k-d}^{\mathrm{e}}$. 
From standard optimal filtering theory \cite{kalmanfilter, Andersonoptimal}, the minimal-mean-square-error (MMSE) predictor $\hat{y}_{k+1}$ is the expectation of $y_{k+1}$ conditioned on past observations $Y_{0:k}$ and $Y_{0:k-d}^{\mathrm{e}}$, i.e.,
\begin{equation} \label{eq:best-output-estimiation}
\hat{y}_{k+1}\!=\!\mathbb{E}\left\{y_{k+1}\!\mid\! Y_{0:k},Y_{0:k-d}^{\mathrm{e}}\right\}\!=\!C\mathbb{E}\left\{x_{k+1}\!\mid\! Y_{0:k},\!Y_{0:k-d}^{\mathrm{e}}\right\},
\end{equation}
where the last equality uses the zero mean of noise $v_{k+1}$. Thus, it suffices to find the best state estimator $\hat{x}_{k+1}$ based on $Y_{0:k}$ and $Y_{0:k-d}^{\mathrm{e}}$. 

Let $\bar{x}_{k+1}\triangleq \mathbb{E}\left\{x_{k+1}\mid Y_{0:k}^{\mathrm{c}}\right\} $, i.e., optimal Kalman state prediction with centralized observations $Y_{k}^{\mathrm{c}}$ from \eqref{eq: globalOptimal}. 
Thanks to the Markov property, we have 
\begin{equation} \label{eq:two-step-state-estimiation}
\begin{aligned}
    \hat{x}_{k+1}\triangleq&\,\mathbb{E}\left\{x_{k+1}\mid Y_{0:k},Y_{0:k-d}^{\mathrm{e}}\right\}\\
    =&\, \mathbb{E}\left\{x_{k+1}\mid y_{k},\dots,y_{k-d+1},Y_{0:k-d}^{\mathrm{c}}\right\}\\
    =&\,\mathbb{E}\left\{x_{k+1}\mid y_{k},\dots,y_{k-d+1},\bar{x}_{k-d+1}\right\}.
\end{aligned}
\end{equation}
The expression in \eqref{eq:two-step-state-estimiation} informs us that we can separate the estimation $\mathbb{E}\left\{x_{k+1}\mid Y_{0:k}, Y_{0:k-d}^{\mathrm{e}}\right\}$ into two parts: 1) at time step $k$, we obtain the local observation $y_k$ and delayed external observation $y_{k-d}^e$, then we can first obtain the centralized optimal Kalman prediction for step $k-d+1$, i.e., $\bar{x}_{k-d+1}$ as we have the collected observations $Y_{0:k-d}^{\mathrm{c}}$; 2) we then refine the state estimation using local observations $y_{l}, l = k-d+1, \ldots, k$ from $\bar{x}_{k-d+1}$ to get $\hat{x}_{k+1}$. The second step can be realized using standard recursion of the local Kalman filter for $d$ times. 
Based on the discussions, we now present the following result.

\vspace{6pt}

\begin{proposition} \label{theorem:optimal-delayed-filter}
    Consider the linear stochastic system \eqref{eq: LinearSystem} in the presence of an external information source \eqref{eq:external-source} with i.i.d. Gaussian noises. The steady-state MMSE predictor $\hat{x}_{k+1}$ based on the local observations $y_0, \ldots, y_k$ and delayed external observations $y_0^{\mathrm{e}}, \ldots,y_{k-d}^{\mathrm{e}}$ is \vspace{-1mm}
    \begin{equation}\label{eq: steadyPredictor}
    \hat{x}_{k+1}=  \Phi_{d} \bar{x}_{k-d+1} + \sum_{l=1}^{d}\Phi_{d-l}L^{(l)}\;y_{k-d+l}, \vspace{-1mm}
\end{equation}
where $\bar{x}_{k-d+1}\triangleq \mathbb{E}\left\{x_{k-d+1}\mid Y_{0:k-d}^{\mathrm{c}}\right\}$ from \eqref{eq: globalOptimal}. In \eqref{eq: steadyPredictor}, we have $ \Phi_{d-l} = \prod_{i=l+1}^d (A-L^{(i)}C), \Phi_{0}=I,  
   L^{(l)} = AP^{(l)}C^\tr\left(CP^{(l)}C^\tr+R\right)^{-1}$ 
with the recursion of $P^{(l)}$ constructed as
\begin{equation}\label{eq: steadyRecursion}
    P^{(l+1)}=\Ric(A,C,Q,R,P^{(l)}), \quad P^{(1)}=\bar{P},
\end{equation}
where $\bar{P}$ is the unique positive semidefinite solution to the discrete-time Riccati equation
$
\;\bar{P}=\Ric(A,\bar{C},Q,\bar{R},\bar{P}).
$
\end{proposition}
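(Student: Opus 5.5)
The plan is to use the two-step decomposition in \eqref{eq:two-step-state-estimiation}: condition first on the centralized data $Y_{0:k-d}^{\mathrm{c}}$, then on the $d$ most recent local outputs. Everything is jointly Gaussian, so all conditional expectations are linear and can be computed by Kalman updates.

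\textbf{Step 1 (initialization).} At time $k-d$ all local and external data up to $k-d$ are available. The centralized filter \eqref{eq: globalOptimal} therefore gives the steady-state prior
\[
\mathbb{E}\{x_{k-d+1}\mid Y^{\mathrm{c}}_{0:k-d}\}=\bar{x}_{k-d+1}.
\]
Its error covariance is $\bar{P}$, the stabilizing solution of $\bar{P}=\Ric(A,\bar{C},Q,\bar{R},\bar{P})$. Existence and uniqueness of $\bar{P}$ follow from Assumption~\ref{asp: Ob-main-text}, since $(A,\bar{C})$ is detectable whenever $(A,C)$ is. In steady state ($k\to\infty$) the time-varying covariance $\bar{P}_{k-d+1}$ converges to $\bar{P}$. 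This fixes $P^{(1)}=\bar{P}$.

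\textbf{Step 2 (local updates).} For $l=1,\dots,d$, define $\hat{x}^{(l)}$ as the estimate of $x_{k-d+l}$ given $Y^{\mathrm{c}}_{0:k-d}$ and $y_{k-d+1},\dots,y_{k-d+l-1}$, with error covariance $P^{(l)}$. Start from $\hat{x}^{(1)}=\bar{x}_{k-d+1}$ and $P^{(1)}=\bar{P}$.

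Each new observation $y_{k-d+l}=Cx_{k-d+l}+v_{k-d+l}$ is only a local measurement. Its noise $v_{k-d+l}$ is independent of all earlier data, and $\omega_{k-d+l}$ is independent of everything observed so far. The standard one-step Kalman predictor recursion (conditional Gaussian formula followed by propagation through $A$) then gives
\[
\hat{x}^{(l+1)}=(A-L^{(l)}C)\hat{x}^{(l)}+L^{(l)}y_{k-d+l},\qquad P^{(l+1)}=\Ric(A,C,Q,R,P^{(l)}),
\]
with $L^{(l)}=AP^{(l)}C^\tr(CP^{(l)}C^\tr+R)^{-1}$. The inverse exists because $R\succ0$.

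To be careful at this point, I would verify that the innovation $y_{k-d+l}-C\hat{x}^{(l)}$ has covariance $CP^{(l)}C^\tr+R$ and cross-covariance $AP^{(l)}C^\tr$ with $x_{k-d+l+1}$. This requires independence of the estimation error from $v_{k-d+l}$ and $\omega_{k-d+l}$, which holds since the conditioning set contains only earlier noises.

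\textbf{Step 3 (unrolling).} Unroll the affine recursion $d$ times to obtain $\hat{x}_{k+1}=\hat{x}^{(d+1)}$:
\[
\hat{x}_{k+1}=\Phi_d\,\bar{x}_{k-d+1}+\sum_{l=1}^{d}\Phi_{d-l}L^{(l)}y_{k-d+l},
\]
with $\Phi_{d-l}=\prod_{i=l+1}^{d}(A-L^{(i)}C)$, where later indices multiply on the left, and $\Phi_0=I$. This is \eqref{eq: steadyPredictor}. A short induction on $d$ confirms the product ordering.

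The main obstacle is justifying the second equality in \eqref{eq:two-step-state-estimiation}. One must show that conditioning on $Y^{\mathrm{c}}_{0:k-d}$ can be replaced by conditioning on $\bar{x}_{k-d+1}$ without information loss, together with covariance $\bar{P}$, in the steady-state sense. I would handle this by working with the full conditional Gaussian law of $x_{k-d+1}$ given $Y^{\mathrm{c}}_{0:k-d}$, which is $\mathcal N(\bar{x}_{k-d+1},\bar{P}_{k-d+1})$. Future local outputs depend on the past only through $x_{k-d+1}$ plus independent noise, so the conditioning reduces as claimed. Finally, let $k\to\infty$ so that $\bar{P}_{k-d+1}\to\bar{P}$, using convergence of the Riccati recursion under detectability.
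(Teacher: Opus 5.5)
Your proposal is correct and follows essentially the same route as the paper: take the centralized Kalman prior $\bar{x}_{k-d+1}$ with covariance $\bar{P}$, run $d$ local Kalman predictor steps with $P^{(1)}=\bar{P}$ and $P^{(l+1)}=\Ric(A,C,Q,R,P^{(l)})$, unroll the recursion, and then pass to steady state using $\bar{P}_{k}\to\bar{P}$. Your justification that conditioning on $Y^{\mathrm{c}}_{0:k-d}$ reduces to conditioning on $\bar{x}_{k-d+1}$ is actually more explicit than the paper's. You work with the conditional Gaussian law $\mathcal{N}(\bar{x}_{k-d+1},\bar{P}_{k-d+1})$ and the independence of future noises, whereas the paper simply appeals to the Markov property.
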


Proposition \ref{theorem:optimal-delayed-filter} presents the optimal model-based predictor under partially delayed information. This result can be viewed as a special case of the more general setting in \cite[Theorem 3.1]{shidistributed5200489}, though the predictor in our case, given by \eqref{eq: steadyPredictor}, is greatly simplified. 
{For completeness, we provide a self-contained and simplified proof in Appendix \ref{Appen: timevarying}.} From \eqref{eq:best-output-estimiation}, it is clear that the optimal local prediction is $\hat{y}_{k+1} = C \hat{x}_{k+1}$ with $\hat{x}_{k+1}$ from \eqref{eq: steadyPredictor}. However, similar to the classical Kalman filter, this predictor \eqref{eq: steadyPredictor} is model-based.

\vspace{-1mm}

\subsection{Auto-regressive model and orthogonality of the innovation process}

\vspace{-1mm}

To develop a model-free prediction algorithm, we here establish an autoregressive model 
that link the future $y_{k+1}$ with the past observations $Y_{k}\bigcup Y_{k-d}^{\mathrm{e}}$. 
With the optimal delayed estimator $\hat{x}_{k+1}$ in  \eqref{eq: steadyPredictor}, we can write the local output at $k+1$ as
\begin{equation} \label{eq:innovation-form}
y_{k+1}=C\hat{x}_{k+1}+r_{k+1},
\end{equation}
where $r_{k+1}$ denotes the optimal prediction error (also called {\it innovation}). This innovation $r_{k+1}= y_{k+1} - C\hat{x}_{k+1}$ represents the amount of new information $y_{k+1}$ brings beyond what is already known from the optimal estimation $\hat{x}_{k+1}$. Substituting the expression of $\hat{x}_{k+1}$ in \eqref{eq: steadyPredictor}, we have
$$
y_{k+1}=C\Phi_{d} \bar{x}_{k-d+1} + C\sum_{l=1}^{d}\Phi_{d-l}L^{(l)}y_{k-d+l}+r_{k+1}. 
$$
Note that the steady-state Kalman filter satisfies 
$$
\bar{x}_{k-d+1}\!\!=\!\!(A-\bar{L}\bar{C})^p \bar{x}_{k-d-p+1}+\!\sum_{l=1}^p (A-\bar{L}\bar{C})^{l-1} \bar{L}y_{k-d+1-l}^{\mathrm{c}}, 
$$ where $\bar{L}$~is~the centralized steady-state Kalman gain in~\eqref{eq: globalOptimal}. 
Then, the innovation form \eqref{eq:innovation-form} leads to an auto-regressive~model 
\begin{equation}\label{eq: regressionModel}
    y_{k+1}\!\!=\!\!G_{p+d}Z_{k+1,p+d}+\!C \Phi_d(A\!-\!\bar{L} \bar{C})^{p} \bar{x}_{k-d-p}\!+r_{k+1} 
\end{equation}
where $Z_{k+1,p+d}\triangleq\begin{bmatrix}
    y_{k}^{\tr}&\cdots&y_{k-d+1}^{\tr}&y_{k-d}^{c\tr}\cdots&y_{k-d-p+1}^{c\tr}
\end{bmatrix}^\tr$ contains the past local observations and delayed external observations. The coefficient matrix $G_{p+d}$ is given as 
$
G_{p+d}\triangleq\begin{bmatrix}
     G_{d}^{(1)},G_p^{(2)}
\end{bmatrix},$ with $G_{d}^{(1)}\!\triangleq\!\begin{bmatrix}
    C L^{(d)}&\! \cdots \!& C \Phi_{d-1}L^{(1)}
\end{bmatrix}$ denotes the weights for the undelayed local observations and $G_p^{(2)}\triangleq \begin{bmatrix}
    C \Phi_d \bar{L} &\cdots &C \Phi_d(A-\bar{L} \bar{C})^{p-1} \bar{L}
\end{bmatrix}$ denotes the weights of the available global observations.
At this stage, we observe that the autoregressive model \eqref{eq: regressionModel} has a form similar to the standard Kalman filter \eqref{regression}. The main difference is that the auto-regressive model \eqref{eq: regressionModel} incorporates both past local observations and delayed external observations. 
The properties of the autoregressive model \eqref{eq: regressionModel} are summarized below. 

\vspace{6pt}
 \begin{theorem}\label{thm: property}
 Consider the auto-regressive model \eqref{eq: regressionModel} for the linear stochastic system \eqref{eq: LinearSystem} with an external information source \eqref{eq:external-source}. The following statements hold: 
 \begin{enumerate}
     \item The innovation process $r_{k}$ is orthogonal, i.e., 
 $$
 \mathbb{E}\left\{r_{k}r_{l}^\tr\right\}= 
  \begin{cases}
      CP^{(d+1)}C^\tr+R  & \text{if} \; k = l, \\
     0  & \text{otherwise}. 
 \end{cases}
 $$
 \vspace{-5pt}
 \item The steady-state variance $\mathbb{E}\left\{r_kr_k^\tr\right\}$ exponentially converges from $C\bar{P}C^\tr+R$ to $CPC^\tr+R$ as $d\to \infty$.
 \vspace{2pt}
 \item The state-transition matrix is exponentially stable, i.e., $\|\Phi_d\|_2^2\leq \tau\rho_0^d$, where 
    $\tau = \frac{\lambda_{\max}(P^{(d+1)})}{\lambda_{\min} (P^{(1)})} \,  \text{and} \, \rho_0=1-\frac{\lambda_{\min} (Q)}{\lambda_{\max}(P^{(d+1)})} \in (0,1).$ 
 \end{enumerate}
 \end{theorem}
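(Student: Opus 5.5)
For item 1, I would use the conditional-expectation characterization behind Proposition~\ref{theorem:optimal-delayed-filter}. Let $\mathcal{G}_k \triangleq \sigma(Y_{0:k}, Y^{\mathrm{e}}_{0:k-d})$. The first thing to note is that these $\sigma$-algebras are nested: $\mathcal{G}_{k}\subseteq\mathcal{G}_{k+1}$, since moving from $k$ to $k+1$ adds $y_{k+1}$ and $y^{\mathrm{e}}_{k+1-d}$ and removes nothing. Equation \eqref{eq:best-output-estimiation} gives $r_{k+1}=y_{k+1}-\mathbb{E}\{y_{k+1}\mid\mathcal{G}_k\}$. Now take $l<k$. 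Then $r_l$ is $\mathcal{G}_{l}$-measurable, and hence $\mathcal{G}_{k-1}$-measurable, because both $y_l$ and $\hat y_l$ are. The tower property then gives $\mathbb{E}\{r_k r_l^\tr\}=\mathbb{E}\{\mathbb{E}\{r_k\mid\mathcal{G}_{k-1}\}r_l^\tr\}=0$. The delay-induced asymmetry does not break this argument, because orthogonality only uses monotonicity of the filtration. That monotonicity is the point I would stress, since it is exactly what fails for non-nested information patterns.

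For the diagonal term I would use the two-stage structure. Conditioned on $Y^{\mathrm{c}}_{0:k-d}$, the state $x_{k-d+1}$ has error covariance $\bar P=P^{(1)}$ in steady state. The subsequent $d$ local updates are exactly the Riccati recursion \eqref{eq: steadyRecursion}, so the prediction error covariance of $\hat x_{k+1}$ is $P^{(d+1)}$. Since $v_{k+1}$ is independent of everything else, $\mathbb{E}\{r_{k+1}r_{k+1}^\tr\}=CP^{(d+1)}C^\tr+R$. I would verify the covariance propagation inductively in $l$, using the standard one-step Kalman update from the estimate with covariance $P^{(l)}$.

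For item 2, note that $P^{(l)}$ is the local Riccati iteration started at $\bar P$. Under Assumption~\ref{asp: Ob-main-text} with $Q\succ0$, the local Riccati map converges exponentially to its stabilizing fixed point $P$ from any PSD initial condition. I would cite the standard result (Anderson--Moore) on contraction of the Riccati operator. Alternatively I would write $P^{(l+1)}-P=(A-LC)(P^{(l)}-P)(A-L^{(l)}C)^\tr$ and use stability of $A-LC$ together with boundedness of $L^{(l)}$. This gives $\|P^{(d+1)}-P\|\le c\,\gamma^d$. Monotonicity also holds: $\bar P\preceq P^{(l)}\preceq P$ by the comparison noted before Example~\ref{example:comparison}.

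Item 3 is the main obstacle, because $\Phi_d$ is a product of time-varying closed-loop matrices. I would use $V_l(x)=x^\tr (P^{(l)})^{-1}x$ as a Lyapunov function. The Riccati recursion can be rewritten as $P^{(l+1)}=(A-L^{(l)}C)P^{(l)}(A-L^{(l)}C)^\tr+L^{(l)}RL^{(l)\tr}+Q\succeq (A-L^{(l)}C)P^{(l)}(A-L^{(l)}C)^\tr+Q$. Using $Q\succeq \frac{\lambda_{\min}(Q)}{\lambda_{\max}(P^{(l+1)})}P^{(l+1)}$, this yields $(A-L^{(l)}C)P^{(l)}(A-L^{(l)}C)^\tr\preceq \rho_l P^{(l+1)}$ with $\rho_l=1-\lambda_{\min}(Q)/\lambda_{\max}(P^{(l+1)})$. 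By monotonicity, $\lambda_{\max}(P^{(l+1)})\le\lambda_{\max}(P^{(d+1)})$ for $l\le d$, so $\rho_l\le\rho_0$. Chaining these inequalities gives $\Phi_d P^{(1)}\Phi_d^\tr\preceq\rho_0^d P^{(d+1)}$. Taking norms, $\lambda_{\min}(P^{(1)})\|\Phi_d\|_2^2\le \rho_0^d\lambda_{\max}(P^{(d+1)})$, which is the claimed bound with $\tau$ as stated. The delicate points are the index ordering in $\Phi_d$ relative to the recursion, and the positivity of $\lambda_{\min}(P^{(1)})$. The latter holds since $\bar P\succeq Q\succ0$.
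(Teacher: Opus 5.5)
Your argument is correct in outline. For item~1 it takes a genuinely different route from the paper.

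\textbf{Item 1.} The paper proves orthogonality by direct computation. It writes the delayed error $e_k=x_k-\hat x_k$ and the centralized error $\bar e_k$ as explicit linear recursions, treats the lags $k-l>d$ and $0<k-l<d$ separately, and cancels $C\mathbb{E}\{e_ke_l^\tr\}C^\tr+C\mathbb{E}\{e_kv_l^\tr\}$ using the mixed Riccati identity \eqref{eq: crossClaim} and its analogue for the $P^{(i)}$. Along the way it obtains, e.g., $\mathbb{E}\{\bar e_le_l^\tr\}=\bar P$.

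You instead observe that $\mathcal{G}_k=\sigma(Y_{0:k},Y^{\mathrm{e}}_{0:k-d})$ is increasing in $k$ and apply the tower property. This has several advantages:
\begin{itemize}
\item It handles all lags uniformly, including $k-l=d$, which the paper's case split does not list.
\item It shows structurally that the delay asymmetry is harmless, because only nestedness of the information sets matters.
\item It extends unchanged to several sources with different delays, or to non-Gaussian noise with conditional expectations replaced by projections onto nested linear spans.
\end{itemize}
What the paper's computation buys in exchange is explicit cross-covariance identities linking the centralized and delayed error processes.

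Your route rests on Proposition~\ref{theorem:optimal-delayed-filter} to identify the formula-defined $\hat y_k$ with $\mathbb{E}\{y_k\mid\mathcal{G}_{k-1}\}$. That identification is exact only in the stationary regime of the centralized filter. The paper relies on the same stationarity through $\mathbb{E}\{\bar e_l\bar e_l^\tr\}=\bar P$, so nothing is lost, but you should state it.

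\textbf{Items 2 and 3.} These follow the paper: Anderson--Moore for item~2, and the chain $\Phi_dP^{(1)}\Phi_d^\tr\preceq\rho_0^dP^{(d+1)}$ for item~3. One step is unjustified as written.

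To get $\rho_l\le\rho_0$ you need $P^{(l+1)}\preceq P^{(d+1)}$ for all $l\le d$, i.e.\ that the sequence $P^{(l)}$ is nondecreasing. The sandwich $\bar P\preceq P^{(l)}\preceq P$ does not imply this. Moreover, the comparison $\bar P\preceq P$ you cite yields only the upper half of the sandwich, via order-preservation of the Riccati map. The paper fills the gap with the information form. Since $\bar C^\tr\bar R^{-1}\bar C\succeq C^\tr R^{-1}C$,
$P^{(2)}=A((P^{(1)})^{-1}+C^\tr R^{-1}C)^{-1}A^\tr+Q\succeq A(\bar P^{-1}+\bar C^\tr\bar R^{-1}\bar C)^{-1}A^\tr+Q=\bar P=P^{(1)}$.
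Order-preservation of $X\mapsto\Ric(A,C,Q,R,X)$ then gives $P^{(l)}\preceq P^{(l+1)}$ by induction.

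Your alternative argument for item~2 has a similar hole. Bounded gains $L^{(l)}$ do not bound the product $\Phi_d$ uniformly in $d$. Iterating your difference identity gives $P^{(d+1)}-P=(A-LC)^d(\bar P-P)\Phi_d^\tr$, so you actually need $\sup_d\|\Phi_d\|_2<\infty$. Your item~3 supplies this once monotonicity is in place: $P^{(d+1)}\preceq P$ gives $\tau\le\lambda_{\max}(P)/\lambda_{\min}(\bar P)$ uniformly in $d$.
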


\begin{proof}
\textbf{Proof of Part 1: }\label{Appen: pfOrthogonal}
We first make the following notations
$
e_k\triangleq x_k-\hat{x}_k,\; \bar{e}_k\triangleq x_k-\bar{x}_k,
$ to denote the prediction error of the optimal predictors with partially delayed information and complete information, respectively.
We will divide this part into three steps. Without losing generality, we assume $k\ge l$.

\vspace{-10pt}
{\bf Step 1:} The case with $k=l$ can be directly verified with
the expression
\vspace{-10pt}
\begin{equation}\label{eq: totalError}
    e_k\!\!=\!\Phi_d\bar{e}_{k-d}+\!\sum_{i=0}^{d-1}\!\Phi_{d-i-1}w_{k-d+i}-\!\!\sum_{i=0}^{d-1}\!\!\Phi_{d-i-1}\tilde{v}_{k-d+i}
\end{equation}
with $\tilde{v}_{k-d+i}=L^{(i+1)}v_{k-d+i}$. Then, with the mutual unrelatedness among $w_{k}$ and $v_k$, the covariance equation can be verified directly through algebraic manipulation.

\vspace{3pt}
{\bf Step 2:} For the case with $k-l> d$, we first have
$
\mathbb{E}\left\{r_{k}r_{l}^\tr\right\}=C\mathbb{E}\left\{e_ke_l^\tr\right\}C^\tr+C\mathbb{E}\left\{e_kv_{l}^\tr\right\}.
$
For the term $C\mathbb{E}\left\{e_kv_{l}^\tr\right\}$, denote $\bar{A}\triangleq A-\bar{L}\bar{C}$ as the centralized closed-loop error matrix and $s\triangleq k-l-d$ as the lag parameter, we first have
\vspace{-5pt}
\begin{equation}\label{eq: rollout_error}
    \begin{aligned}
    \bar{e}_{k-d}\!=\!\bar{A}^{s}\bar{e}_{l}+\!\sum_{i=1}^{s}\bar{A}^{s-i}w_{l+i-1}-\!\sum_{i=1}^{s}\bar{A}^{s-i}\bar{L}v^{c}_{l+i-1},
\end{aligned}
\vspace{-5pt}
\end{equation}
where $v^{c}_{l}=\left[v_l^\tr,v_l^{\mathrm{e}\tr}\right]^\tr$ is the collection of noise vectors at time step $l$.
Together with \eqref{eq: totalError}, we have
$
C\mathbb{E}\left\{e_kv_{l}^\tr\right\}=-\Phi_d\bar{A}^{s-1}\bar{L}\tilde{R},
$
where $\tilde{R}=\begin{bmatrix}
    R,O_{m\times \tilde{m}}
\end{bmatrix}^\tr\in\mathbb{R}^{(m+\tilde{m})\times m}$ denotes the expectation $\mathbb{E}\left\{v_{l}^{c}v_l^\tr\right\}$.
For the term $C\mathbb{E}\left\{e_ke_l^\tr\right\}C^\tr$, we first have
$
\mathbb{E}\left\{e_ke_l^\tr\right\}=\Phi_d\bar{A}^{s}\mathbb{E}\left\{\bar{e}_le_l^\tr\right\}.
$
While for $\mathbb{E}\left\{\bar{e}_le_l^\tr\right\} $, due to the structural difference between the two error terms, we need first to roll out the error $\bar{e_l}$ and $e_l$ for $d$-step, with some basic calculation, we have 
\vspace{-5pt}
\[
\begin{aligned}
    \mathbb{E}\left\{\bar{e}_le_l^\tr\right\}=&\bar{A}^d \bar{P}\Phi_d^\tr+\sum_{i=0}^{d-1}\bar{A}^{d-i-1}Q\Phi_{d-i-1}^\tr\\&+\sum_{i=0}^{d-1}\bar{A}^{d-i-1}\bar{L}\tilde{R}L^{(i+1)^\tr}\Phi_{d-i-1}^\tr.
\end{aligned}
\vspace{-5pt}
\]
This expression indicates that the structural difference between the two error dynamics makes the cross-covariance more difficult to analyze.
To simplify the above expression, we need the following statement:
for any $i=0,\dots,d-1$, there is 
\vspace{-5pt}
\begin{equation}\label{eq: crossClaim}
    \bar{P}=\bar{A}\bar{P}(A-L^{(i+1)}C)^\tr+Q+\bar{L}\tilde{R}L^{(i+1)\tr}.
    \vspace{-5pt}
\end{equation}
To show this, we substitute $\bar{P}=\bar{A}\bar{P}\bar{A}^\tr+Q+\bar{L}\bar{R}\bar{L}^\tr$ into \eqref{eq: crossClaim} then obtain
\[
\begin{aligned}
    &\bar{P}-\left(\bar{A}\bar{P}(A-L^{(i+1)}C)^\tr+Q+\bar{L}\tilde{R}L^{(i+1)\tr}\right)\\
    =&\bar{A}\bar{P}\bar{C}^\tr L^{(i+1)^\tr}-\bar{L}\tilde{R} L^{(i+1)\tr}-\bar{A}\bar{P}\bar{C}^\tr \bar{L}^\tr+\bar{L}\bar{R}\bar{L}^\tr.
\end{aligned}
\]

With the expression $\bar{L}=A\bar{P}\bar{C}^\tr(\bar{C}\bar{P}\bar{C}^\tr+\bar{R})^{-1}$, we have
\[
\bar{L}\bar{R}\bar{L}^\tr\!-(A\!-\bar{L}\bar{C})\bar{P}\bar{C}^\tr \bar{L}^\tr\!=\!\bar{L}(\bar{C}\bar{P}\bar{C}^\tr\!+\!\bar{R})\bar{L}^\tr\!-\!A\bar{P}\bar{C}^\tr \bar{L}^\tr\!=\!0
\]
and
$
    \bar{A}\bar{P}C^\tr L^{(i+1)^\tr}-\bar{L}\tilde{R} L^{(i+1)\tr}
    =A\bar{P}C^\tr L^{(i+1)^\tr}-\bar{L}(\bar{C}\bar{P}C^\tr+\tilde{R})L^{(i+1)^\tr}=0,
$
where the last equality holds due to 
\vspace{-5pt}
\[
(\bar{C}\bar{P}\bar{C}^\tr+\bar{R})^{-1}(\bar{C}\bar{P}C^\tr+\tilde{R})=\begin{bmatrix}
    I\\0
\end{bmatrix}_{(m+\tilde{m})\times m}.
\vspace{-5pt}\]
The claim \eqref{eq: crossClaim} is thus proved. Then with some calculation we can first verify $\mathbb{E}\!\left\{\bar{e}_le_l^\tr\right\}=\bar{P}$ and
\vspace{-5pt}
then  
\[
\begin{aligned}
    \mathbb{E}\left\{r_{k}r_{l}^\tr\right\}=&C\mathbb{E}\left\{e_ke_l^\tr\right\}C^\tr+C\mathbb{E}\left\{e_kv_{l}^\tr\right\}\\
    =&C\Phi_d\bar{A}^{s}\bar{P}C^\tr-C\Phi_d\bar{A}^{s-1}\bar{L}\tilde{R}
    =0.
\end{aligned}
\]

\vspace{-15pt}
{\bf Step 3:} For the case with $0<k-l<d$, we denote $s=d+l-k$ and the following auxiliary variable 
\[
\tilde{x}_{k|s}\!=\!\prod_{i=1}^{s}(A-L^{(i)}C)\bar{x}_k+\sum_{i=1}^{s}(\prod_{l=i+1}^{s}\!(A-L^{(i)}C))L^{(i)}y_{k+i-1},
\]
which indicates the optimal predictor $s$-step forward from time step $k$. We can verify that $\tilde{x}_{k-d|d}=\hat{x}_k$. Then denote
$e_{k|s}=x_{k+s}-\tilde{x}_{k|s}$ and $s=d-(k-l)$, we have
\vspace{-5pt}
\[
e_k\!=\!\Phi_{d-s}e_{k-d|s}\!+\!\sum_{i=s}^{d-1}\!\Phi_{d-i-1}w_{k-d+i}-\!\!\sum_{i=s}^{d-1}\!\Phi_{d-i-1}\tilde{v}_{k-d+i},
\vspace{-5pt}
\]
with $\tilde{v}_{k-d+i}=L^{(i+1)}v_{k-d+i}$.
We can verify that 
\vspace{-5pt}
\[
\begin{aligned}
    \mathbb{E}\left\{r_{k}r_{l}^\tr\right\}=&C\mathbb{E}\left\{e_ke_l^\tr\right\}C^\tr+C\mathbb{E}\left\{e_kv_{l}^\tr\right\}\\
    =&C\Phi_{d-s}\mathbb{E}\left\{e_{k-d|s}e_l^\tr\right\}C^\tr-C\Phi_{d-s-1}L^{(s+1)}R\\
    =&0.
\end{aligned}
\vspace{-5pt}
\]
For the last equality, it suffices to verify that the following two statements hold. The first statement is
$\mathbb{E}\left\{e_{k-d|s}e_l^\tr\right\}=P^{(s+1)},$ and the second statement is
\[(A-L^{(s+1)}C)P^{(s+1)}C^\tr-L^{(s+1)}R=0.\] 
The above two facts can be directly verified with a similar technique proposed in {\bf Step 2} together with the following result
\[P^{(i+1)}\!=\!(A\!-\!L^{(i)}C)P^{(i)}(A\!-\!L^{(j)}C)^\tr\!+\!Q\!+\!L^{(i)}RL^{(j)\tr}\]
holds for all $1\leq i\leq j\leq d$, which can also be directly verified with a similar procedure in {\bf step 2}.

\textbf{Proof of Part 2:} 
For this part, we have that the recursion \eqref{eq: steadyRecursion} is equivalent to Riccati recursion with parameter $(A, C, Q, R)$. With the result in \cite[Section 4]{Andersonoptimal}, the term $P^{(d+1)}$ will exponentially converge to the solution $P$ of the following Riccati equation $P=\Ric(A,C,Q,R,P)$.
Moreover, if we denote the $A_{cl}=A-LC$, where $L=APC\left(CPC^\tr+R\right)^{-1}$, then the matrix $A_{cl}$ is Schur stable, and
$\left\|P^{(d+1)}-P\right\|\leq M\rho(A_{cl})^d$,
where $M$ is a constant related to system parameters.
This property illustrates that the steady-state error covariance of the optimal predictor $\hat{y}_{k}$ will exponentially decay from $CP^{(d+1)}C^\tr+R$ to $CPC^\tr+R$, which quantitatively describes the information loss due to time delay. 

\textbf{Proof of Part 3: } 
We first show that the matrix sequence $P^{(1)},\cdots,P^{(d+1)}$ is monotonically increasing with $d$.
With \cite[Theorem 1]{ren2023effects}, we have $\bar{P}\leq P$, where $\bar{P}$ and $P$ are solutions to the following algebraic Riccati equations
$\bar{P}=\Ric(A,\bar{C},Q,\bar{R},\bar{P})$,
and
$P=\Ric(A,C,Q,R,P)$.
Note that with Woodbury Equality, we can rewrite the first Riccati equation as
\vspace{-5pt}
\[
\bar{P}=A\left(\bar{P}^{-1}+\bar{C}^\tr\bar{R}^{-1}\bar{C}\right)^{-1}A^\tr+Q,
\vspace{-5pt}\]
and the recursion of $P^{(l)}$ as
\vspace{-5pt}
\[
P^{(l+1)}=A\left(\big(P^{(l)}\big)^{-1}+C^\tr R^{-1}C\right)^{-1}A^\tr+Q.
\vspace{-5pt}\]
With $\bar{C}^\tr\bar{R}^{-1}\bar{C}=C^\tr R^{-1}C+C^{e\tr}\big(R^{e}\big)^{-1}C^e\ge C^\tr R^{-1} C$, we have
$P^{(2)}\ge P^{(1)}$. With mathematical induction, we further have
\[
\begin{aligned}
    P^{(l+1)}=&A\left(\big(P^{(l)}\big)^{-1}+C^\tr R^{-1}C\right)^{-1}A^\tr+Q\\
    \ge& A\left(\big(P^{(l-1)}\big)^{-1}+C^\tr R^{-1}C\right)^{-1}A^\tr+Q=P^{(l)}.
\end{aligned}
\]
Then we rewrite the recursion \eqref{eq: steadyRecursion} as
\[
P^{(i+1)}=\big(A-L^{(i)} C\big) P^{(i)}\big(A-L^{(i)} C\big)^{\tr}+Q+L^{(i)} R L^{(i) \tr},
\]
and we can obtain
\[
\big(A-L^{(d)} C\big) P^{(d)}\big(A-L^{(d)} C\big)^{\tr}\!\!\leq\! P^{^{(d+1)}}\!\!-Q\!\leq\! (1-\rho_0)P^{(d+1)}.
\]
With a similar procedure, we can also obtain 
\[
\big(A-L^{(l)} C\big) P^{(l)}\big(A-L^{(l)} C\big)^{\tr}\!\!\leq\! (1-\rho_0)P^{(l+1)},\;\forall 1\leq l\leq d.
\]
Therefore, we have the following inequality for state transition matrices
\[
\Phi_d P^{(1)}\Phi_d^\tr\leq (1-\rho_0)\Phi_{d-1}P^{(2)}\Phi_{d-1}^\tr\leq (1-\rho_0)^d P^{(d+1)}.
\]   
\end{proof}

\vspace{-5pt}
\begin{remark}
    The main difficulty of the proof lies in the asynchronous recursion of innovation $r_k$ induced by the time delay, i.e., the statement \eqref{eq: crossClaim}. Unlike the centralized Kalman filter, the innovation $r_k$ in our setup cannot be directly formulated in a recursive form in terms of its previous $r_{k-1}$. Our proof relies on a high-level insight into $\mathbb{E}\left\{r_k\tilde{r}_k^\tr\right\}=\mathbb{E}\left\{r_kr_k^\tr\right\}$, where $\tilde{r}_k$ is a suboptimal innovation corresponding to any other suboptimal predictor. We note that the orthogonality of the innovation sequence $r_{k}$ is critical to guarantee the logarithmic regret of our next model-free online prediction algorithm against the model-based optimal predictor.
\end{remark}
\vspace{-5pt}

\vspace{-1mm} 
\section{Model-free Cooperative Filtering  with Asynchronous Observations}
\label{section:algorithm&regret}

\vspace{-1mm}

In this section, we develop a least-squares-based model-free cooperative filtering algorithm. We further establish a logarithmic regret for our model-free algorithm. 

\vspace{-1mm}
\subsection{Model-free algorithm design} 
\vspace{-1mm}

It is clear that the auto-regressive model \eqref{eq: regressionModel} links the local future observations $y_{k+1}$ and partially delayed past observations $Y_{k}\bigcup Y_{k-d}^{\mathrm{e}}$. 
As guaranteed in Theorem~\ref{thm: property},  both $\Phi_d$ and $(A-\bar{L}\bar{C})^p$ have an exponential decay property. This exponential decay allows us to use a relatively short backward horizon $p$ to mitigate the bias error term induced by $\bar{x}_{k-d-p}$ in \eqref{eq: regressionModel} that may grow polynomially fast for marginally stable systems. 
Similar to \cite{tsiamis9894660, hazan2017learning,qian2025model}, we learn a least-squares estimate $\tilde{G}_{k,p}$ by regressing~$y_t$ to past outputs $Z_{t,p+d}, \;t \leq k$. 
The corresponding least-squares problem is 
\vspace{-5pt}
\begin{equation} \label{eq: leastsquare}
     \min_{G \in \mathbb{R}^{m \times (\bar{m}p+md)}} \,\,  \sum_{t=p+d}^k \left\|y_{t}-G{Z}_{t,p+d}\right\|_F^2+\lambda\left\| G\right\|_F^2, 
     \vspace{-5pt}
\end{equation}
where the regularization term $\lambda\left\| G\right\|_F^2$ is added with  a regularization parameter $\lambda >0$. At each time step $k$, by solving the ridge regression \eqref{eq: leastsquare}, we can obtain a closed-form of $G$ as 
\vspace{-5pt}
\begin{equation} \label{eq:regression-update}
   \tilde{G}_{k,p+d} = \sum_{t = p+d}^k y_t Z_{t,p+d}^\tr V_{k,p+d}^{-1}, 
   \vspace{-5pt}
\end{equation}
where  
  $  V_{k,p+d}\triangleq\lambda I + \sum_{t=p+d}^k Z_{t,p+d}Z_{t,p+d}^\tr$ 
is called a Gram matrix, which contains the information from all past observations $Z_{k,p+d}$.
 We then predict the next observation by  
 \vspace{-5pt}
\begin{equation} \label{eq:OPF-prediction}
    \tilde{y}_{k+1} = \tilde{G}_{k,p+d}Z_{{k+1},p+d}.
    \vspace{-5pt}
\end{equation}
For nonexplosive systems $\rho(A) \leq 1$, the term $\bar{x}_{k-d-p}$ in \eqref{eq: regressionModel} retains the state from previous time steps, potentially growing at a polynomial rate. A persistent bias error could result in linear regret. Fortunately, Theorem \ref{thm: property} and classical Kalman filtering theory provide two key insights: 1) the innovation $r_{k}$ is zero mean and mutually uncorrelated, and 2) $\Phi_d(A-\bar{L}\bar{C})^{p}\leq c_1\rho_0^d\rho(A-\bar{L}\bar{C})^p$, with $\rho_0<1$ and $\rho(A - \bar{L}\bar{C}) < 1$. The second property effectively controls the accumulation of bias errors $\Phi_d(A-\bar{L}\bar{C})^{p}\bar{x}_{k-p-d}$ effectively. Specifically, we can gradually increase the past horizon $p$, following the strategy in \cite{tsiamis9894660}, motivated by the standard ``doubling trick'' \cite{cesa2006prediction}. In particular, the entire time horizon is partitioned into multiple epochs, with each twice as long as the previous one. Within each epoch, the value of backward horizon length $p$ remains constant. As $\rho(A - \bar{L}\bar{C})^p$ decreases exponentially with $p$, it suffices to increase $p$ slowly as $p = \mathcal{O}(\log T)$, where $T$ is the length of each epoch.

\begin{algorithm}[t]
   \caption{Online Cooperative Filtering with Asynchronous Observations (\texttt{co-Filter})}\label{algPrediction}
\begin{algorithmic}

   \STATE {\bfseries Input:} parameter $\beta, \lambda, T_{\text {init }}, N_E$ 
   
   \COMMENT{\textit{Warm Up}:}
   \FOR{$k=0$
   {\bfseries to }$T_{\text{init}}$}  
   \STATE Observe $y_{k}$, Receive $y_{k-d}^{\mathrm{e}}$;
   \ENDFOR
   
   \COMMENT{\textit{Recursive Online Prediction:}}
    \FOR{$l=1$ {\bfseries to} $N_E$}
    \STATE
    Initialize 
      $T_l\!=\!2^{l-1} T_{\text {init }}+1, p\!=\!\beta \log T_l,$ 
    \STATE
    Compute $V_{T_l-1,p+d}$ and  $\tilde{G}_{T_l-1,p+d}$; 
    \FOR{$k=T_l$ \textbf{to} $2 T_l-2$}
    \STATE Predict $\tilde{y}_{k}=\tilde{G}_{k-1,p+d} Z_{k,p+d}$; 
    \STATE
    Observe $y_{k}$, Receive $y_{k-d}^{\mathrm{e}}$;
    \STATE
    Update $V_{k,p+d}$ and $\tilde{G}_{k,p+d}$ as \eqref{eq:predictor-update}. 
    \ENDFOR
    \ENDFOR   
\end{algorithmic}
\end{algorithm}

The pseudo-code for online cooperative filtering with asynchronous observations (\texttt{co-Filter}) is presented in Algorithm~\ref{algPrediction}. It has two main phases: \textit{Warm-Up} and \textit{Online Prediction}. In the warm-up phase, we collect a trajectory of observations of length $T_{\mathrm{init}}$, i.e., $Y_{T_{\text{init}}}\bigcup Y_{T_{\text{init}}-d}^{\mathrm{e}}$. In the phase of online prediction, we first initialize the parameter $p$, $V_{T_l-1,p+d}$, and $G_{T_l-1,p+d}$ at the beginning of each epoch $T_l$, where $l$ is the index of the epoch number.  At each time step $k$, we update the prediction $\tilde{y}_k$  using \eqref{eq:OPF-prediction} and then observe the new observation ${y}_k, y_{k-d}^{\mathrm{e}}$ simultaneously. Then the predictor $G_{k,p+d}$ can be recursively updated with
\begin{minipage}{0.48\textwidth}
\begin{align}
V_{k,p+d}&=V_{k-1,p+d}+Z_{k, p+d} Z_{k, p+d}^{\tr}, \label{eq:predictor-update} \\
\tilde{G}_{k,p+d}&=\tilde{G}_{k-1,p+d}+\left(y_{k}-\tilde{y}_{k}\right) Z_{k, p+d}^{\tr}V_{k,p+d}^{-1}. \nonumber 
\end{align}
\end{minipage}

Algorithm~\ref{algPrediction} is model-free and requires no knowledge of the system model or noise covariances.

\vspace{-1.5mm}
\subsection{Logarithmic regret against the optimal cooperative predictor}

\vspace{-1mm}

To measure the performance of \texttt{co-Filter} in Algorithm~\ref{algPrediction}, we consider the regret below: 
\begin{equation}\label{regret-new}
    \mathcal{R}_{N} \triangleq \sum_{k=1}^{N}\left\|y_{k}-\tilde{y}_{k}\right\|^{2}-\sum_{k=1}^{N}\left\|y_{k}-\hat{y}_{k}\right\|^{2},
\end{equation}
where $\hat{y}_{k}$ is from the optimal model-based cooperative prediction in Theorem~\ref{theorem:optimal-delayed-filter}. This is similar to \eqref{regret} used in \cite{tsiamis9894660, hazan2017learning,qian2025model}, but with one key difference: the benchmark in \eqref{regret-new} is from  Theorem~\ref{theorem:optimal-delayed-filter}, which is much stronger than that in \eqref{regret}. We have a logarithmic regret guarantee.   

\vspace{6pt}
\begin{theorem}\label{thm: regret}
    Consider the linear stochastic system \eqref{eq: LinearSystem} with an external information source \eqref{eq:external-source} with i.i.d. Gaussian noises.  Suppose Assumption~\ref{asp: Ob-main-text} and \ref{asp: Diagonal} hold. Choose the parameters in Algorithm~\ref{algPrediction}  as  \[\beta=\frac{\Omega(\kappa)}{\log (1 / \rho(A-\bar{L} \bar{C}))},\;\; 
    T_{\mathrm{init}} = \operatorname{poly}\left(d,\beta,\log\left(\frac{1}{\delta}\right)\right), \] where $\kappa$ represents the order of the largest Jordan block of $A$ corresponding to the eigenvalue 
$1$. Fix a horizon $N > T_{\mathrm{init}}$ and a failure probability $\delta > 0$. Then  with probability at least $1-\delta$, we have 
    $$
    \mathcal{R}_{N} \leq \operatorname{poly}\left(d,\beta,\log(1/\delta)\right) \mathcal{O}\left(\log^{3} N\right),$$
    where $M$ is a constant only related to the system parameter.
\end{theorem}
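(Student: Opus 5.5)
We follow the regret decomposition of \cite{tsiamis9894660,qian2025model}, adapted to the asymmetric regressor $Z_{k,p+d}$. Write $b_k \triangleq C\Phi_d(A-\bar{L}\bar{C})^p\bar{x}_{k-d-p-1}$ for the truncation bias in \eqref{eq: regressionModel}, so that $y_k = G_{p+d}Z_{k,p+d} + b_k + r_k$ and $\hat{y}_k = G_{p+d}Z_{k,p+d}+b_k$. Then
\[
\|y_k-\tilde{y}_k\|^2-\|y_k-\hat{y}_k\|^2 = \|(\tilde{G}_{k-1,p+d}-G_{p+d})Z_{k,p+d} - b_k\|^2 - 2 r_k^\tr\big((\tilde{G}_{k-1,p+d}-G_{p+d})Z_{k,p+d}-b_k\big).
\]
We would sum this identity epoch by epoch. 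The cross term is a martingale-type sum: $r_k$ is orthogonal to the past by Part~1 of Theorem~\ref{thm: property}, and in the Gaussian case it is independent of $Z_{k,p+d}$ and $\tilde{G}_{k-1,p+d}$. A self-normalized martingale bound therefore controls the cross term by $\tilde O\big(\sqrt{\sum_k\|(\tilde G_{k-1}-G)Z_k-b_k\|^2}\big)$, which can be absorbed into the quadratic term. The remaining task is to bound $\sum_k \|(\tilde{G}_{k-1}-G)Z_k\|^2 + \sum_k\|b_k\|^2$.

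\emph{Bias.} Under Assumption~\ref{asp: Ob-main-text}, $\|\bar{x}_k\|$ grows at most like $\sqrt{\log(1/\delta)}\,k^{\kappa}$ with high probability. Part~3 of Theorem~\ref{thm: property} combined with Assumption~\ref{asp: Diagonal} gives $\|\Phi_d(A-\bar{L}\bar{C})^p\|\le c\,\rho(A-\bar{L}\bar{C})^p$. Choosing $p=\beta\log T_l$ with $\beta$ as in the statement makes $\|b_k\|^2\le \poly/T_l^{2}$, so the bias contributes $O(1)$ per epoch and $O(\log N)$ in total.

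\emph{Estimation error.} Write $\tilde{G}_{k}-G = \big(\sum_t r_tZ_t^\tr + \sum_t b_tZ_t^\tr - \lambda G\big)V_k^{-1}$ and $E_k\triangleq \sum_t r_tZ_t^\tr$. Then
\[
\|(\tilde{G}_{k-1}-G)Z_k\|^2 \lesssim \big(\|E_{k-1}V_{k-1}^{-1/2}\|^2 + \text{bias}+\lambda\|G\|^2\big)\, Z_k^\tr V_{k-1}^{-1}Z_k.
\]
The self-normalized bound (vector-valued, via Abbasi-Yadkori-type arguments) gives $\|E_{k-1}V_{k-1}^{-1/2}\|^2\lesssim \log\det(V_k/\lambda)+\log(1/\delta)=O((p+d)\log k)$. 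The elliptical potential lemma gives $\sum_k \min(1,Z_k^\tr V_{k-1}^{-1}Z_k)\lesssim (p+d)\log k$. Since $p=O(\log N)$, multiplying these yields $O(\log^2 T_l)$ per epoch up to $\poly(d,\beta)$ factors. Summing over the $N_E=O(\log N)$ epochs gives $\log^3 N$.

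\emph{Main obstacle: the non-truncated potential.} To drop the $\min(1,\cdot)$ we need $Z_k^\tr V_{k-1}^{-1}Z_k\le 1$, or at least a bounded value, which requires persistent excitation $\lambda_{\min}(V_{k,p+d})\gtrsim k$ for $k\ge T_{\mathrm{init}}$. This is hard because $Z$ mixes $d$ purely local outputs with $p$ centralized outputs, so it is not a block-Toeplitz stack of identically distributed vectors. Our plan is to write $Z_k = \mathcal{O}\,\bar{x}_{k-d-p} + \mathcal{T}\,\eta_k$, where $\eta_k$ stacks the noises $w,v,v^{\mathrm e}$ in the window and $\mathcal{T}$ is block lower-triangular. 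The diagonal blocks of $\mathcal{T}$ are the identity on the $v$- and $v^{\mathrm e}$-components that actually appear, so $\mathcal{T}\mathcal{T}^\tr \succeq \min(\lambda_{\min}(R),\lambda_{\min}(R^{\mathrm e}))I$ despite the asymmetric rows. We would then apply a matrix-martingale small-ball (or block martingale small-ball) argument to $\sum_t Z_tZ_t^\tr$, giving $\lambda_{\min}(V_k)\ge c k$ with probability $1-\delta$ once $k\ge\poly(d,p,\log(1/\delta))$. This threshold fixes $T_{\mathrm{init}}$. Finally, we control the upper bound $\|Z_k\|^2\lesssim \poly\cdot k^{2\kappa}$ so that $\log\det V_k=O((p+d)\log k)$, and take a union bound over epochs with $\delta/N_E$ each, which gives the stated bound with probability at least $1-\delta$.
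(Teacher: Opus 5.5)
Your overall structure matches the paper: you reduce $\mathcal{R}_N$ to $\mathcal{L}_N$ plus a martingale term, control the bias through $\beta\propto\kappa/\log(1/\rho(A-\bar{L}\bar{C}))$, and bound a regression factor times an accumulation factor using persistent excitation. Your block small-ball route to PE is a reasonable substitute for the paper's split $\bar{Z}_k=\bar{Y}_k+\tilde{V}_k$ with Hanson--Wright on the cross term.

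\textbf{First gap: the logarithm count.} Your ingredients do not give $\log^3 N$. With $p=\beta\log T_l$, your self-normalized bound is $\|E_{k-1}V_{k-1}^{-1/2}\|^2=O((p+d)\log k)=O(\log^2 T_l)$. The per-epoch potential is $O((p+d)\log T_l)=O(\log^2 T_l)$. Their product is therefore $O(\log^4 T_l)$ per epoch, since $p$ grows and cannot be absorbed into $\poly(d,\beta)$. Summing over $N_E=O(\log N)$ epochs gives $O(\log^5 N)$.

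The paper reaches $\log^3 N$ through two ingredients your plan lacks.
\begin{itemize}
\item It bounds $\max_k\mathcal{E}_k=O(m_{p,d})=O(\log N)$, with no $\log\det$ factor. It does this by applying Hanson--Wright to the rows of $\bar{R}_{(d)}^{-1/2}\mathcal{R}_k$, using $\operatorname{trace}(V_k^Z)\le m_{p,d}$, $\|V_k^Z\|_F^2\le m_{p,d}$ and $\|V_k^Z\|_2\le 1$. Be aware that this treats $V_k^Z$ as fixed, although it is built from outputs that contain the innovations. The self-normalized bound you use is the martingale-correct tool, but it inherently carries $\log\det V_k$.
\item It telescopes the potential across epochs. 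It partitions the Gram matrix with the enlarged horizon $p'=p+\beta\log 2$ and uses the Schur complement together with PE to show $\log\det V_{2T_l-2,p+d}-\log\det V_{2T_l-2,p'+d}\le m_{p,d}\log(1+M_1)-\bar{m}(p'-p)\log\lambda$. Hence $\sum_k\|V_k^{-1/2}Z_k\|_2^2\le\log\det V_N+O(N_E\log N)=O(\log^2 N)$ in total, instead of the $O(\log^3 N)$ you get by bounding each epoch separately.
\end{itemize}
Only the product $\max_k\mathcal{E}_k\cdot\mathcal{V}_N=O(\log N)\cdot O(\log^2 N)$ gives the stated rate. With your martingale bound plus the telescoping, you would still be at $\log^4 N$.

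\textbf{Second gap: removing the $\min(1,\cdot)$.} Your plan for this step fails in the marginally stable regime the theorem targets. PE only gives $Z_k^\tr V_{k-1}^{-1}Z_k\le\|Z_k\|^2/(ck)$. Since $\|Z_k\|^2$ grows like $k^{2\kappa-1}$ up to logarithms (you yourself use $k^{2\kappa}$), this ratio grows polynomially once $\kappa\ge 2$, and like $(p+d)\log k$ already for $\kappa=1$. A uniform lower bound $\lambda_{\min}(V_{k-1})\gtrsim k$ cannot detect that the large components of $Z_k$ are equally large in $V_{k-1}$.

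The paper bounds $\|V_{k-1}^{-1/2}V_k^{1/2}\|_2^2=1+Z_k^\tr V_{k-1}^{-1}Z_k$ by a constant using the minimal polynomial of $A$:
\[
Z_k=a_{d_0-1}Z_{k-1}+\cdots+a_0Z_{k-d_0}+\Delta_k .
\]
Here $\Delta_k$ combines local innovations on the $d$ undelayed blocks and centralized innovations on the $p$ delayed blocks; this is where the asymmetry enters. Consequently $\|\Delta_k\|^2=O(\log^2 k)$. Each term $Z_{k-s}^\tr V_{k-1}^{-1}Z_{k-s}$ is at most $1$ because those samples already sit in $V_{k-1}$. PE is then needed only for $\Delta_k^\tr V_{k-1}^{-1}\Delta_k\lesssim\log^2 k/k$. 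This makes $\mathcal{V}_N$ at most a constant times $\sum_k Z_k^\tr V_k^{-1}Z_k$, which the telescoped log-det argument then controls.
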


   Intuitively, achieving logarithmic regret is expected, since we learn the regression model \eqref{eq: regressionModel} within the online least-squares framework \cite{draper1998applied}.     
    Given the fact that $(A-\bar{L}\bar{C})^p$ is exponentially decaying and the innovation noise $r_k$ is mutually uncorrelated by Theorem~\ref{thm: property}, selecting a large $p=\mathcal{O}(\log T_l)$ in each epoch is sufficient to obtain an online predictor $\tilde{y}_k$ with logarithmic regret.  
    In Theorem~\ref{thm: regret}, we have established a much shaper bound than the existing results \cite{tsiamis9894660, rashidinejad2020slip}, where similar logarithmic regret bounds have order $\mathcal{O}\left(\log^{6} N\right)$ and $\mathcal{O}\left(\log^{11} N\right)$, respectively (note that both \cite{tsiamis9894660, rashidinejad2020slip} only focused on centralized Kalman filter without asymmetric data structure).  

    The detailed proof for Theorem \ref{thm: regret} is technically involved, and we present it in the next subsection.

\vspace{-1mm}

\subsection{Proof of Theorem~\ref{thm: regret}}
\vspace{-1mm}

We outline the proof of Theorem~\ref{thm: regret} in four main steps. The main proof framework can follow standard online least-squares-based methods \cite{lai1991recursive,tsiamis9894660,rashidinejad2020slip}, and the main technical difficulty of this theorem lies in handling the asymmetry in the data structure induced by time delay. To simplify the proof process and highlight the theoretical contribution, we will present the overlap proof using a technique similar to the literature and highlight the key difficulty in the following proof. 

With \cite[Theorem 1]{tsiamis9894660}, the regret $\mathcal{R}_N$ is dominated by $\mathcal{L}_N\triangleq \sum_{k=T_{\textnormal{init}}}^{N}\left\|\tilde{y}_{k}-\hat{y}_{k}\right\|_{2}^{2}$, i.e., $\mathcal{R}_N=\mathcal{L}_N+o(\mathcal{L}_N)$. Hence, in the following proof, we mainly analyze the scaling of $\mathcal{L}_N$ with respect to $N$ for fixed $d$. We decompose each $\tilde{y}_{k+1}-\hat{y}_{k+1}$ into three parts in a standard manner,
$$
\begin{aligned}
\tilde{y}_{k+1}-\hat{y}_{k+1}=  \underbrace{-\lambda G_{p+d}  V_{k, p+d}^{-1} Z_{k+1, p+d}}_{\text{Regularization error}}\qquad\qquad\qquad
\\+\underbrace{\sum_{l=p+d}^{k} b_{l, p+d} Z_{l, p+d}^{{\tr}}V_{k, p+d}^{-1} Z_{k+1, p+d}-b_{k+1, p+d}}_{\text{Bias error}}  \\
+\underbrace{\sum_{l=p+d}^{k} r_{l} Z_{l, p+d}^{{\tr}} V_{k, p+d}^{-1} Z_{k+1, p+d}}_{\text{Regression error}},\qquad\qquad
\end{aligned}
$$
where $b_{k,p+d}=C\Phi_d(A-\bar{L} \bar{C})^{p} \bar{x}_{k-d-p}$ denotes the bias in the autoregressive model \eqref{eq: regressionModel}.
By further denoting $\mathcal{B}_{k}\triangleq \Big\| \sum_{l=p}^{k} b_{l, p} Z_{l, p+d}^{\tr} V_{k, p+d}^{-\frac{1}{2}}\Big\|_{2}^{2}$ as the bias factor, $\mathcal{E}_{k}\triangleq \Big\| \sum_{l=p}^{k} r_{l} Z_{l, p+d}^{\tr} V_{k, p+d}^{-\frac{1}{2}}\Big\|_{2}^{2}$ as the regression factor, $\mathcal{G}_{k}\triangleq \Big\| \lambda G_{p+d} V_{k, p+d}^{-\frac{1}{2}}\Big\|_{2}^{2}$ as the regularization factor, $\mathcal{V}_{N}\triangleq\!\!
    \sum_{k=T_{\textnormal{init}}}^{N}\left\|V_{k-1, p+d}^{-\frac{1}{2}} Z_{k,p+d}\right\|_{2}^{2}$
    as the accumulation factor, and $
\mathfrak{b}_{N}=\sum_{k=T_{\textnormal{init}}}^{N}\left\|b_{k, p+d}\right\|_{2}^{2}$ as the accumulation of bias. Then we can further decompose the term $\mathcal{L}_N$ into
\begin{equation}\label{eq: decoupleRegretBound}
        \mathcal{L}_{N}\leq 4\big(\max_{k\leq N
}\left(\mathcal{B}_{k} +\mathcal{E}_{k} + \mathcal{G}_{k}\right)\big)\cdot\mathcal{V}_{N}+4\mathfrak{b}_{N}.
 \end{equation}
Here, the subscript $p$ in the above factors is eliminated because $p$ is uniquely determined by $k$. Note that from the expression of $G_{p+d}$, we can obtain that 
   $
    \mathcal{G}_{k}\leq \left\|G_{p+d}G_{p+d}^{\tr}\right\|_2^2 \leq \frac{M}{1-\rho(A-\bar{L}\bar{C})^2}, \,\forall k\ge T_{\text{init}}
    $ does not scale with $N$, where the last inequality is from the uniform boundedness of $\Phi_l,\;l=1\dots,d$ in Theorem \ref{thm: property}, together with the diagonalizability of $A-\bar{L}\bar{C}$ in Assumption \ref{asp: Diagonal}. 
    
    Thus, in the following proof, we only need to discuss the bias factor $\mathcal{B}_{k}$, the regression factor $\mathcal{E}_{k}$, and the accumulation factor $\mathcal{V}_{N}$. Then the term-wise analysis for the above factors is presented below.
    
\vspace{2mm}
 
    \begin{lemma} \label{lemma:bias-factor}
Suppose Assumption~\ref{asp: Ob-main-text} and \ref{asp: Diagonal} hold, then for any fixed $\beta$ satisfying the condition in Theorem~\ref{thm: regret}, we have 
$$\max_{T_{\textnormal{init}}\!\leq k\! \leq N}\mathcal{B}_{k}\!\leq \!M\poly(\beta,\log(1/\delta))\!\log N
$$
and $\;
\mathfrak{b}_{N}\!\leq\! M\poly(\beta,\log(1/\delta))\!\log^2N\;$
holds for all $N$ with probability at least $1-\delta$.
\end{lemma}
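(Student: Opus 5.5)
The proof has two parts. First I bound the bias factor $\mathcal{B}_k$ by a projection argument combined with a uniform state bound. Then I bound the bias accumulation $\mathfrak{b}_N$ directly.

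\textbf{Step 1: reduce $\mathcal{B}_k$ to a sum of squared biases.} Write $B_k\triangleq[b_{p+d,p+d},\dots,b_{k,p+d}]$ and $\mathbf{Z}_k\triangleq[Z_{p+d,p+d},\dots,Z_{k,p+d}]$, so that $V_{k,p+d}=\lambda I+\mathbf{Z}_k\mathbf{Z}_k^\tr$. Then
\[
\mathcal{B}_k=\big\|B_k\mathbf{Z}_k^\tr V_{k,p+d}^{-1/2}\big\|_2^2\le \|B_k\|_2^2\,\big\|\mathbf{Z}_k^\tr(\lambda I+\mathbf{Z}_k\mathbf{Z}_k^\tr)^{-1}\mathbf{Z}_k\big\|_2\le \sum_{l\le k}\|b_{l,p+d}\|_2^2 .
\]
The middle factor is at most one because the matrix is a contraction of a projection. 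Hence both claims reduce to bounding $\sum_{l\le N}\|b_{l,p+d}\|_2^2$, where $p=\beta\log T_l$ is held fixed within each epoch.

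\textbf{Step 2: bound each bias term.} By Theorem~\ref{thm: property}(3), $\|\Phi_d\|_2\le\sqrt{\tau}$, uniformly in $d$ since $\rho_0<1$. By Assumption~\ref{asp: Diagonal}, $\|(A-\bar L\bar C)^p\|_2\le c\,\rho(A-\bar L\bar C)^p$. Together,
\[
\|b_{k,p+d}\|_2\le M\,\rho(A-\bar L\bar C)^{p}\,\|\bar x_{k-d-p}\|_2 .
\]
Next I need a high-probability uniform bound on the centralized Kalman estimate $\bar x_t$. Since $\bar x_t$ is a linear Gaussian function of the noises and $A$ is marginally stable with largest unit-modulus Jordan block of order $\kappa$, its covariance satisfies $\|\mathbb{E}\bar x_t\bar x_t^\tr\|_2\le M t^{2\kappa-1}$. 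This follows from $\bar x_t=\sum_s A^{t-s}\bar L\,\tilde r_s$ with $\tilde r_s$ the centralized innovations, and $\|A^j\|\le M j^{\kappa-1}$. A Gaussian tail bound, union-bounded over all $t$ using failure probability $\delta/t^2$, then gives, with probability at least $1-\delta$ and simultaneously for all $t$,
\[
\|\bar x_t\|_2^2\le M\,t^{2\kappa-1}\,\big(\log t+\log(1/\delta)\big).
\]

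\textbf{Step 3: sum over epochs.} In epoch $l$, with $k\le 2T_l$, choose $\beta\ge \Omega(\kappa)/\log(1/\rho(A-\bar L\bar C))$ large enough that $\rho(A-\bar L\bar C)^{2p}=T_l^{-2\beta\log(1/\rho)}\le T_l^{-2\kappa}$. Then each term satisfies $\|b_{k,p+d}\|^2\le M\,\poly(\log(1/\delta))\,\log N/T_l$. Summing over the $T_l$ steps of epoch $l$ gives $M\,\poly\cdot\log N$ per epoch. There are $N_E=O(\log N)$ epochs, so $\mathfrak{b}_N\le M\,\poly(\beta,\log(1/\delta))\log^2N$.

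For $\mathcal{B}_k$ the relevant $p$ is the current epoch's, but the sum in Step 1 runs over all earlier indices $l\le k$. With the current $p$ applied throughout, the total is $\sum_{l\le 2T_l}\|b_{l}\|^2\le M\,\poly\cdot\log N$: it is a single-epoch-sized sum, because the decay factor $T_l^{-2\kappa}$ at the current $p$ covers $t^{2\kappa-1}$ for all $t\le 2T_l$. This yields $\max_k\mathcal{B}_k\le M\,\poly(\beta,\log(1/\delta))\log N$. Any extra polynomial factors in $\beta$ come from the $d+p$ shift and the constants in the growth bound.

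\textbf{Main obstacle.} The hard step is the uniform growth bound on $\bar x_t$ under marginal stability. It requires controlling $\|A^j\|$ through the Jordan structure and making the bound hold for all $t$ simultaneously via the union bound. Everything else is bookkeeping: matching $\beta$ to $\kappa$ so that the exponential decay $\rho(A-\bar L\bar C)^p$ absorbs the polynomial growth.
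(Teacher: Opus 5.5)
Your proposal is correct and follows essentially the same route as the paper. Both arguments use the contraction bound $\mathcal{B}_k\le\sum_{l\le k}\|b_{l,p+d}\|_2^2$ and a union-bounded Gaussian deviation bound giving polynomial growth of $\|\bar{x}_t\|_2^2$ under marginal stability. Both then choose $\beta\propto\kappa/\log(1/\rho(A-\bar{L}\bar{C}))$ so that the decay of $(A-\bar{L}\bar{C})^p$ absorbs this growth within each epoch, and sum over the $O(\log N)$ epochs. The only difference is quantitative: the paper takes the larger exponent $M_3=2\kappa+1$ in $\beta$, which makes each epoch's bias sum $\operatorname{poly}(\log(1/\delta))$ and gives the slightly stronger bounds $\mathcal{B}_k=O(1)$ and $\mathfrak{b}_N=O(\log N)$, whereas your smaller $\beta$ lands exactly on the stated $\log N$ and $\log^2 N$.
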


\vspace{5pt}

\begin{lemma} \label{lemma:regression-factor}
Suppose Assumption~\ref{asp: Ob-main-text} holds. For any fixed $\beta$ satisfying the condition in Theorem~\ref{thm: regret}, we have 
\begin{equation} 
\max_{T_{\textnormal{init}}\leq k \leq N}\mathcal{E}_{k}\!\leq \!M\poly(\beta,\log(1/\delta))\log N
\end{equation}
holds for all $N$ with probability at least $1-\delta$.
\end{lemma}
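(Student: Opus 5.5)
The plan is to treat $\mathcal{E}_k$ as a self-normalized vector martingale and apply the standard bound of Abbasi-Yadkori et al. (as used in \cite{tsiamis9894660}). Fix an epoch with backward horizon $p=\beta\log T_l$. Let the filtration be $\mathcal{H}_l\triangleq\sigma(y_0,\dots,y_l,y^{\mathrm{e}}_0,\dots,y^{\mathrm{e}}_{l-d})$. Then $Z_{l,p+d}$ is $\mathcal{H}_{l-1}$-measurable.

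\textbf{Martingale structure of the innovations.} By Proposition~\ref{theorem:optimal-delayed-filter}, $r_l=y_l-\mathbb{E}\{y_l\mid\mathcal{H}_{l-1}\}$. Because the noises are Gaussian, this conditional expectation is linear in the data, so $r_l$ is independent of $\mathcal{H}_{l-1}$. By Part~1 of Theorem~\ref{thm: property}, $r_l$ is a martingale difference sequence with covariance $CP^{(d+1)}C^\tr+R$. This is where the orthogonality is essential: the delay makes the innovation recursion asymmetric, but Theorem~\ref{thm: property} still delivers conditionally Gaussian, sub-Gaussian increments with $\psi_2$-norm bounded by $\|CP^{(d+1)}C^\tr+R\|_2^{1/2}$. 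By Part~2 and the monotonicity in Part~3, this bound is at most $\|CPC^\tr+R\|_2^{1/2}$, uniformly in $d$.

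\textbf{Self-normalized bound.} We apply the self-normalized inequality to each row of $\sum_l r_l Z_{l,p+d}^\tr$, taking a union bound over the $m$ coordinates and over all $k$, which the stopping-time version of the bound permits. This gives, with probability at least $1-\delta$,
\[
\mathcal{E}_k\le M\left(\log\frac{\det V_{k,p+d}}{\det(\lambda I)}+\log\frac{1}{\delta}\right)\quad\text{for all }k .
\]

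\textbf{Log-determinant and the main obstacle.} By AM--GM,
\[
\log\det V_{k,p+d}\le (\bar m p+md)\log\!\left(\lambda+\tfrac{1}{\bar m p+md}\sum_{t\le k}\|Z_{t,p+d}\|_2^2\right).
\]
The main obstacle is a high-probability polynomial bound on $\|Z_{t,p+d}\|_2$ for marginally stable $A$. For this we write $x_t=\sum_{s<t}A^{t-1-s}\omega_s+A^tx_0$ and use $\|A^j\|_2\le Mj^{\kappa-1}$. Gaussian concentration with a union bound over $t\le N$ then gives $\|y_t\|_2,\|y^{\mathrm{e}}_t\|_2\le M t^{\kappa-1/2}\sqrt{\log(N/\delta)}$. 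Hence $\sum_{t\le k}\|Z_{t,p+d}\|_2^2\le M(p+d)N^{2\kappa}\log(N/\delta)$, and the logarithm is $O(\kappa\log N+\log\log(1/\delta))$.

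\textbf{Collecting terms.} Since $p\le\beta\log N$, we obtain $\mathcal{E}_k\le M(\beta\log N+d)\,\kappa\log N+M\log(1/\delta)$. This is of order $\log^2 N$ from the dimension factor unless it is handled per epoch. To reach the stated $\log N$, I would absorb the dimension factor $\bar m p+md=O(\beta\log N)$ into $\poly(\beta)$ and argue that the logarithm of the determinant ratio contributes only a $\log$ factor of $(p+d)$-dimensional growth. If that fails, one further $\log N$ is lost here, which Theorem~\ref{thm: regret}'s $\log^3N$ budget can likely tolerate. Finally, a union bound over the $O(\log N)$ epochs changes $\delta$ only by a $\log\log N$ factor, which is absorbed into $M$.
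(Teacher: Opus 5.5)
Your argument proves a weaker rate than the lemma states. As written, it gives $\max_k\mathcal{E}_k\le M(\beta\log N+d)\,\kappa\log N+M\log(1/\delta)$, which is $O(\log^2N)$, not $O(\log N)$. Neither of your proposed remedies closes this gap.

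The dimension $\bar m p+md$, with $p=\beta\log T_l$, is $\Theta(\beta\log N)$ in the last epoch, so it cannot be absorbed into $\poly(\beta)$. The loss is also not an artifact of your AM--GM estimate or of the epoch bookkeeping. On the persistent-excitation event of Lemma~\ref{lm: propersistent}, $V_{k,p+d}\succeq(\lambda+\frac{\sigma_{\bar R}}{4}k)I$, hence
\[
\log\frac{\det V_{k,p+d}}{\det(\lambda I)}\ \ge\ (\bar m p+md)\log\Bigl(1+\frac{\sigma_{\bar R}k}{4\lambda}\Bigr)=\Omega(\beta\log^2 k).
\]
So the right-hand side of the self-normalized inequality with a $\lambda I$ regularizer is itself of order $\log^2k$, even within a single epoch.

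Your fallback is also wrong. In \eqref{eq: decoupleRegretBound}, the factor $\max_k\mathcal{E}_k$ multiplies $\mathcal{V}_N=O(\log^2N)$ from Lemma~\ref{lemma:accumulation-factor}. An extra $\log N$ here therefore turns the $\log^3N$ of Theorem~\ref{thm: regret} into $\log^4N$, and the budget does not tolerate it.

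The paper gets $\log N$ by avoiding the $\log\det$ term altogether:
\begin{itemize}
\item It normalizes by $\bar R_{(d)}=CP^{(d+1)}C^\tr+R$.
\item Up to the factor $\|\bar R_{(d)}\|_2$, it bounds $\mathcal{E}_k$ by $\sum_{i=1}^m g^{(i)}V_k^Zg^{(i)\tr}$. Here $g^{(i)}$ are the rows of $\bar R_{(d)}^{-1/2}\mathcal{R}_k$ and $V_k^Z=\bar Z_k^\tr V_k^{-1}\bar Z_k$.
\item It applies Hanson--Wright (Lemma~\ref{lm: HWinequality}) using $\trace(V_k^Z)\le m_{p,d}$, $\|V_k^Z\|_F^2\le m_{p,d}$ and $\|V_k^Z\|_2\le 1$, with failure probability $2\delta/(mk^2)$ per row and per time step.
\end{itemize}
The dimension then enters additively, $O\bigl(m_{p,d}+\log(mk^2/\delta)\bigr)=\poly(d,\beta,\log\frac{1}{\delta})\log k$, rather than multiplying $\log k$.

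Be aware that this step treats $V_k^Z$ as a fixed matrix. In fact $\bar Z_k$ is built from outputs that contain the very innovations forming $g^{(i)}$. Your martingale framing is the one that respects this dependence. However, to remove the $m_{p,d}\log k$ product you would need an ingredient beyond the standard self-normalized bound; a sharper estimate of $\log\det$ will not do it. As it stands, your proposal establishes only the $\log^2N$ version of the lemma.
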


\vspace{5pt}

\begin{lemma} \label{lemma:accumulation-factor}
Suppose Assumption~\ref{asp: Ob-main-text} holds. For any fixed $\beta$ satisfying the condition in Theorem~\ref{thm: regret}, we have 
\begin{equation} 
\mathcal{V}_{N}\!\leq \!M\poly(\beta,\log(1/\delta))\log^2 N
\end{equation}
holds for all $N$ with probability at least $1-\delta$.
\end{lemma}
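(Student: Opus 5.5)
The plan is to bound $\mathcal{V}_N$ epoch by epoch with the elliptical potential (log-determinant) argument, and then add up over the $O(\log N)$ epochs produced by the doubling schedule of Algorithm~\ref{algPrediction}.

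\textbf{Reduction within an epoch.} In epoch $l$, the horizon $p=\beta\log T_l$ stays fixed, so the regressor dimension $D_l\triangleq \bar m p+md$ is $O(\beta\log N + d)$. The Gram matrix $V_{k,p+d}$ is then updated by rank-one increments, as in \eqref{eq:predictor-update}. Take $k$ in the epoch with $k\ge T_l$. Since $V_{k-1,p+d}\succeq V_{T_l-1,p+d}$, we have
\[
\|V_{k-1,p+d}^{-1/2}Z_{k,p+d}\|_2^2\le \frac{\|Z_{k,p+d}\|_2^2}{\lambda_{\min}(V_{T_l-1,p+d})}.
\]
Two ingredients are needed. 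The first is an upper bound on $\|Z_{k,p+d}\|_2^2$. The second is a lower bound $\lambda_{\min}(V_{T_l-1,p+d})\gtrsim T_l$, which is the persistent excitation property announced in the contributions.

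\textbf{Controlling the growth of the regressor.} Because $\rho(A)\le 1$ and the largest Jordan block at eigenvalue $1$ has order $\kappa$, Gaussian tail bounds with a union bound over $k\le N$ give $\|x_k\|_2^2\le M k^{2\kappa-1}\log(N/\delta)$ with probability $1-\delta/3$. Hence $\|Z_{k,p+d}\|_2^2\le M (p+d) k^{2\kappa-1}\log(N/\delta)$.

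This growth is too large for the crude ratio bound above. I would therefore use the standard potential argument instead. Whenever $\|V_{k-1}^{-1/2}Z_k\|^2\le 1$, we have $\|V_{k-1}^{-1/2}Z_k\|^2\le 2\log(1+\|V_{k-1}^{-1/2}Z_k\|^2)$, and summing these gives
\[
\sum_{k\in\text{epoch}}\|V_{k-1}^{-1/2}Z_k\|^2\le 2\log\frac{\det V_{2T_l-2}}{\det V_{T_l-1}}\le 2D_l\log\frac{\lambda_{\max}(V_{2T_l-2})}{\lambda_{\min}(V_{T_l-1})}.
\]
The numerator is at most $\lambda+\sum_t\|Z_t\|^2\le \mathrm{poly}(N)$. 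With a linear lower bound on the denominator, the logarithm is $O(\kappa\log N+\log\log(1/\delta))$.

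\textbf{Ensuring the per-step terms are at most $1$.} The condition $\|V_{k-1}^{-1/2}Z_k\|^2\le1$ has to hold for every step. One option is to handle the finitely many exceptional steps separately. A cleaner option is to use $\|V_{k-1}^{-1/2}Z_k\|^2\le \|Z_k\|^2/\lambda_{\min}(V_{k-1})$ together with a sharper, direction-aware bound.

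I expect the cleanest route to be normalization. Split $Z$ into its marginally stable modes and its stable/innovation part. The persistent excitation lower bound should hold blockwise at a matching polynomial rate in $k$: the energy in each unstable direction grows like $k^{2j}$ consistently. With that, the normalized regressor $\|V_{k-1}^{-1/2}Z_k\|^2$ is $O(D_l\log(N/\delta)/k)$. This is the approach of \cite{tsiamis9894660}, and I will follow it.

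\textbf{Summing over epochs.} Each epoch then contributes $O(D_l\log N)=O((\beta\log N+d)\log N)$. There are $O(\log N)$ epochs, which naively gives $\log^3 N$.

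To reach the claimed $\log^2N$, the sharper bound is needed. Take the per-step bound $O(D_l\log(1/\delta)/k)$ and sum it harmonically over the epoch $[T_l,2T_l]$. This gives $O(D_l)$ per epoch, and $D_l=O(\beta\log N+d)$. Summing over $O(\log N)$ epochs then gives $O(\log^2N)$, with the polynomial factor in $(\beta,d,\log(1/\delta))$ absorbed into $M\,\mathrm{poly}$.

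\textbf{Main obstacle.} The hardest step is the persistent excitation bound $\lambda_{\min}(V_{T_l-1,p+d})\ge c\,T_l$, valid with probability $1-\delta$ once $T_l\ge T_{\mathrm{init}}=\mathrm{poly}(d,\beta,\log(1/\delta))$. The difficulty is that $Z_{k,p+d}$ is asymmetric: it stacks $d$ local-only outputs on top of $p$ centralized outputs $y^{\mathrm c}$, so the usual Toeplitz/innovation representation for a single output stream does not apply directly.

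My first attempt would be to write $Z_{k,p+d}=\mathcal{O}\,\bar x_{k-d-p}+\mathcal{T}\,\eta_k$. Here $\eta_k$ stacks the innovations: the centralized innovations $y^{\mathrm c}_t-\bar C\bar x_t$ for the older block, and the delayed innovations $r_t$ for the recent $d$ outputs. These are mutually orthogonal by Theorem~\ref{thm: property}. The matrix $\mathcal{T}$ is block lower-triangular with identity diagonal blocks, so $\sigma_{\min}(\mathcal{T})$ is bounded below uniformly in $p$, using the stability of $\Phi_l$ from Theorem~\ref{thm: property} Part 3 and of $\bar A$.

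A martingale small-ball / block-martingale argument on $\eta_k$, whose covariances $C\bar P C^\tr+\bar R$ and $CP^{(i)}C^\tr+R$ are all positive definite, then yields $\sum_t \mathcal{T}\eta_t\eta_t^\tr\mathcal{T}^\tr\succeq c\,k I$. Finally, the cross terms with $\bar x$ are controlled by self-normalized martingale bounds, and a union bound over epochs completes the argument.
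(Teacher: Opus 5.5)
The gap is in how you get from $\log^3 N$ down to $\log^2 N$.

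\textbf{The per-step decay bound is unproven.} Your claim $\|V_{k-1}^{-1/2}Z_k\|_2^2=O(D_l\log(N/\delta)/k)$ rests on a ``direction-aware'' excitation, in which every marginally stable Jordan direction is excited at its own rate $k^{2j}$. Nothing in the paper provides this, and you do not prove it. Lemma~\ref{lm: propersistent} gives only the isotropic bound $\bar Z_k\bar Z_k^\tr\succeq\frac{\sigma_{\bar R}}{4}kI$. Meanwhile the marginally stable part of $Z_k$ has norm of order $k^{\kappa-1/2}$, so you would need a separate small-ball estimate for each integrated mode of $A$.

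Nor is this what the minimal-polynomial argument from \cite{tsiamis9894660}, reproduced in the paper, delivers. There one writes $Z_k=\sum_{s=1}^{d_0}a_{d_0-s}Z_{k-s}+\Delta_k$. The $\Delta_k$ contribution is $O(\log^2 k/k)$ after persistent excitation. But the terms $Z_{k-s}^\tr V_{k-1}^{-1}Z_{k-s}$ are only $<1$ by Sherman--Morrison. So one gets a uniform \emph{constant} bound on $\|V_{k-1}^{-1/2}V_k^{1/2}\|_2^2=1+Z_k^\tr V_{k-1}^{-1}Z_k$, not $O(1/k)$ decay.

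\textbf{The bound as stated gives $\log^3 N$.} Even granting your bound, summing $D_l\log(N/\delta)/k$ over $k\in[T_l,2T_l]$ gives $O(D_l\log N)=O(\log^2 N)$ per epoch. Over the $O(\log N)$ epochs that is $O(\log^3 N)$, exactly the naive count. Replacing $\log(N/\delta)$ by $\log(1/\delta)$ in the next sentence is not justified. A per-step bound that holds uniformly in $k$ for Gaussian regressors necessarily carries a $\log(k/\delta)$ factor.

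\textbf{The missing idea: telescoping across epochs.} The paper saves the logarithm \emph{across} epochs, not within them.
\begin{enumerate}
\item It trades $V_{k-1}$ for $V_k$ at the cost of the constant factor above. This also removes your ``$\le 1$'' requirement, because $\sum_k Z_k^\tr V_k^{-1}Z_k\le\log(\det V_{2T_l-2}/\det V_{T_l-1})$ holds unconditionally.
\item It telescopes the log-determinants between consecutive epochs. Partition $V_{2T_l-2,p'+d}$ with $p'=p_{l+1}$. The Schur complement is $\succeq\lambda I$. Persistent excitation shows the few missing early samples are negligible, i.e.\ $V_{2T_l-2,p+d}\preceq(1+M_1)(\lambda I+Z_{11})$. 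Hence $\log\det V_{2T_l-2,p_l+d}-\log\det V_{2T_l-2,p_{l+1}+d}\le m_{p,d}\log(1+M_1)-\bar m(p'-p)\log\lambda=O(\log N)$.
\item Summing, the epoch potentials collapse to $\log\det V_{N,p_{N_E}+d}+O(N_E\log N)=O(\log^2 N)$.
\end{enumerate}
Your within-epoch potential argument, with the constant ratio bound substituted for your per-step claim, is valid. But without this telescoping, or a genuine proof of per-direction excitation with no $\log k$ loss, it yields only $\mathcal{V}_N=O(\log^3 N)$. That would give $O(\log^4 N)$ regret.

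\textbf{Side remark on persistent excitation.} The paper avoids your innovation/block-martingale construction for this input. It lower-bounds $\bar Z_k\bar Z_k^\tr$ by the i.i.d.\ measurement-noise block $\tilde V_{k,p+d}$, which is independent of the state block, and controls the cross term with Hanson--Wright. The delay asymmetry never enters there.
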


{The proof ideas of Lemma~\ref{lemma:bias-factor} and \ref{lemma:regression-factor} generally follow the procedure in \cite[Theorem 1]{tsiamis9894660} and \cite[Theorem 1]{qian2025model}, with some additional derivations to handle the effect induced by asynchronous observations. {The detailed proofs are provided in the Appendix \ref{sectionBeta} and \ref{sectionRegression}.} }
{While the proof of Lemma~\ref{lemma:accumulation-factor} contains several non-trivial mathematical techniques, such as the persistent excitation analysis for the asymmetry Gram matrix and corresponding asymmetric successive reformulation to guarantee a small regret order. To highlight the core theoretical contribution, we present the essential proof steps of Lemma~\ref{lemma:accumulation-factor} in the Appendix~\ref{sec:proofAccumulation}.} 

\textbf{Proof of Theorem~\ref{thm: regret}}: We can now directly combine \eqref{eq: decoupleRegretBound} and Lemma~\ref{lemma:bias-factor} to \ref{lemma:accumulation-factor} to establish the regret bound as
\[
\begin{aligned}
    \mathcal{L}_{N}\leq& 4\big(\max_{k\leq N
}\left(\mathcal{B}_{k} +\mathcal{E}_{k} + \mathcal{G}_{k}\right)\big)\cdot\mathcal{V}_{N}+4\mathfrak{b}_{N}\\
\leq & M\poly(\beta,\log(1/\delta))\Big(\log N+\log N+c\Big)\\
&\times\log^2 N+M\poly(\beta,\log(1/\delta))\log^2 N\\
=&M\poly(\beta,\log(1/\delta))\log^3 N,
\end{aligned}
\]
where $c=\frac{M}{1-\rho(A-\bar{L}\bar{C})^2}$ denotes the uniform upperbound for the regularization factor $\mathcal{G}_k$ and we eliminate the low-order terms.

\section{Performance Improvement of Online Cooperative Predictor over Local Optimal Predictor}
In the previous section, we have provided a regret analysis of the online partially delayed cooperative filter relative to its optimal model-based counterpart. In this section, we establish the condition to guarantee that, for sufficiently large $N$, the cooperative online filter outperforms the optimal model-based local predictor. 
Accordingly, the performance metric will be replaced by
\begin{equation}\label{regret-local}
    \tilde{\mathcal{R}}_{N} \triangleq \sum_{k=1}^{N}\left\|y_{k}-\tilde{y}_{k}\right\|^{2}-\sum_{k=1}^{N}\left\|y_{k}-\check{y}_{k}\right\|^{2},
\end{equation}
where the term $\check{y}_{k}$ denotes the optimal prediction with only local information $Y_{0:k-1}$, i.e., 
\begin{equation}\label{MMSEProblem-local}
    \check{y}_{k+1} \triangleq \arg \min _{z \in \mathcal{F}_{k}} \mathbb{E}\left[\left\|y_{k+1}-z\right\|_{2}^{2} \mid \mathcal{F}_{k}\right], 
\end{equation}
with $\mathcal{F}_{k}\triangleq \left\{y_0,\dots,y_{k-1}\right\}$.  

It is clear that we have  
\begin{equation}\label{eq: modifyregretDecompose}
    \tilde{\mathcal{R}}_{N}=\mathcal{R}_N+\sum_{k=1}^{N}\left\|y_{k}-\hat{y}_{k}\right\|^{2}-\sum_{k=1}^{N}\left\|y_{k}-\check{y}_{k}\right\|^{2},
\end{equation}
which shows that the modified regret comprises two components: the learning regret $\mathcal{R}_N$ and the model-based regret. The model-based regret captures the fundamental performance improvement enabled by incorporating additional observations, whereas the learning regret arises from the lack of model information. Consequently, for the online learning method to outperform the local model-based predictor, the performance gain from additional information should dominate the learning regret.

Before presenting our results on performance improvement, we first provide an example to illustrate that incorporating multiple-source data does not necessarily enhance local prediction performance; in particular, $\sum_{k=1}^{N}\left\|y_{k}-\hat{y}_{k}\right\|^{2}-\sum_{k=1}^{N}\left\|y_{k}-\check{y}_{k}\right\|^{2}$ can be 0. 
\vspace{6pt}
\begin{example} \label{example: fakeimprove}
    Consider a second-order system with the system model and noise statistics below
    \begin{equation}\label{eq: exampleParaSup}
    \begin{aligned}
        A&=\begin{bmatrix}
    0.9&0\\
    0&0.9
\end{bmatrix}, \quad \bar{C}=\begin{bmatrix}
    C\\C^{\mathrm{e}}
\end{bmatrix}=
\begin{bmatrix}
1&0\\0&1
\end{bmatrix},\\  
Q &= \begin{bmatrix}
    1&0\\0&1
\end{bmatrix},
\qquad \bar{R} =\begin{bmatrix}
R\\&R^{\mathrm{e}}
\end{bmatrix}=\begin{bmatrix}
    1&0\\0&1
\end{bmatrix}.
    \end{aligned} 
\end{equation}
By solving the corresponding Riccati equations, we can obtain
the local covariance and the centralized covariance as
\[
P=\begin{bmatrix}
    1.48&0\\
    0&5.26
\end{bmatrix},\quad
\bar{P}=\begin{bmatrix}
    1.48&0\\
    0&1.48
\end{bmatrix},
\]
which indicates that $CPC^\tr-C\bar{P}C^\tr=0$, and that the optimal cooperative predictor cannot improve local prediction performance.
  \hfill $\square$
\end{example}
The above example shows that, for unrelated observations, collecting more information cannot improve prediction performance. To provide a theoretical guarantee, we impose the following condition on the inherent observation property to ensure prediction improvement.
We define the following {\it symplectic matrices}
\[
\bar{S}\!=\!\begin{bmatrix}
    A^{T}\!\!+\!\bar{G} \!A^{-1} \!Q & \;-\bar{G} A^{-1} \!\\
-A^{-1} Q & A^{-1}
\end{bmatrix}\!\!,{S}\!=\!\begin{bmatrix}
    A^{T}\!\!+\!G\! A^{-1}\!Q & \;-G A^{-1}\! \\
-A^{-1} Q & A^{-1}
\end{bmatrix}
\]
where $G=C^\tr R^{-1}C$ and $\bar{G}=\bar{C}^\tr \bar{R}^{-1}\bar{C}=C^\tr R^{-1}C+C^{(e)\tr} R^{(e)-1}C^{(e)}$.
Then we provide the following Assumption to guarantee strict performance improvement.
\vspace{1pt}

\begin{assumption}(\cite[Theorem 2]{ren2023effects})\label{asp: mono}
    System matrix $A$ is invertible, and the corresponding symplectic matrices $\bar{S}$ and $S$ do not have any common
stable eigenvalues and associated common eigenvectors
simultaneously.
\end{assumption}
\vspace{-5pt}

With \cite[Theorem 2]{ren2023effects},  Assumption~\ref{asp: mono} guarantees the strict inequality $P>\bar{P}$, implying that additional information yields a strict performance improvement in the centralized case. However, centralized performance improvement does not necessarily imply performance improvement under partial information delay. We have the following theorem to characterize performance improvement in the delayed setting. 

\vspace{6pt}
\begin{theorem}\label{thm: improvement}
   Consider the linear stochastic system \eqref{eq: LinearSystem} with an external information source \eqref{eq:external-source} with i.i.d. Gaussian noises.  Suppose Assumption~\ref{asp: Ob-main-text} and \ref{asp: mono} hold. Fix a horizon $N > T_{\mathrm{init}}$ and a failure probability $\delta > 0$. Then  with probability at least $1-\delta$, we have 
   \vspace{-5pt}
    $$
    \sum_{k=1}^{N}\!\left\|y_{k}\!-\!\hat{y}_{k}\right\|^{2}-\sum_{k=1}^{N}\!\left\|y_{k}\!-\!\check{y}_{k}\right\|^{2}\!\leq\! -\mathcal{O}(N)+\poly\!\big(\frac{1}{\delta}\big)\mathcal{O}(\sqrt{N})
    \vspace{-5pt}$$
    holds for all $N$ uniformly.
\end{theorem}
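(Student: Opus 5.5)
We split the difference of empirical losses into an expectation part and a fluctuation part. Write $\hat r_k = y_k-\hat y_k$ for the innovation of the optimal delayed cooperative predictor (Proposition~\ref{theorem:optimal-delayed-filter}) and $\check r_k = y_k-\check y_k$ for the innovation of the local steady-state Kalman predictor. Then
\[
\sum_{k=1}^N\|\hat r_k\|^2-\sum_{k=1}^N\|\check r_k\|^2
= N\,\trace\big(C(P^{(d+1)}-P)C^\tr\big) + \sum_{k=1}^N\big(\|\hat r_k\|^2-\mathbb{E}\|\hat r_k\|^2\big) - \sum_{k=1}^N\big(\|\check r_k\|^2-\mathbb{E}\|\check r_k\|^2\big),
\]
up to a transient term from the non-stationary initial segment. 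That transient is $\mathcal{O}(1)$ because the Kalman covariances converge exponentially. Here we used Part~1 of Theorem~\ref{thm: property}, which gives $\mathbb{E}\{\hat r_k\hat r_k^\tr\}=CP^{(d+1)}C^\tr+R$, together with $\mathbb{E}\{\check r_k\check r_k^\tr\}=CPC^\tr+R$.

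\textbf{Step 1: the drift term is strictly negative and linear in $N$.} From the proof of Part~3 of Theorem~\ref{thm: property}, $P^{(1)}=\bar P\le P^{(2)}\le\cdots\le P^{(d+1)}$. We also need $P^{(d+1)}\le P$, and strictly so along $C$. The plan is to argue by monotonicity of the Riccati map. Since $\bar P<P$ (Assumption~\ref{asp: mono} with \cite[Theorem 2]{ren2023effects}) and $P$ is a fixed point of $\Ric(A,C,Q,R,\cdot)$, iterating $d$ times preserves the order, so $P^{(d+1)}\le P$. For strictness we use the Woodbury form $X\mapsto A(X^{-1}+C^\tr R^{-1}C)^{-1}A^\tr+Q$. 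With $A$ invertible (Assumption~\ref{asp: mono}), this map is strictly monotone: $X<Y$ implies $f(X)<f(Y)$. Hence $P^{(d+1)}<P$ and $c_d\triangleq\trace(C(P-P^{(d+1)})C^\tr)>0$ for $C\neq0$. This produces the $-c_dN=-\mathcal{O}(N)$ term; by Part~2 of Theorem~\ref{thm: property}, $c_d$ decays toward $0$ as $d\to\infty$ but is positive for each fixed $d$.

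\textbf{Step 2: the fluctuation terms are $\poly(1/\delta)\,\mathcal{O}(\sqrt N)$.} The sequences $\hat r_k$ and $\check r_k$ are Gaussian and stationary after the transient. Each is a linear functional of the noises with exponentially decaying coefficients: $\check r_k$ through $A-LC$, and $\hat r_k$ through $\Phi_d$ and $\bar A$ by \eqref{eq: totalError}. So $\sum_k\|\hat r_k\|^2=\eta^\tr M\eta$ for the stacked noise vector $\eta$ and a banded-like PSD matrix $M$ whose operator norm is bounded by a system constant and whose Frobenius norm is $\mathcal{O}(\sqrt N)$. (Orthogonality of $\hat r_k$ makes this especially clean, since the innovations are i.i.d.\ Gaussian $\mathcal N(0,CP^{(d+1)}C^\tr+R)$.) The Hanson--Wright inequality then gives, with probability $1-\delta/2$, a deviation of at most $c(\sqrt{N\log(1/\delta)}+\log(1/\delta))$. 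The same argument applies to the local innovations $\check r_k$, which are also i.i.d.\ in steady state. A union bound gives the stated form.

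\textbf{Main obstacle.} The statement asks for the bound to hold uniformly in $N$, not for a fixed horizon. A naive union over $N$ fails, so the plan is a peeling argument over dyadic blocks $N\in[2^j,2^{j+1})$ with failure budget $\delta_j=\delta/(2j^2)$. Within each block we apply a maximal martingale inequality: the centered i.i.d.\ sums form a martingale, so Doob's inequality or Freedman's inequality handles them. This costs only logarithmic factors, which can be absorbed into $\poly(1/\delta)\mathcal{O}(\sqrt N)$. Alternatively, a Chebyshev-type bound on $\sum\|\hat r_k\|^2$ via the variance $\mathcal{O}(N)$ with failure weight $1/\delta$ also fits the $\poly(1/\delta)$ form. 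The secondary delicate point is strictness in Step 1, namely ensuring $C(P-P^{(d+1)})C^\tr\neq0$. We would rely on the strict positive-definiteness $P-P^{(d+1)}>0$ obtained from the invertibility of $A$ and $Q>0$.
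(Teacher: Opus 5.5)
Your proposal follows the paper's proof in its essentials. Strict monotonicity of the Woodbury form $X\mapsto A(X^{-1}+C^\tr R^{-1}C)^{-1}A^\tr+Q$ for invertible $A$ carries $\bar P<P$ over to $P^{(d+1)}<P$, which gives the drift $-N\,\trace\big(C(P-P^{(d+1)})C^\tr\big)$. Hanson--Wright applied to the whitened innovations then controls both fluctuation sums; these innovations are i.i.d.\ by the orthogonality in Theorem~\ref{thm: property} together with joint Gaussianity.

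You diverge only on uniformity in $N$, and there your premise is wrong. The naive union bound does not fail: the paper spends failure probability $\delta/N^2$ at each $N$ (total $\pi^2\delta/6$) and pays a $\log(N^2/\delta)$ factor. Your dyadic peeling is also valid. With an exponential-supermartingale (Bernstein-type) maximal inequality it sharpens that factor to an iterated-logarithm one. Plain Freedman will not do this, because it needs bounded increments and $\|\hat r_k\|^2$ is unbounded.

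However, no route gives a literal $C(\delta)\sqrt N$ bound for all $N$ at once, and Chebyshev is no exception since it only controls a single fixed $N$. The centered sum $\sum_{k\le N}\big(\|\hat r_k\|^2-\trace(CP^{(d+1)}C^\tr+R)\big)$ is an i.i.d.\ sum with positive variance, so by the law of the iterated logarithm its ratio to $\sqrt N$ is almost surely unbounded. The $N$-dependent logarithmic factor therefore cannot be ``absorbed'' into $\poly(1/\delta)\mathcal{O}(\sqrt N)$. It is harmless only because the $-c_dN$ drift dominates it, and the paper's own bound carries the same factor.

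Your handling of the initial transient is a small robustness gain over the paper, which treats the innovations as exactly i.i.d.\ from $k=1$. You write $\sum\|\hat r_k\|^2$ as a quadratic form in the stacked noise with bounded operator norm, which does not require exact stationarity.
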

\begin{proof}
With Assumption~\ref{asp: mono}, we first have that
$
P^{(l+1)}=A\big(P^{(l)-1}+C^\tr R^{-1}C\big)^{-1}A^\tr+Q.
$
Then we further have
\vspace{-5pt}
\[
\begin{aligned}
  P^{(l+1)}-P=&A\big(P^{(l)-1}+C^\tr R^{-1}C\big)^{-1}A^\tr \\
  &\qquad\qquad-A\big(P^{-1}+C^\tr R^{-1}C\big)^{-1}A^\tr.
\end{aligned}
\vspace{-5pt}
\]
With $A$ invertible and mathematical induction, if $P^{(l)}<P$, then we have $P^{(l+1)}<P$ also holds. Therefore with Assumption~\ref{asp: mono}, we have $\bar{P}=P^{(1)}<P$. Hence, we have $P^{(d)}<P$ holds for all $d$, which means that the delayed additional information will result in strict performance improvement.

Then we denote $\check{r}_k=y_{k}-\check{y}_{k}$, and with $r_k=y_{k}-\hat{y}_{k}$ in Theorem~\ref{thm: property}, we have 
$
\mathbb{E}\left\{r_kr_k^\tr\right\}=CP^{(d+1)}C^\tr+R\triangleq \hat{R},\, \text{and } \mathbb{E}\left\{\check{r}_k\check{r}_k^\tr\right\}=CPC^\tr+R\triangleq \check{R}.
$
Then we have $\sum_{k=1}^{N}\left\|y_{k}-\hat{y}_{k}\right\|^{2}=\sum_{k=1}^{N} r_k^\tr r_k$ and $\sum_{k=1}^{N}\left\|y_{k}-\check{y}_{k}\right\|^{2}=\sum_{k=1}^{N} \check{r}_k^\tr \check{r}_k$. Consider the random variable $\sum_{k=1}^{N} r_k^\tr r_k$, we first have
$\mathbb{E}\left\{\sum_{k=1}^{N} r_k^\tr r_k\right\}=N\trace(\hat{R})$.
Then consider the quadratic reformulation of $\sum_{k=1}^{N} r_k^\tr r_k$ as
\vspace{-5pt}
\[
\sum_{k=1}^{N} r_k^\tr r_k=r_{1:N}^\tr \big(I_N\otimes \hat{R}^{-\frac{1}{2}}\big)\big(I_N\otimes\hat{R}\big)\big(I_N\otimes \hat{R}^{-\frac{1}{2}}\big)r_{1:N},
\vspace{-5pt}\]
where $r_{1:N}^\tr\triangleq\left[ r_1^\tr,\dots,r_N^\tr\right]$ is the collection of innovations. It is easy to verify that the vector $\big(I_N\otimes \hat{R}^{-\frac{1}{2}}\big)r_{1:N}$ is composed by standard Gaussian random variable. Then for the matrix $I_N\otimes \hat{R}$, we can also verify that  $\big\|I_N\otimes \hat{R}\big\|_F^2=N\big\|\hat{R}\big\|_F^2$ and $\big\|I_N\otimes \hat{R}\big\|=\big\|\hat{R}\big\|$. With Lemma~\ref{lm: HWinequality} (Hanson-Wright inequality), for fixed probability $\delta$, we can choose 
\[
t_N=\max\left\{c\sqrt{N}\big\|\hat{R}\big\|_F\sqrt{\log\frac{N^2}{\delta}},\;\;c\big\|\hat{R}\big\|\log\frac{N^2}{\delta}\right\},
\]
where $c$ is the constant that is only related to the standard Gaussian random variable. 
Then we have
\[
\mathbb{P}\left\{\Big|\sum_{k=1}^{N} r_k^\tr r_k-N\trace(\hat{R})\Big|>t_N\right\}\leq \frac{\delta}{N^2}
\]
holds for each $N$. By taking a union for all $N$, we can approximately obtain the result that
\[
\mathbb{P}\left\{\!\Big|\sum_{k=1}^{N} r_k^\tr r_k\!-\!N\trace(\hat{R})\Big|\!<\!c\sqrt{N}\big\|\hat{R}\big\|_F{\log\frac{N^2}{\delta}},\forall N\!\right\}\!\ge\! 1-\delta
\]
holds, where $c\sqrt{N}\big\|\hat{R}\big\|_F{\log\frac{N^2}{\delta}}$ is an approximate relaxation of $t_N$.
With a similar process, we can verify that
\[
\mathbb{P}\left\{\!\Big|\sum_{k=1}^{N} \check{r}_k^\tr \check{r}_k\!-\!N\trace(\check{R})\Big|\!<\!c\sqrt{N}\big\|\check{R}\big\|_F{\log\frac{N^2}{\delta}},\forall N\!\right\}\!\ge\! 1-\delta
\]
also holds. Further relaxing the above two inequalities, we can obtain that
\vspace{-5pt}
\[
\begin{aligned}
    \sum_{k=1}^{N} r_k^\tr r_k-\sum_{k=1}^{N} \check{r}_k^\tr \check{r}_k\leq& N\trace(\hat{R})-N\trace(\check{R})\\
    &+c\sqrt{N}\big(\big\|\check{R}\big\|_F+\big\|\hat{R}\big\|_F\big){\log\frac{N^2}{\delta}}.
\end{aligned}
\vspace{-5pt}
\]
holds uniformly for all $N$ with probability $1-2\delta$.
\end{proof}

Together with \eqref{eq: modifyregretDecompose}. Note that $\trace(\hat{R})<\trace(\check{R})$, then for large enough $N$, the $\sqrt{N}$ and $\log N$ terms in the regret $\tilde{\mathcal{R}}_N$ will be dominated by the negative linear term. 
Then we can see that the worst-case performance of the cooperative online filter will outperform the best-case performance of the optimal local filter for sufficiently large $N$, and this dominance relation holds uniformly for all $N$. Then we have the following result.

\vspace{3pt}
\begin{corollary}\label{coro: improvement}
   Consider the linear stochastic system \eqref{eq: LinearSystem} with an external information source \eqref{eq:external-source} with i.i.d. Gaussian noises.  Suppose Assumption~\ref{asp: Ob-main-text}, \ref{asp: Diagonal} and \ref{asp: mono} hold. Choose the parameters in Algorithm~\ref{algPrediction} the same as those in Theorem~\ref{thm: regret}. Fix a horizon $N > T_{\mathrm{init}}$ and a failure probability $\delta > 0$. Then  with probability at least $1-\delta$, we have 
    $$
    \tilde{\mathcal{R}}_{N} \leq - \mathcal{O}\left(N\right),$$
    holds for all $N\ge M\poly(\log 1/\delta)$
    where $M$ is a constant only related to the system parameter.
\end{corollary}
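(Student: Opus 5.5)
The plan is to combine the decomposition \eqref{eq: modifyregretDecompose} with Theorem~\ref{thm: regret} and Theorem~\ref{thm: improvement} on a common high-probability event, and then show that the negative linear term dominates. We write $\tilde{\mathcal{R}}_N=\mathcal{R}_N+\Delta_N$, where
\[
\Delta_N\triangleq\sum_{k=1}^{N}\|y_k-\hat{y}_k\|^2-\sum_{k=1}^{N}\|y_k-\check{y}_k\|^2 .
\]
We run Theorem~\ref{thm: regret} with failure probability $\delta/2$. Then, uniformly in $N$, we get $\mathcal{R}_N\le M\poly(d,\beta,\log(2/\delta))\log^3N$. The uniformity in $N$ comes from the lemmas, which hold ``for all $N$''. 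Separately, we run the proof of Theorem~\ref{thm: improvement} with failure probability $\delta/2$. A union bound then puts both bounds on one event of probability at least $1-\delta$.

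On that event, the proof of Theorem~\ref{thm: improvement} gives the explicit form
\[
\Delta_N\le -N\big(\trace(\check R)-\trace(\hat R)\big)+c\sqrt N\big(\|\check R\|_F+\|\hat R\|_F\big)\log\frac{2N^2}{\delta}.
\]
Here $\check R=CPC^\tr+R$ and $\hat R=CP^{(d+1)}C^\tr+R$. Assumption~\ref{asp: mono} gives $P^{(d+1)}<P$, as shown in that proof. Since $R\succ0$, this yields the constant gap $\gamma\triangleq\trace(C(P-P^{(d+1)})C^\tr)$. We should confirm that $\gamma>0$. Strict $P>P^{(d+1)}$ alone gives $\gamma>0$ only if $C\neq0$. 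That holds, since otherwise the local filter is trivial and $\hat R=\check R=R$. We also note that $\gamma$ depends only on the system and on $d$.

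Summing the two bounds, on the event we have
\[
\tilde{\mathcal R}_N\le -\gamma N+c_1\sqrt N\log\frac{2N^2}{\delta}+M_1\log^3 N,
\]
with $c_1,M_1$ of order $M\poly(\log(1/\delta))$, after absorbing $d,\beta$ into $M$. It remains to choose $N_0$ with $c_1\sqrt N\log(2N^2/\delta)+M_1\log^3N\le \tfrac{\gamma}{2}N$ for all $N\ge N_0$. Then $\tilde{\mathcal R}_N\le -\tfrac{\gamma}{2}N=-\mathcal O(N)$. Splitting $\log(2N^2/\delta)=\log 2+2\log N+\log(1/\delta)$, we need each of three conditions. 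The first is $\sqrt N\gtrsim (c_1/\gamma)\log(1/\delta)$. The second is $\sqrt N/\log N\gtrsim c_1/\gamma$. The third is $N/\log^3N\gtrsim M_1/\gamma$. Each holds once $N\ge M\poly(\log(1/\delta))$, using the elementary fact that $N\ge C\,a\log^k a$ implies $N/\log^kN\ge a$. We also require $N_0>T_{\mathrm{init}}=\poly(d,\beta,\log(1/\delta))$, which has the same form.

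The main obstacle is making the high-probability statements uniform in $N$ without losing the $\poly(\log(1/\delta))$ threshold. The concentration in Theorem~\ref{thm: improvement} is stated with a $\poly(1/\delta)$ factor. Its proof, however, produces $\log(N^2/\delta)$ via Hanson–Wright plus a union bound with weights $\delta/N^2$. I would invoke this sharper form directly, rather than the theorem's statement, so that the threshold depends only polylogarithmically on $1/\delta$. Assumption~\ref{asp: Diagonal} is needed only to invoke Theorem~\ref{thm: regret}.
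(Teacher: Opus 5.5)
Your proposal is correct and follows the paper's argument: combine \eqref{eq: modifyregretDecompose} with Theorem~\ref{thm: regret} and the uniform-in-$N$ concentration from the proof of Theorem~\ref{thm: improvement} on a common event, and let the $-N\big(\trace(\check R)-\trace(\hat R)\big)$ term dominate. Using the $\log(N^2/\delta)$ form from that proof, rather than the $\poly(1/\delta)$ factor in the theorem's statement, is exactly what gives the claimed $M\poly(\log(1/\delta))$ threshold, which the paper only asserts.

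One caveat: $C\neq 0$ does not follow from the assumptions. Assumption~\ref{asp: Ob-main-text} permits $C=0$ when $A$ is Schur stable, and then $\hat R=\check R=R$ and the claim fails. So $C\neq 0$ is a tacit nondegeneracy condition, which the paper also uses when asserting $\trace(\hat R)<\trace(\check R)$, rather than something that ``holds''.
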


    This corollary highlights the advantage of guaranteeing logarithmic regret in online learning. When additional information is incorporated into local prediction, the resulting fundamental performance improvement scales linearly with $N$. Consequently, if the learning regret grows sublinearly, there exists a sufficiently long horizon $N$, such that the learning-based method outperforms the locally optimal model-based predictor. This result demonstrates the effectiveness of leveraging additional related data: even delayed information can still improve prediction performance. Note that the information loss induced by time delay grows exponentially with the delay length $d$; therefore, the horizon required to observe performance improvement also scales at least exponentially with $d$.

\section{Numerical Experiments}\label{sec: simulation}

We first discuss a parameter tuning method for our cooperative filter algorithm and then provide two numerical experiments to verify the logarithmic regret guarantee of Algorithm~\ref{algPrediction} (Theorem \ref{thm: regret}), and performance improvement over the local optimal predictor (Theorem~\ref{thm: improvement}). , 

\subsection{Parameter tuning in \texttt{co-Filter}}\label{subsec: paramterTuning}

In Theorem~\ref{thm: regret}, the hyperparameter $\beta$ determines the effective look-back window in each epoch and must be carefully tuned for practical implementation. While  Theorem~\ref{thm: regret} primarily provides theoretical guarantees, it is equally important to discuss how to select hyperparameters in real-world scenarios. To guide the choice of 
$\beta$ in practice, we propose an \textit{Ensemble-Based Selection Method}. Specifically,
we run $M$ parallel predictors, each using a different parameter $\beta^{(i)}$, but sharing the same observed data stream $Y_{0:k}\bigcup Y_{0:k-d}^e$. Denote the predictions by $\tilde{y}_{k+1}^{(i)}$, for $i = 1, \dots, M$. At each time step $k$, the final prediction $\tilde{y}_k$ is chosen from the predictor with the lowest accumulated prediction error:
\vspace{-1mm}
\[
\tilde{y}_{k+1} = \tilde{y}_{k+1}^{(i_0)}, \;
\text{where} \; i_0 = \arg\min_{i=1,\dots,M} \sum_{l=T_{\text{init}}}^{k} \left\| y_{l} - \tilde{y}_{l}^{(i)} \right\|^2. 
\]
As long as one parameter $\beta^{(l_0)}$ satisfies the conditions in Theorem~\ref{thm: regret}, then the prediction series $\tilde{y}_{k+1}$ achieves logarithmic regret with respect to the model-based optimal cooperative predictor.

\subsection{Experimental results}

\begin{figure}[t]
    \centering
    \subfigure[]{\includegraphics[width=0.4\textwidth]{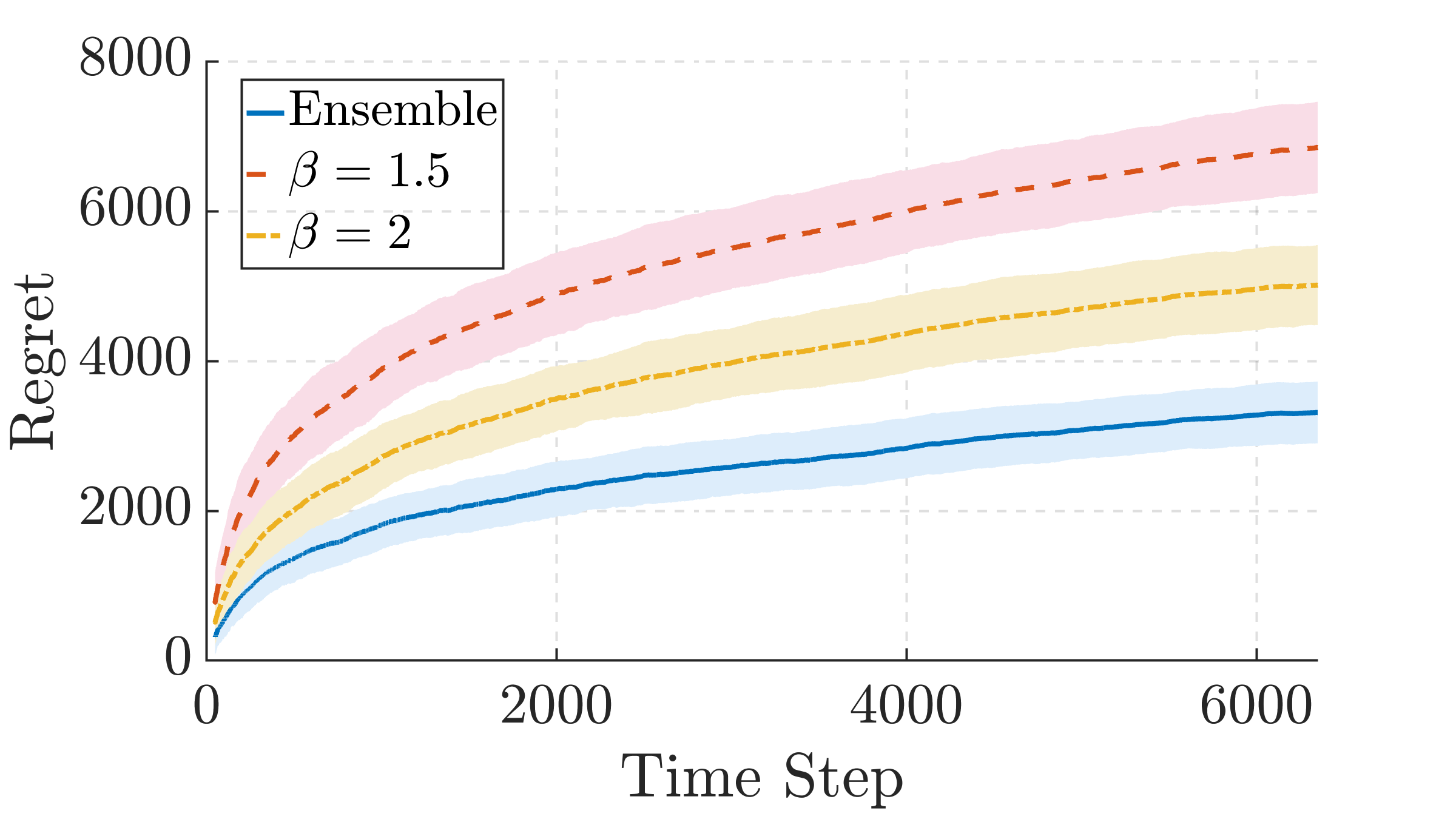}
\label{figEnsemble}}
\hspace{6mm}
     \subfigure[]
{\includegraphics[width=0.4\textwidth]{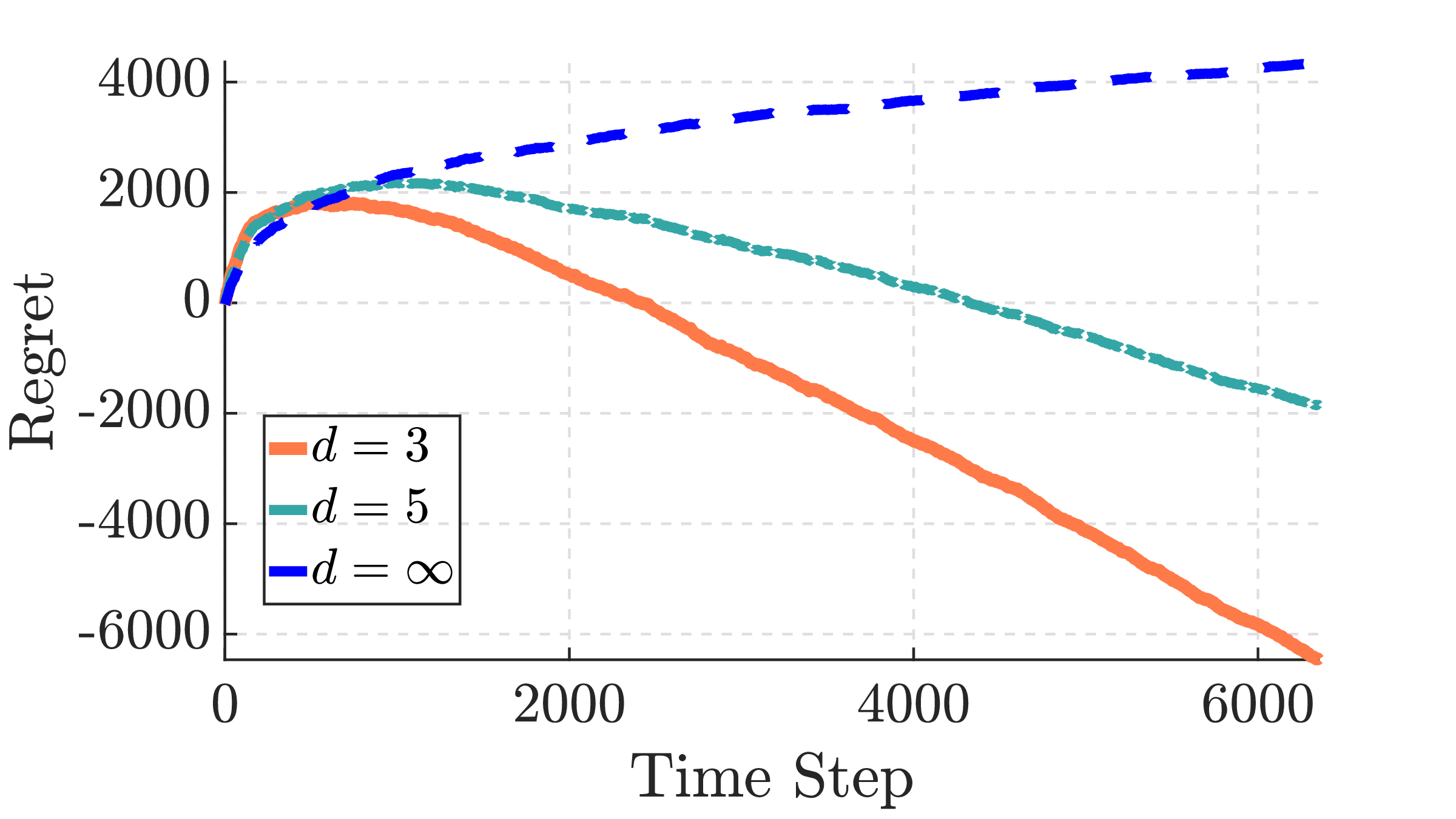}
\label{figMultidelay}}
    \caption{Logarithmic regret in Experiment 1. (a) Regret performance of the {\it Ensemble-based Method} in Section~\ref{subsec: paramterTuning} for hyperparameter tuning in  Theorem~\ref{thm: regret}; (b) Performance improvement of our \texttt{co-Filter} in Algorithm~\ref{algPrediction} with different time delays  $d=3,5,\infty$ compared to a standard Kalman filter with only local observations. When $d=\infty$, our \texttt{co-Filter} is reduced to the online algorithm in \cite{tsiamis9894660}; but with a finite time delay $d$, our \texttt{co-Filter} has a better performance than \cite{tsiamis9894660} and achieves \textit{negative} regret against the standard Kalman filter with only local observations, which is a weaker benchmark.   } 
\end{figure}

{\bf Experiment 1: a consensus-type system.} In our first experiment, the system matrix $A$ in \eqref{eq: LinearSystem} is a randomly generated $n\times n$ stochastic matrix, i.e., the summation of each row of $A$ equals 1. In this numerical experiment, we choose $n=10$. This type of system appears in many consensus-type applications, such as the formation of multi-robot swarms \cite{fax2004information}, brain networks \cite{gu2015controllability}, and vehicle platooning \cite{zheng2015stability}. The covariance matrix $Q$ for the process noise $w_k$ is also randomly generated.
The other parameters are set as $C=\boldsymbol{e}_1,\, C^{\mathrm{e}}=\boldsymbol{e}_2,\, R=R_e=0.01,$ 
where $\boldsymbol{e}_i \in\mathbb{R}^{1\times n}$ is a vector with the $i$-th element being 1 and all other elements being 0.  
The parameters in Algorithm~\ref{algPrediction} are set as
$T_{\text{init}}=50,\; \lambda=1,\; N_E = 7.$
To verify the effectiveness of the {\it Ensemble-based Selection Method}, we here simultaneously choose multiple $\beta$ for online prediction. The specific value of $\beta$ is chosen as
$\beta=\left[1,1.25,1.5,1.75,2,2.25,2.5\right].$ In the first part, the delay step $d$ is set to be $d=1.$ 
Figure~\ref{figEnsemble} illustrates the regret performance of our \texttt{co-Filter} in Algorithm~\ref{algPrediction}. As expected in Theorem~\ref{thm: regret}, the results in Figure~\ref{figEnsemble} verify the logarithmic regret against the optimal delayed predictor in Theorem~\ref{theorem:optimal-delayed-filter}, where the shaded area represents 1-$\sigma$ deviation. Furthermore, as long as there exists one $\beta^{(i)}$ such that the corresponding regret is logarithmic to $N$, then the {\it Ensemble-based Method} can guarantee the logarithmic regret of Algorithm~\ref{algPrediction}. With the designed selection strategy, the performance of the {\it Ensemble-based Method} outperforms that of other choices of $\beta$.

\begin{figure*}[t]
    \centering
    \subfigure[Real \& Predicted Trajectory.]{\includegraphics[width=0.29\textwidth]{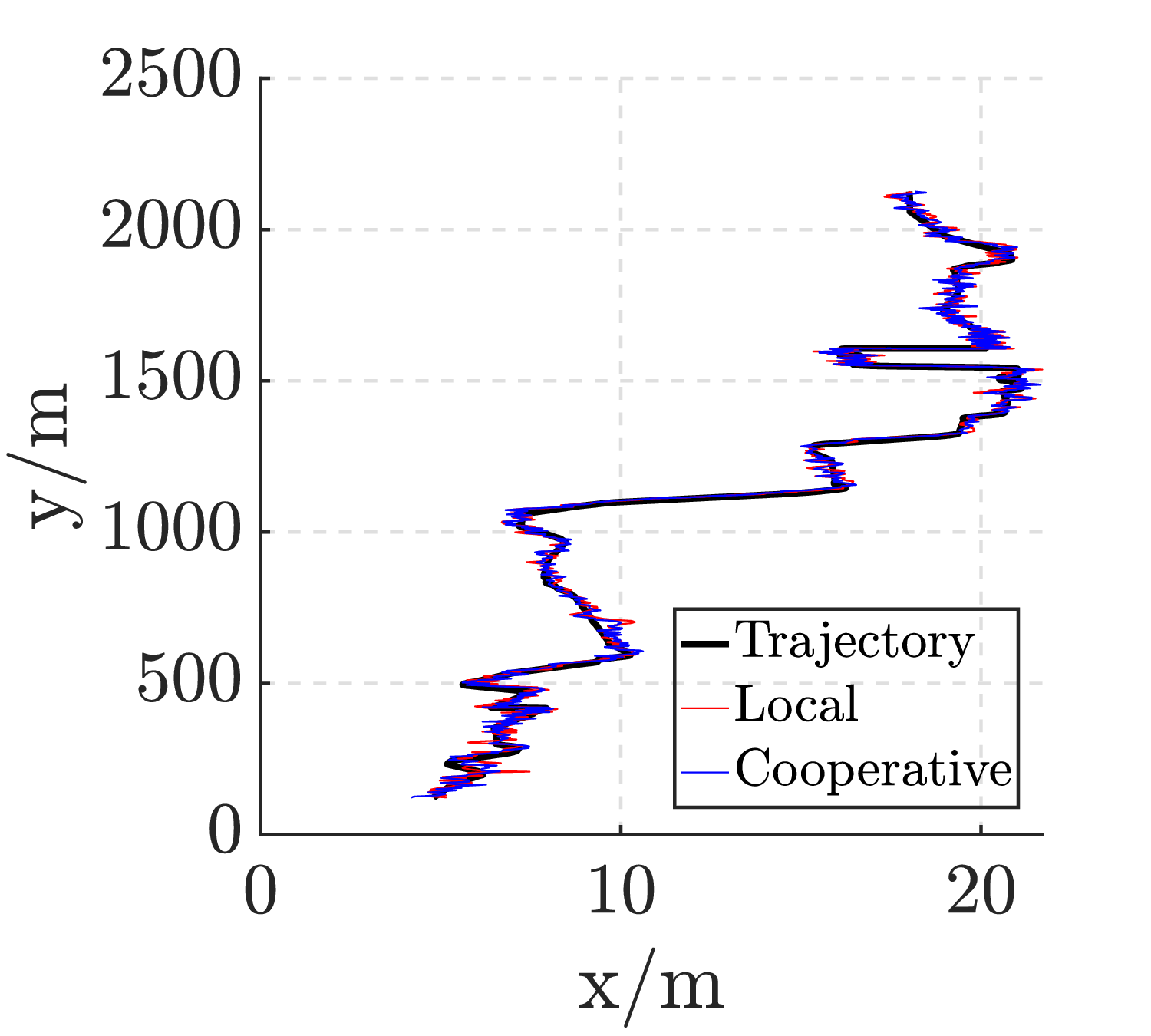}
    \label{figTrajectory}}  \hspace{1mm}
    \subfigure[Comparison of Different  Delay.]{\includegraphics[width=0.29\textwidth]{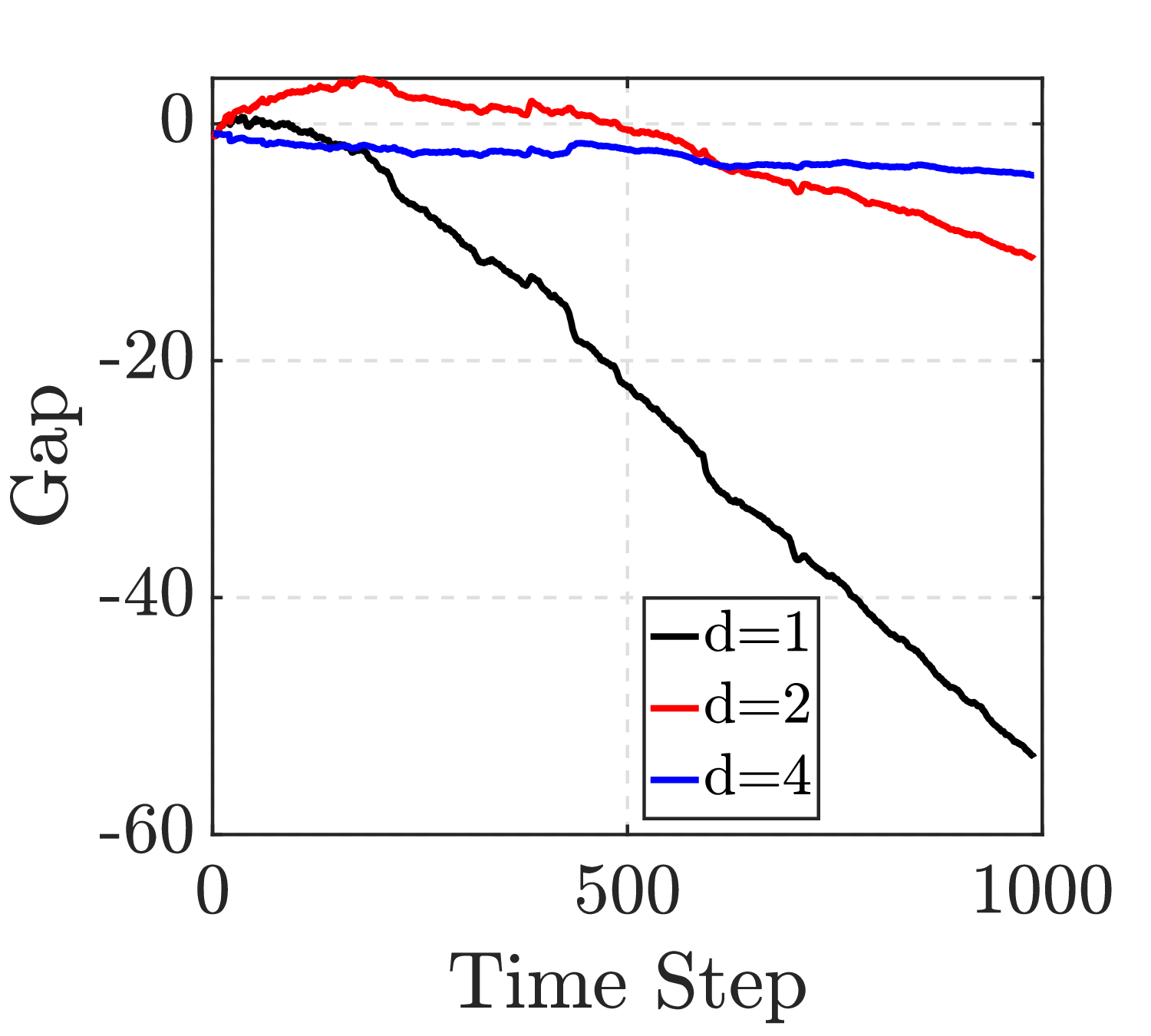}
    \label{figPerformGap}
    } 
    \hspace{1mm}
    \subfigure[Comparison of Different $\beta$.]{\includegraphics[width=0.29\textwidth]{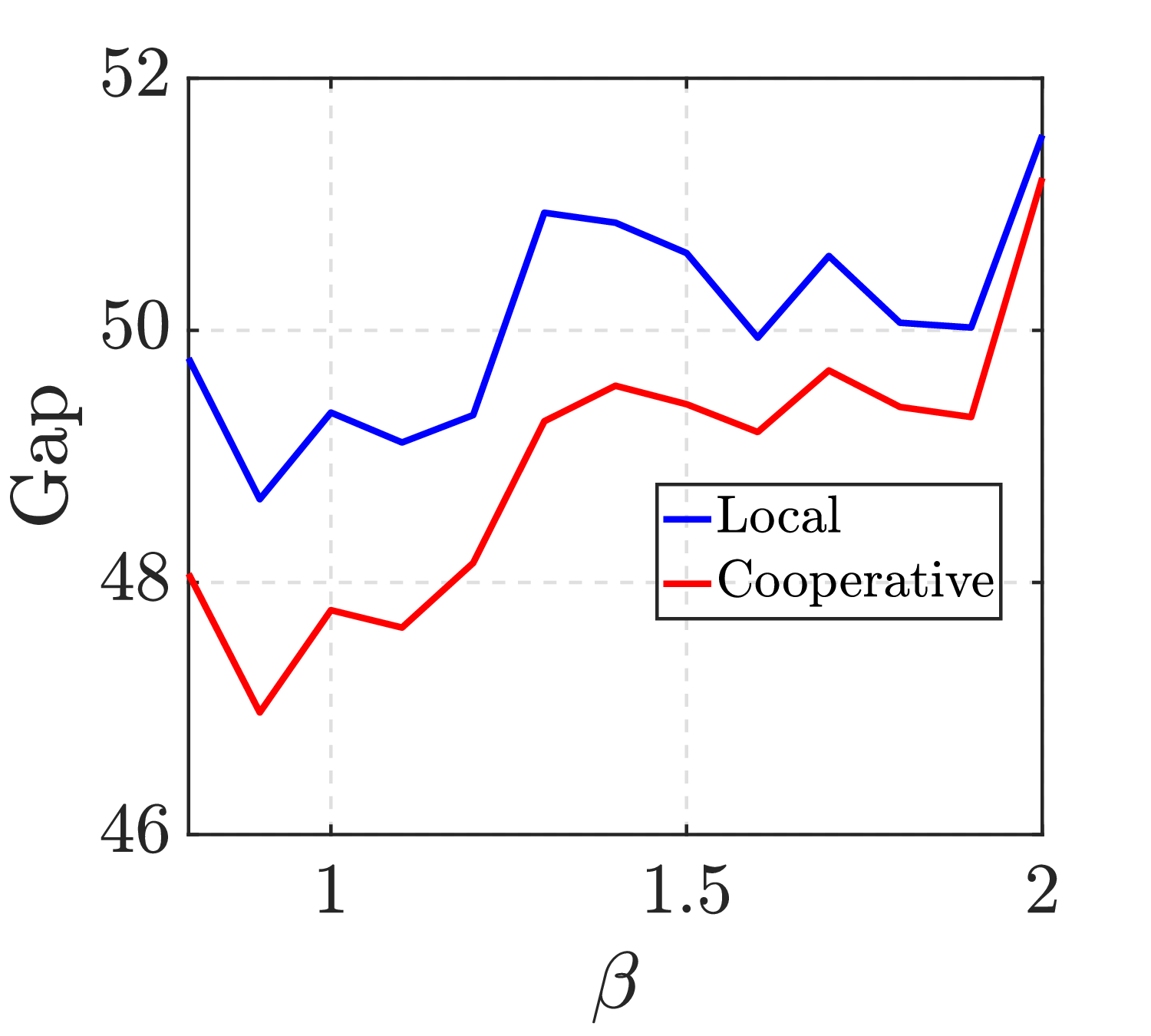}
    \label{figMultiBeta}
    }
    \caption{Numerical experiment with real-life traffic trajectory data.
    }
    
    \label{figRealLife}

    \vspace{-3mm}
\end{figure*}

We then evaluate the {\it performance improvement} using our \texttt{co-Filter} with different time delays $d$. Specifically, we set $\beta=2$ and consider $d=3$ and $d=5$. The results in Figure~\ref{figMultidelay} illustrate the regret between our online predictor that uses delayed external observations and the model-based Kalman predictor that relies solely on local observations. 
In Figure~\ref{figMultidelay}, the orange and green lines represent the performance gap for delay steps of 3 and 5, respectively, while the blue line with $d = \infty$ shows the gap between the online predictor using only local observations and the corresponding local Kalman predictor. This last case aligns with \cite[ Theorem 1]{tsiamis9894660}, which is equivalent to the special case with $d=\infty$ in our analysis. Using delayed external~observations, our \texttt{co-Filter} can outperform the optimal local Kalman predictor, with performance improvement growing linearly with horizon length $N$. As the delay $d$ increases, this benefit diminishes, and the cooperative predictor's performance approaches that of a standard local online predictor.

{\bf Experiment 2: Practical vehicle trajectory prediction.} 
In our second experiment, we use real-world data to evaluate the performance of model-free cooperative prediction. The data is collected from \cite{NGSIMdata}, which contains real-world trajectory data from vehicles with unknown dynamic models. As we cannot directly obtain the sensor data, we use the following method to generate the raw observations, i.e.,
\[
y_k=x_k+v_k,\quad y_k^e=x_k+v_k^e,
\]
where $x_k$ is the trajectory of the vehicle at each time step, and the noise vectors satisfy $v_k\sim \mathcal{N}(0,0.1I), \;v_k^e\sim \mathcal{N}(0,0.1I)$, respectively. The parameter $\beta$ is set to be 1 for the first two experiments, and $T_{\textnormal{init}}$ is set to be 20 for all experiments. 

Our experiment results with traffic data are shown in Figure~\ref{figRealLife}. Specifically, Figure~\ref{figTrajectory} shows the real trajectory and the predicted trajectory with local observations and partially delayed observations. Here, the delay step $d$ is set to be 1. Figure~\ref{figPerformGap} presents the performance improvement of the cooperative method over the local prediction method across varying delay steps. It confirms that cooperative prediction outperforms local prediction, even in the presence of a time delay in real-world data. However, the benefit also diminishes as the delay step increases. In Figure~\ref{figMultiBeta}, we also compare the performance of local prediction and cooperative prediction under different hyperparameter $\beta$ with $d=1$. The cooperative method consistently outperforms the local method across all tested values of $\beta$. Overall, the above experiments verify the effectiveness of utilizing multiple source information for model-free cooperative prediction in real-world traffic scenarios. 

\section{Conclusion}\label{sec: conclusion}

\vspace{-1mm}

In this paper, we addressed the problem of online cooperative prediction for an unknown, non-explosive linear stochastic system. 
To effectively utilize multi-source and partially delayed observations, we have derived an optimal autoregressive model that relates future observations to past delayed ones via conditional distribution theory. We have also developed an online learning algorithm based on this autoregressive model, achieving logarithmic regret with respect to the optimal model-based cooperative predictor. 
Using refined proof techniques, we established a sharper regret bound of $O(\log^3 N)$, which also holds for marginally stable systems. 
Our results highlight the advantage of incorporating multi-source correlated observations into cooperative prediction schemes. Theoretical analysis and numerical experiments demonstrate that with multi-source delayed information, our online cooperative predictor can outperform the optimal model-based predictors that rely solely on local observations.  
Future directions include extending the auto-regressive modeling approach for multi-source data to nonlinear systems, especially for special classes of structured nonlinear models \cite{shang2024willems,shang2026existence}. 

\bibliographystyle{plain}        
\bibliography{autosam}           

\appendix
\noindent 

\section{Proof of Proposition \ref{theorem:optimal-delayed-filter}}\label{Appen: timevarying}
 In this proof, we aim to first provide the expression of the optimal predictor $\hat{y}_{k+1}$ with local information $Y_{0:k}$ and delayed information $Y_{0:k-d}^{\mathrm{e}}$ in the general time-varying form. Then, with the steady-state form of the Kalman filter, we can provide an optimal steady-state predictor with delayed information. From the theory of optimal filtering \cite{kalmanfilter, Andersonoptimal}, the minimal-mean-square-error (MMSE) predictor $\hat{y}_{k+1}$ is equivalent to the expectation of $y_{k+1}$ conditioned on past observations $Y_{0:k}$ and $Y_{0:k-d}^{\mathrm{e}}$, i.e.,
\[
\hat{y}_{k+1}\!=\!\mathbb{E}\left\{y_{k+1}\!\mid\! Y_{0:k},\!Y_{0:k-d}^{\mathrm{e}}\right\}\!=\!C\,\mathbb{E}\!\left\{x_{k+1}\!\mid\! Y_{0:k},Y_{0:k-d}^{\mathrm{e}}\right\},
\]
where the last equality holds for the zero mean of noise $v_{k+1}$. Consider the instrumental variable $\bar{x}_{k+1}\triangleq \mathbb{E}\left\{x_{k+1}\mid Y_{0:k}^{c}\right\} $, i.e., Kalman predictor with non-delayed information $Y_{0:k}^{c}$, then with the Markov property of the linear system, we have
\[
\begin{aligned}
    \hat{x}_{k+1}\triangleq&\mathbb{E}\left\{x_{k+1}\mid Y_{0:k},Y_{0:k-d}^{\mathrm{e}}\right\}
    \\=&\mathbb{E}\left\{x_{k+1}\mid y_{k},\dots,y_{k-d+1},Y_{0:k-d}^{\mathrm{e}}\right\}\\
    =&\mathbb{E}\left\{x_{k+1}\mid y_{k},\dots,y_{k-d+1},\bar{x}_{k-d+1}\right\}.
\end{aligned}
\]
With the conditional distribution theory in \cite{Andersonoptimal}, we have that the optimal predictor $\bar{x}_{k}$ with collected information $Y^c_{0:k-1}$ satisfies the following recursion 
$$
\bar{x}_{k+1}=A\bar{x}_{k}+\bar{L}_{k}\left(y_{k}^c-\bar{C}\bar{x}_{k}\right),$$
where
$\bar{L}_k=A\bar{P}_k\bar{C}^\tr\left(\bar{C}\bar{P}_k\bar{C}^\tr+\bar{R}\right)^{-1}$ and $P_k$ satisfies the following recursion
\begin{equation}\label{eq: RicRecursionApp}
  \bar{P}_k=\Ric\big(A,\bar{C},Q,\bar{R},\bar{P}_{k-1} \big).
\end{equation}
We make the following notation for analysis
\[
\tilde{x}_{k|l}=\mathbb{E}\left\{x_{k+l}\mid y_{k+l-1},\dots,y_{k},\bar{x}_{k}\right\}.
\]
The meaning of $\tilde{x}_{k|l}$ is the conditional mean of $x_{k+l}$ with $\bar{x}_{k}$ and $l$-step local $y_{k},\dots,y_{k+l-1}$observations ahead of time step $k$ and we also have 
$$\hat{x}_{k+1}=\tilde{x}_{k-d+1|d}.$$
For deriving the closed-form expression of $\tilde{x}_{k|l}$, we can directly use the local Kalman filter with initial prediction set as $\tilde{x}_{k|0}=\bar{x}_{k}$. Then update $\tilde{x}_{k|l+1}$ with $\tilde{x}_{k|l}$ and $y_{k+l}$ recursively in the form of local Kalman filter. The specific recursion form can be formulated as
\[
\tilde{x}_{k|1}=(A-L_{k}^{(1)}C)\bar{x}_{k}+L_{k}^{(1)}y_{k},
\]
and
\[
\tilde{x}_{k|l+1}=(A-L_{k}^{(l+1)}C)\tilde{x}_{k|l}+L_{k}^{(l+1)}y_{k+l},
\]
where $L_{k}^{(l)}=AP_{k}^{(l)}C^\tr\left(CP_{k}^{(l)}C^\tr+R\right)^{-1}$ and the $P_{k}^{(l+1)}$ means the prediction error covariance matrix of $\tilde{x}_{k|l}$, i.e., $P_{k}^{(l+1)}=\mathbb{E}\left\{(x_{k+l}-\tilde{x}_{k|l})(x_{k+l}-\tilde{x}_{k|l})^\tr\right\}$. 
It can be computed by the following recursion
\[
P_{k}^{(l+1)}=\Ric(A,C,Q,R,P_k^{(l)}),\;\; P_{k}^{(1)}=\bar{P}_k.
\]
With $\hat{x}_{k+1}=\tilde{x}_{k-d+1|d}$, then we can obtain
\begin{equation}\label{eq: delayed-predictor}
    \hat{x}_{k+1}=  \Phi_{k-d+1,d} \bar{x}_{k-d+1} + \sum_{l=1}^{d}\Phi_{k-d+1,d-l}L_{k-d+1}^{(l)}\;y_{k-d+l},
\end{equation}
where the matrices $\Phi_{.,.}$ are constructed by the following law:
\[
\begin{aligned}
   \Phi_{k-d+1,d-l} &= \prod_{i=l+1}^d (A-L_{k-d+1}^{(i)}C),\quad \Phi_{k-d+1,0}=I. 
\end{aligned}
\]
To sum up, the construction of the optimal predictor $\hat{x}_{k+1}$ with $Y_{0:k}\bigcup Y_{0:k-d}^e$ can be summarized as two steps:
\begin{enumerate}
    \item As we can obtain the global information $Y_{0:k-d}^c$, then for predicting $y_{k+1}$, we can first obtain the centralized optimal predictor for time step $k-d+1$, i.e., $\bar{x}_{k-d+1}$.
    \item From time step  $k-d+1\leq l < k+1$, at each time step we perform the optimal prediction only with $y_{l}$ to obtain $\hat{x}_{k+1}$. 
\end{enumerate}
According to \cite{Andersonoptimal}[Section 4], the recursion $\bar{P}_k$ exponentially converge to the steady-state, i.e., the solution $\bar{P}$ to the following discrete time algebraic Riccati equation 
$$
\bar{P}=A\bar{P}A^\tr+Q-A\bar{P}\bar{C}^\tr\left(\bar{C}\bar{P}\bar{C}^\tr+\bar{R}\right)^{-1}\bar{C}\bar{P}A^\tr.
$$
and global optimal feedback gain $\bar{L}_k$ converges to $\bar{L}=AP\bar{C}^{\tr}\left(\bar{C}\bar{P}\bar{C}^\tr+\bar{R}\right)^{-1}.$
Moreover, we have $ \Phi_{d-l} = \prod_{i=l+1}^d (A-L^{(i)}C)$, with $\Phi_{0}=I$, and $ 
   L^{(l)} \!=\! AP^{(l)}C^\tr\!\left(CP^{(l)}C^\tr\!+\!R\right)^{-1}$ 
with the recursion of $P^{(l)}$ constructed as
\begin{equation}\label{eq: steadyRecursion2}
    P^{(l+1)}=\Ric\big(A,C,Q,R,P^{(l)}\big),\quad P^{(1)}=\bar{P}.
\end{equation}

In the following discussion, to simplify notation, we will focus on the steady-state optimal delayed predictor and how to learn it.

\section{Statistical Property of Linear System and Gram Matrix}
In this section, we introduce some basic properties of the linear stochastic system. Consider the following linear system model
\begin{equation}\label{linearsystem}
\begin{aligned}
x_{k+1}&=Ax_k + \omega_k,\\
y_{k}&=Cx_k + v_{k},\quad k = 0,1,2,\ldots\\
y^{\mathrm{e}}_k&=C^{\mathrm{e}}x_k+v^{\mathrm{e}}_k,
\end{aligned}
\end{equation}
where the meaning of the notations is the same as that presented before. 
For the theoretical analysis of the regret, we present the following basic properties of the linear stochastic system. 

Before we move on, we introduce the following basic Lemma that will be frequently used in the following proofs

\subsection{Basic statistical properties of  the random variables}
We have the following Lemma.
\begin{lemma}\label{lm: HWinequality}
(Hanson-Wright inequality \cite{vershynin2018high}) Let  $X=\left(X_{1}, \ldots, X_{n}\right) \in \mathbb{R}^{n}$  be a random vector with independent, mean-zero, sub-Gaussian coordinates. Let  $\mathcal{A}$  be an  $n \times n$  matrix. Then, for every  $t \geq 0$, we have
    $$
    \begin{aligned}
        \mathbb{P}&\left\{\left|X^{{\tr}} \mathcal{A} X-\mathbb{E} X^{\tr} \mathcal{A} X\right| \geq t\right\} \\
        &\qquad\qquad\leq 2 \exp \left[-c \min \left(\frac{t^{2}}{K^{4}\|\mathcal{A}\|_{F}^{2}}, \frac{t}{K^{2}\|\mathcal{A}\|_2}\right)\right],
    \end{aligned}
    $$
    where $K=\max _{i}\left\|X_{i}\right\|_{\psi_{2}}.$
\end{lemma}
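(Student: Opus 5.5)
The last statement is Lemma~\ref{lm: HWinequality}, the Hanson--Wright inequality, which the paper quotes from \cite{vershynin2018high}. My plan follows the standard proof: treat the diagonal and off-diagonal parts of the quadratic form separately.

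\textbf{Step 1 (reduction).} Normalize so that $K=1$ by replacing $X$ with $X/K$. Write
\[
X^\tr\mathcal{A}X-\mathbb{E}X^\tr\mathcal{A}X=\sum_i a_{ii}(X_i^2-\mathbb{E}X_i^2)+\sum_{i\neq j}a_{ij}X_iX_j .
\]
It then suffices to prove the two-sided tail bound at level $t/2$ for each sum.

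\textbf{Step 2 (diagonal part).} The variables $X_i^2-\mathbb{E}X_i^2$ are independent, mean zero and sub-exponential, with $\|\cdot\|_{\psi_1}\lesssim 1$. Bernstein's inequality for sub-exponential sums gives a tail bound of
\[
2\exp\!\Bigl[-c\min\Bigl(t^2/\textstyle\sum_i a_{ii}^2,\; t/\max_i|a_{ii}|\Bigr)\Bigr].
\]
Since $\sum_i a_{ii}^2\le\|\mathcal{A}\|_F^2$ and $\max_i|a_{ii}|\le\|\mathcal{A}\|_2$, this has the required form.

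\textbf{Step 3 (off-diagonal part).} This is the main obstacle. Let $S=\sum_{i\neq j}a_{ij}X_iX_j$. The aim is the moment generating function bound
\[
\mathbb{E}e^{\lambda S}\le \exp\bigl(C\lambda^2\|\mathcal{A}\|_F^2\bigr)\qquad\text{for }|\lambda|\le c/\|\mathcal{A}\|_2 .
\]
I would prove it in three moves.
\begin{enumerate}
\item \emph{Decoupling.} Bound $\mathbb{E}e^{\lambda S}\le\mathbb{E}e^{4\lambda X^\tr\mathcal{A}X'}$, where $X'$ is an independent copy of $X$. This uses random index selectors $\delta_i\sim\mathrm{Bernoulli}(1/2)$ and Jensen's inequality.
\item \emph{Gaussian replacement.} Conditioning on $X'$, the quantity $X^\tr\mathcal{A}X'$ is sub-Gaussian in $X$ with parameter $\|\mathcal{A}X'\|_2$. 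Its MGF is therefore dominated by that of $g^\tr\mathcal{A}X'$ with $g\sim N(0,I)$. Repeating the argument swaps $X'$ for an independent Gaussian $g'$.
\item \emph{Gaussian computation.} Diagonalize using the singular values $s_i$ of $\mathcal{A}$. Then $\mathbb{E}e^{\mu g^\tr\mathcal{A}g'}=\prod_i(1-\mu^2s_i^2)^{-1/2}$. For $\mu s_i\le 1/2$ this is at most $\exp\bigl(C\mu^2\textstyle\sum_i s_i^2\bigr)=\exp\bigl(C\mu^2\|\mathcal{A}\|_F^2\bigr)$.
\end{enumerate}

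\textbf{Step 4 (conclusion).} Apply Markov's inequality to $e^{\lambda S}$ and optimize over $\lambda$ in the allowed range $|\lambda|\le c/\|\mathcal{A}\|_2$. This gives the tail bound
\[
\exp\!\Bigl[-c\min\Bigl(t^2/\|\mathcal{A}\|_F^2,\; t/\|\mathcal{A}\|_2\Bigr)\Bigr].
\]
Apply the same argument to $-S$ for the other tail. Combining with Step 2 through a union bound, then restoring $K$ by rescaling, produces the $K^4$ and $K^2$ factors in the statement.
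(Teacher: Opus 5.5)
Your proposal is correct and is the standard proof from the cited source \cite{vershynin2018high}: Bernstein's inequality handles the diagonal part, and decoupling, Gaussian comparison and the explicit Gaussian-chaos MGF followed by a Chernoff bound handle the off-diagonal part. The paper itself gives no proof and simply imports the result from that reference; the only detail to note is that your final union bound gives a prefactor $4$ rather than $2$, which is absorbed by halving the constant $c$.
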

In the above lemma, the norm $\left\|\cdot\right\|_{\psi_2}$ is defined as
\[
\|X\|_{\psi_{2}}=\inf \left\{c>0: \mathrm{E}\left[\exp \left(\frac{X^{2}}{c^{2}}\right)\right] \leq 2\right\}.
\]
In the following proof, we will mainly use the $\psi_2$ norm of a standard Gaussian random variable, which is a constant around $\sqrt{\frac{8}{3}}$.

\begin{lemma}\label{lm: Deviation}
    For any given $\delta\in (0,1)$ and for any Gaussian random vector sequence $X_k$ satisfies $X_k\sim \mathcal{N}(0,I_n)$, we define event $\mathcal{E}_{X}$ as
    \[\mathcal{E}_{X}\triangleq\left\{\left\|X_k\right\|_{2}^2 \leq 2n+3\log \frac{k^2}{\delta}, \;\; \forall k\ge 1\right\},
    \]
    then the event $\mathcal{E}_X$ holds with probability at least $1-\frac{\pi^2 \delta}{6}$.
\end{lemma}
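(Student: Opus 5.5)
The plan is to reduce the statement to a chi-square tail bound for each fixed $k$ and then take a union bound over $k$ with failure probability $\delta/k^2$, using $\sum_{k\ge1}k^{-2}=\pi^2/6$. Fix $k\ge 1$ and write $\|X_k\|_2^2=\sum_{i=1}^n X_{k,i}^2$. This is a $\chi^2_n$ random variable, so it is a quadratic form $X_k^\tr I_n X_k$ with mean $n$. The independence or dependence across different $k$ is irrelevant, since only the marginal law of each $X_k$ enters the union bound.

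For the per-$k$ tail, I would not use Lemma~\ref{lm: HWinequality}, whose absolute constant $c$ would not produce the explicit constants $2n$ and $3$. Instead I would use a Chernoff bound with the exact moment generating function $\mathbb{E}\, e^{s\|X_k\|_2^2}=(1-2s)^{-n/2}$ for $s<1/2$. Choosing $s=1/3$ gives $\mathbb{E}\,e^{\|X_k\|_2^2/3}=3^{n/2}$. Markov's inequality then yields
\[
\mathbb{P}\Big\{\|X_k\|_2^2> 2n+3\log\tfrac{k^2}{\delta}\Big\}\le 3^{n/2}e^{-2n/3}\,\frac{\delta}{k^2}.
\]
Since $\tfrac12\log 3\approx 0.549<2/3$, the prefactor $3^{n/2}e^{-2n/3}$ is at most $1$ for every $n\ge1$. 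Hence the failure probability at index $k$ is at most $\delta/k^2$. An alternative route is the Laurent--Massart bound $\mathbb{P}\{\chi^2_n\ge n+2\sqrt{nx}+2x\}\le e^{-x}$ with $x=\log(k^2/\delta)$, combined with $2\sqrt{nx}\le n+x$; this gives the threshold $2n+3x$ directly. I would present whichever is cleaner, most likely the MGF computation because it is self-contained.

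The last step is the union bound over $k\ge1$:
\[
\mathbb{P}\{\mathcal{E}_X^c\}\le\sum_{k\ge1}\frac{\delta}{k^2}=\frac{\pi^2\delta}{6}.
\]
So $\mathcal{E}_X$ holds with probability at least $1-\pi^2\delta/6$. The only delicate point is choosing the Chernoff parameter so that the dimension-dependent prefactor is absorbed by the $2n$ slack. The check $3^{1/2}<e^{2/3}$ settles this, so I expect no real obstacle.
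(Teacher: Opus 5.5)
Your proof is correct and has the same structure as the paper's: a per-$k$ $\chi^2_n$ tail bound at level $\delta/k^2$, then a union bound using $\sum_{k\ge1}k^{-2}=\pi^2/6$. The paper uses exactly your ``alternative route'' (Laurent--Massart with $t^2=\log(k^2/\delta)$ and $2\sqrt{n}\,t\le n+t^2$). Your primary Chernoff step with $s=1/3$ is an equally valid, self-contained substitute, because $3^{n/2}e^{-2n/3}\le 1$.
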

\begin{proof}
     From Lemma 1 in \cite{laurent2000adaptive}, 
     for any Gaussian random vector $X_k\sim \mathcal{N}(0,I_n).$ For each $k\ge 1$, we have
     $$\mathbb{P}\left(\left\|X_k\right\|_2^2\ge n+2\sqrt{n}t+2t^2\right)\leq e^{-t^2}.$$
     Then the following statement also holds
     $$
     \mathbb{P}\left(\left\|X_k\right\|_2^2\ge 2n+3t^2\right)\leq e^{-t^2}.
     $$
     For each time step $k$, take $t=\sqrt{\log\frac{k^2}{\delta}}$, then we have$$\mathbb{P}\left(\left\|X_k\right\|_2^2\leq 2n+3\log\frac{k^2}{\delta}\right)\ge 1- \frac{\delta}{k^2}.
     $$

     Take a union bound over all $k$, we have
     $$
     \mathbb{P}\Big(\!\left\|X_k\right\|_2^2\leq 2n+3\log\frac{k^2}{\delta}, \forall k\in\mathbb{N}\!\Big)\!\ge 1- \sum_{k=1}^{\infty}\frac{\delta}{k^2}\!=\!1-\frac{\pi^2\delta}{6},
     $$
     where the inequality utilizes the facts $\mathbb{P}(\mathcal{E}_1 \cap \mathcal{E}_2)=\mathbb{P}(\mathcal{E}_1)+\mathbb{P}(\mathcal{E}_2)-\mathbb{P}(\mathcal{E}_1 \cup \mathcal{E}_2)$ and $\mathbb{P}(\mathcal{E}_1 \cup \mathcal{E}_2)\leq 1$.
     \end{proof}
The above lemma illustrates that for a zero-mean Gaussian random variable, we can find an increasing upper bound with the same order as $O(\log k)$ that uniformly holds for the deviation between the random variable and its corresponding variance, with high probability $1-\delta$. This lemma will be repeatedly used in the analysis of the regret bound.

\subsection{Upper bound of $\bar{Z}_{k,p+d}\bar{Z}_{k,p+d}^{\tr}$ with high probability}\label{upperbound}
The following lemma provides an upper bound for $\bar{Z}_{k,p+d}\bar{Z}_{k,p+d}^{\tr},\forall k\ge T_{\text{init}}$.
\vspace{6pt}

\begin{lemma}\label{lm: upperbound}
    For a given probability $\delta$, if $p$ satisfies $p\leq \beta \log k$ for each  $k\ge T_{\textnormal{init}}$, we have 
    $$\mathbb{P}\left\{\bar{Z}_{k}\bar{Z}_{k}^{\tr}\!\leq\! \operatorname{poly}\!\left(d,\beta,\log\frac{1}{\delta}\!\right)k^{2\kappa+1} I,\forall k\ge T_{\textnormal{init}}\right\}\ge 1-\delta,$$ where $\operatorname{poly}(\cdot)$ means polynomial operator of the elements.
\end{lemma}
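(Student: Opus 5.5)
The plan is to reduce the matrix inequality to a scalar bound on $\|\bar{Z}_{k}\|_2^2$. Since $\bar{Z}_k\bar{Z}_k^\tr\preceq \|\bar{Z}_k\|_2^2 I$, it suffices to show that, uniformly over $k\ge T_{\textnormal{init}}$ and with probability at least $1-\delta$, $\|\bar{Z}_k\|_2^2\le \poly(d,\beta,\log\frac{1}{\delta})\,k^{2\kappa+1}$. By construction, $\bar{Z}_k=Z_{k,p+d}$ stacks $d$ local outputs $y_{k-1},\dots,y_{k-d}$ and $p$ centralized outputs $y^{\mathrm c}_{k-d-1},\dots,y^{\mathrm c}_{k-d-p}$. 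Hence
\[
\|\bar{Z}_k\|_2^2\le (p+d)\max_{t\le k}\|y^{\mathrm c}_t\|_2^2 .
\]
Since $p\le\beta\log k$, the prefactor $p+d$ is at most $(d+\beta\log k)$, which is absorbed into $\poly(d,\beta)\,k$-type factors. So the task reduces to a uniform-in-time high-probability bound $\|y^{\mathrm c}_t\|_2^2\le \poly(\log\frac1\delta)\,t^{2\kappa}\log t$, which leaves an extra factor of $k$ to spare.

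To get this bound, write each output as a Gaussian vector. Take $x_0$ Gaussian or zero. Then
\[
y^{\mathrm c}_t=\bar{C}A^t x_0+\sum_{s<t}\bar{C}A^{t-1-s}\omega_s+v^{\mathrm c}_t ,
\]
a zero-mean Gaussian with covariance $\Sigma_t=\bar{C}\big(A^t\Sigma_0A^{t\tr}+\sum_{j<t}A^jQA^{j\tr}\big)\bar{C}^\tr+\bar{R}$. Because $\rho(A)\le1$ and the largest Jordan block at modulus one has order $\kappa$, the Jordan form gives $\|A^j\|_2\le M j^{\kappa-1}$. Summing, $\|\Sigma_t\|_2\le M t^{2\kappa-1}$, with $M$ depending only on system parameters.

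Now normalize: $X_t=\Sigma_t^{-1/2}y^{\mathrm c}_t\sim\mathcal N(0,I)$. Apply Lemma~\ref{lm: Deviation} with $\delta$ rescaled by $6/\pi^2$. On an event of probability at least $1-\delta$, for all $t\ge1$ simultaneously,
\[
\|y^{\mathrm c}_t\|_2^2\le \|\Sigma_t\|_2\|X_t\|_2^2\le M t^{2\kappa-1}\Big(2(m+\tilde m)+3\log\frac{t^2}{\delta}\Big).
\]
The union over $t$ is already built into that lemma, so no separate union over $k$ is needed. On this event, taking the maximum over $t\le k$ and multiplying by $p+d$ gives
\[
\|\bar Z_k\|_2^2\le M(d+\beta\log k)\,k^{2\kappa-1}\Big(c+\log k+\log\frac1\delta\Big).
\]
Finally, use $\log k\le k$ twice to bound this by $\poly(d,\beta,\log\frac1\delta)\,k^{2\kappa+1}$, uniformly for all $k\ge T_{\textnormal{init}}$.

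The only step requiring care is the polynomial growth bound $\|A^j\|\le Mj^{\kappa-1}$ under marginal stability. The concern is that Jordan blocks at eigenvalues of modulus one other than $1$ (for example, rotations) could have larger order than $\kappa$. If so, I would define $\kappa$ as the largest Jordan block order among all unit-modulus eigenvalues. The rest is a direct Gaussian concentration argument, and the exponent $2\kappa+1$ leaves enough slack that no sharper treatment of the covariance sum is needed.
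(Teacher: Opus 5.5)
There is a genuine, though easily repaired, gap: you have misread what $\bar Z_k$ is. In the paper, $\bar Z_k=\bar Z_{k,p+d}$ is the data matrix whose columns are the regressors $Z_{p+d,p+d},\dots,Z_{k,p+d}$. Hence $\bar Z_k\bar Z_k^\tr=\sum_{l=p+d}^{k}Z_{l,p+d}Z_{l,p+d}^\tr$ and $V_{k,p+d}=\lambda I+\bar Z_k\bar Z_k^\tr$. This is the object needed downstream: the lemma is invoked to bound $\log\det V_{N,p_{N_E}}$ in the proof of Lemma~\ref{lemma:accumulation-factor}. It is also consistent with Lemma~\ref{lm: propersistent}, where $\bar Z_k\bar Z_k^\tr\succeq\frac{\sigma_{\bar R}}{4}kI$ grows linearly in $k$, which a single rank-one term could never satisfy. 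Your argument bounds only $Z_{k,p+d}Z_{k,p+d}^\tr$. Your final step (using $\log k\le k$ twice) then spends exactly the power of $k$ that the summation needs: bounding each of the roughly $k$ summands by $\poly(\cdot)\,k^{2\kappa+1}$ would give $k^{2\kappa+2}$, not the claimed rate. The gap between your per-sample rate $k^{2\kappa-1}$ and the lemma's $k^{2\kappa+1}$ is therefore not two powers of slack. One power of $k$ pays for the sum over samples, and only the other is available to absorb the logarithms.

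The fix keeps your per-sample estimate unchanged; you just sum before absorbing the logs. On your single event from Lemma~\ref{lm: Deviation}, all $\|y^{\mathrm c}_t\|_2^2$ with $t\le k$ are bounded simultaneously. So for every $l\le k$ and the current $p$ you get $\|Z_{l,p+d}\|_2^2\le M(d+\beta\log k)\,k^{2\kappa-1}\big(c+\log k+\log\frac1\delta\big)$. Summing at most $k$ such terms gives $\bar Z_k\bar Z_k^\tr\preceq M(d+\beta\log k)\big(c+\log k+\log\frac1\delta\big)k^{2\kappa}I$. Only then use $(d+\beta\log k)(c+\log k+\log\frac1\delta)\le\poly(d,\beta,\log\frac1\delta)(1+\log k)^2\le 2\,\poly(d,\beta,\log\frac1\delta)\,k$.

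With this correction your route matches the paper's: covariance growth $k^{2\kappa-1}$ from the Jordan form, then Lemma~\ref{lm: Deviation}, summation, and $p\le\beta\log k$. The one difference is the whitening. The paper whitens the whole stacked regressor by $(\Gamma^Z_{k,p+d})^{-1/2}$ in the $p$-dependent dimension $m_{p,d}$, while you whiten each $y^{\mathrm c}_t$ in the fixed dimension $m+\tilde m$. Your version is slightly cleaner, because one event controls the regressors $Z_{l,p+d}$ for every value of $p$ at once, and $p$ changes across epochs. Your remark that $\kappa$ must be the largest Jordan block over all unit-modulus eigenvalues, not only the eigenvalue $1$, is also correct.
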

\begin{proof}
   For the expectation of Gram matrix $Z_{k,p+d}Z_{k,p+d}^{\tr}$, denote $\Gamma_{k,p+d}^Z=\mathbb{E}\left\{Z_{k,p+d}Z_{k,p+d}^{\tr}\right\}$, then we have
$$
\begin{aligned}
    \left\|\Gamma_{k,p+d}^Z\right\|_2^2\leq&\mathbb{E}\left\{Z_{k,p+d}^{\tr}Z_{k,p+d}\right\}\\
    =&\text{trace}\Big(\sum_{i=k-d}^{k-1}\!\mathbb{E}\left\{y_{i}y_{i}^{\tr}\right\}+\!\!\sum_{i=k-d-p}^{k-d-1}\!\mathbb{E}\left\{y_{i}^{c}y_{i}^{(c)\tr}\right\}\Big)\\
    \leq&d\text{trace}(R)+p\text{trace}(\bar{R})
    \\&+n\left\|\bar{C}\right\|_2^2\left\|Q\right\|_2\sum_{i=k-p-d}^{k-1}\sum_{l=0}^{i-1}\left\|A^l\right\|_2^2.
\end{aligned}
$$
Note that exists $M_1$, such that $\left\|A^l\right\|_2\leq M_1l^{\kappa-1}\rho(A)^{l}$, where $\kappa$ is the order of the largest Jordan block of  matrix $A$. Then we have 
$$
\mathbb{E}\left\{Z_{k,p+d}^TZ_{k,p+d}\right\}\leq M(p+d)k^{2\kappa-1},\quad \forall k\in\mathbb{N},
$$
where $M$ is a constant only related to system parameters.
With Lemma \ref{lm: Deviation}, we have $Z_{k}Z_{k}^{{\tr}}\leq M(p+d)\left(2m_{d,p}+3\log\frac{k^2}{\delta}\right)k^{2\kappa-1} I$ holds uniformly for all $k$ with probability at least $1-\frac{\pi^2\delta}{6}$.
Together with the selection rule of $p\leq \beta \log k$, we can obtain 
$$
\begin{aligned}
    \bar{Z}_{k,p+d}\bar{Z}_{k,p+d}^{\tr}=&\sum_{l=p+d}^{k}Z_{l,p+d}Z_{l,p+d}^{\tr}\\
    \leq &\operatorname{poly}\left(d,\beta,\log\frac{1}{\delta}\right)k^{2\kappa
+1}I
\end{aligned}
$$
uniformly holds for all $k$ with probability $1-\frac{\pi^2 \delta}{6}$. We now complete the proof. 
\end{proof}

\subsection{Almost Sure Persistent excitation for $\bar{Z}_{k,p+d}\bar{Z}_{k,p+d}^{\tr}$ with high probability}
In this subsection, we will provide a condition for the persistent excitation of $\bar{Z}_{k,p+d}\bar{Z}_{k,p+d}^{\tr}$, i.e., find a $k_0$ such that $\bar{Z}_{k,p+d}\bar{Z}_{k,p+d}^{\tr}\ge c_0kI,\;\forall k\ge k_0$ with high probability. The property of persistent excitation is of vital importance to guarantee the performance of the online prediction Algorithm~\ref{algPrediction}.

\vspace{6pt}

\begin{lemma}\label{lm: propersistent}
For a given system model \eqref{eq: LinearSystem} and \eqref{eq:external-source}, then for any fixed failure probability $\delta$, horizon parameter $\beta$ and delay step $d$, there exists a number $k_0=\operatorname{poly}\left(d,\beta,\log\frac{1}{\delta}\right)$, such that for any $k>k_0$ and $p\leq\beta \operatorname{log} k$, there is $\bar{Z}_{k,p+d}\bar{Z}_{k,p+d}^{\tr}\ge \frac{\sigma_{\bar{R}}}{4}kI$ uniformly for all $k$ with probability $1-\delta$, where $\sigma_{\bar{R}}$ is the smallest eigenvalue of matrix $\bar{R}$.
\end{lemma}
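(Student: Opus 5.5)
The plan is to split the regressor into an innovation-driven part that is white and full rank, plus a remainder that is predictable from the past, and then to show that the cross terms between these two parts grow only like $\sqrt{k}$ up to logarithmic factors. Concretely, write each entry of $Z_{t,p+d}$ through the innovation forms of the two predictors. For the undelayed local block use the local Kalman innovation, $y_i=C\hat{x}^{\mathrm{loc}}_i+\check{r}_i$. For the delayed centralized block use the centralized innovation, $y^c_i=\bar{C}\bar{x}_i+\bar{r}_i$ with $\bar{r}_i=y^c_i-\bar{C}\bar{x}_i$. It is cleaner to use a single block-triangular (lower-triangular) change of coordinates
\[
Z_{t,p+d}=\mathcal{T}\,\big(\epsilon_t+\mu_t\big),
\]
where $\epsilon_t$ stacks, going backward in time, the one-step prediction errors of each entry given all entries of $Z_{t,p+d}$ that are older than it, and $\mu_t$ is measurable with respect to the data before the oldest entry, $\sigma(Y^c_{0:t-d-p-1})$. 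By Gram--Schmidt (the Gaussian innovations), $\mathcal{T}$ has identity diagonal blocks, so $\lambda_{\min}(\mathcal{T}\mathcal{T}^\tr)$ is bounded below by a constant. Each innovation in $\epsilon_t$ has covariance $\succeq$ the corresponding measurement noise covariance, that is, $\succeq R$ or $\succeq\bar{R}$. Hence $\mathbb{E}\{\epsilon_t\epsilon_t^\tr\}\succeq \sigma_{\bar R} I$, and the blocks of $\epsilon_t$ are mutually uncorrelated by the orthogonality argument of Theorem~\ref{thm: property}, Part 1. This is exactly where the asymmetry enters: the local and centralized blocks have different innovation covariances, but both dominate $\sigma_{\bar R}I$.

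The proof then has three steps.

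\textbf{Step 1.} Show that $\sum_t \epsilon_t\epsilon_t^\tr \succeq \tfrac{\sigma_{\bar R}}{2}\,k\,I$ with high probability. The vector $\epsilon_t$ is a fixed linear map of the i.i.d. Gaussian noises $w,v,v^{\mathrm e}$ over a window of length $p+d$. So for any unit vector $u$, $\sum_t (u^\tr\epsilon_t)^2$ is a Gaussian quadratic form whose operator norm is $O(p+d)$ and whose Frobenius norm is $O(\sqrt{k(p+d)})$, because of the window overlap. Lemma~\ref{lm: HWinequality} then gives deviation $O(\sqrt{k(p+d)\log(1/\delta)})$. Covering the sphere in dimension $\bar m p+md$ by an $\epsilon$-net costs an extra factor $\exp(O(p+d))$ in $\delta$, which is absorbed since $p\le\beta\log k$. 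The result is that the deviation is at most $\tfrac{\sigma_{\bar R}}{2}k$ once $k\ge\poly(d,\beta,\log\frac1\delta)$. Lemma~\ref{lm: upperbound} is used here to handle the net approximation error.

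\textbf{Step 2.} Control the cross term $\sum_t \mu_t\epsilon_t^\tr$. The term $\mu_t$ is predictable with respect to a filtration to which $\epsilon_t$ is adapted, with a lag, so this is a (block) martingale sum. Apply a self-normalized martingale bound, as in \cite{tsiamis9894660}, to get
\[
\Big\|\Big(\textstyle\sum_t\mu_t\mu_t^\tr+\lambda I\Big)^{-1/2}\sum_t\mu_t\epsilon_t^\tr\Big\|^2\le O\big((p+d)\log(k/\delta)\big),
\]
where $\log\det$ of the $\mu$-Gram matrix is bounded via Lemma~\ref{lm: upperbound} by $O((p+d)\kappa\log k)$. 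Combining this with
\[
\sum(\mu+\epsilon)(\mu+\epsilon)^\tr \succeq \tfrac12\sum\epsilon\epsilon^\tr - (\text{cross}) \cdot \Big(\sum\mu\mu^\tr\Big)^{-1}\cdot(\text{cross})^\tr
\]
via the usual completion-of-squares inequality, the cross term costs only $\poly(d,\beta)\log(k/\delta)$, which is $o(k)$.

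\textbf{Step 3.} Transfer the bound back through $\mathcal{T}$, which loses a factor at most $2$. Choose $k_0=\poly(d,\beta,\log\frac1\delta)$ so that all lower-order terms are below $\tfrac{\sigma_{\bar R}}{4}k$. Finally, take a union bound over $k$ with failure probability $\delta/k^2$, as in Lemma~\ref{lm: Deviation}.

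The main obstacle is Step 1. With the delay, the coordinates of $Z_{t,p+d}$ mix local-only and centralized samples with overlapping time indices, so $\epsilon_t$ is not a simple stack of i.i.d. innovations. Showing that its covariance is uniformly bounded below requires care. My first approach is to lower bound the conditional covariance of each block by its fresh measurement noise, namely $v_i$ or $v^c_i$, which is independent of everything older. That immediately yields $\sigma_{\bar R}$ without needing exact innovation formulas.
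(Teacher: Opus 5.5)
\textbf{Comparison of routes.} Your route differs from the paper's. You pass to innovation coordinates $Z_{t,p+d}=\mathcal{T}(\epsilon_t+\mu_t)$ and control the cross term with blockwise self-normalized martingale bounds. The paper instead splits $\bar Z_{k,p+d}=\bar Y_{k,p+d}+\tilde V_{k,p+d}$ into the state part (driven only by $w$) and the window of measurement noises. It takes all the excitation from $\tilde V_{k,p+d}\tilde V_{k,p+d}^\tr\succeq\frac{\sigma_{\bar R}}{2}kI$. It handles the cross term by conditioning on $\bar Y_{k,p+d}$, which is independent of $\tilde V_{k,p+d}$, and applying Hanson--Wright row by row to $\tilde R_{p+d}^{-1/2}\tilde V_{k,p+d}M_{k,p+d}\tilde V_{k,p+d}^\tr\tilde R_{p+d}^{-1/2}$.

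Your Steps 1--2 can be repaired in detail. For each local block, the $\sigma$-algebras with respect to which its innovations are defined form an increasing filtration in $t$. That filtration makes the block a martingale difference sequence with $\mu_t$ predictable. There are two smaller points. First, $\epsilon_t$ is not a function of a length-$(p+d)$ noise window, because errors like $x_{t-d}-\bar x_{t-d}$ carry the whole past. So the banded covariance you use in Hanson--Wright has to come from each innovation being independent of its conditioning $\sigma$-algebra. Second, $\sum_t\mu_t\mu_t^\tr$ has rank at most $n$, so the completion of squares needs $+\lambda I$.

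\textbf{The gap: Step~3.} The transfer through $\mathcal{T}$ does not go through as written. Identity diagonal blocks give $\det\mathcal{T}=1$, not a lower bound on $\sigma_{\min}(\mathcal{T})$; consider $\left[\begin{smallmatrix}1&0\\ a&1\end{smallmatrix}\right]$ with $|a|$ large. What you need is $\sup_p\|\mathcal{T}^{-1}\|<\infty$. Here $\mathcal{T}^{-1}$ is the whitening filter, whose blocks are built from $(A-\bar L\bar C)^j\bar L$ and products of $A-L^{(i)}C$. This bound does follow from Schur stability of $A-\bar L\bar C$ and the decay in Theorem~\ref{thm: property}, Part~3. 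But it only gives $\lambda_{\min}(\mathcal{T}\mathcal{T}^\tr)\ge\|\mathcal{T}^{-1}\|^{-2}$, a system-dependent constant that is generally below $\frac12$.

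Take a scalar random walk with $A=C=1$ and $Q\gg R$. Then $L\to1$ and $\mathcal{T}^{-1}\to I-S$ with $S$ the shift, so $\|\mathcal{T}^{-1}\|\to2$. The transfer then costs a factor of about $4$, not $2$. Your Step~1 has already replaced $\mathbb{E}\{\epsilon_t\epsilon_t^\tr\}$ by the scalar $\sigma_{\bar R}$. So what your argument proves is $\bar Z_{k,p+d}\bar Z_{k,p+d}^\tr\succeq c\,kI$ with $c$ of order $\sigma_{\bar R}/\|\mathcal{T}^{-1}\|^2$, not the stated $\frac{\sigma_{\bar R}}{4}kI$.

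\textbf{Repair.} Concentrate $\eta_t\triangleq\mathcal{T}\epsilon_t=Z_{t,p+d}-\mathbb{E}\{Z_{t,p+d}\mid Y^c_{0:t-d-p-1}\}$ directly, and use
$\mathcal{T}\,\mathbb{E}\{\epsilon_t\epsilon_t^\tr\}\,\mathcal{T}^\tr=\mathrm{Cov}(Z_{t,p+d}\mid Y^c_{0:t-d-p-1})\succeq\tilde R_{p+d}\succeq\sigma_{\bar R}I$.
This holds because the window's measurement noises are independent of the state and of the pre-window data. The $\poly(p+d)$ growth of $\|\mathcal{T}\|$ when $\rho(A)=1$ is absorbed into $k_0$.

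That inequality, however, is precisely the observation the paper's proof rests on. Once you use it, the innovation coordinates, $\mathcal{T}$, and the per-block filtrations become unnecessary. In the paper the delay asymmetry enters only through the block-diagonal $\tilde R_{p+d}=\operatorname{diag}(R,\dots,R,\bar R,\dots,\bar R)$. What your original construction buys is that it does not need the state and measurement-noise parts to be independent. The price is a system-dependent excitation constant.
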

\begin{proof}
    Before we move on, to simplify the notation, we denote $m_{p,d}=\bar{m}p+md$ as the dimension of the sample $Z_{k,p+d}$ for given $p$ and $d$, which will also be used in the proof of Theorem~\ref{thm: regret}. We first consider the structure of $\bar{Z}_{k,p+d}$, with the expression of $y_{k}$, we can divide the collected matrix $\bar{Z}_{k,p+d}$ in to the following form
    \[
    \bar{Z}_{k, p+d}=\bar{Y}_{k, p+d}+\tilde{V}_{k, p+d},
    \]
where
$
\bar{Y}_{k, p+d}\triangleq \left[\begin{smallmatrix}
        Cx_{p+d-1} & \cdots& Cx_{k-1} \\
\vdots & & \vdots\\
Cx_{p} & & Cx_{k-d} \\
\bar{C}x_{p-1} & \cdots  & \bar{C}x_{k-d-1} \\
\vdots & & \vdots\\
\bar{C}x_{0} &\cdots &\bar{C}x_{k-p-d} \\
    \end{smallmatrix}\right]$ denotes the system state part, and
    $
    \tilde{V}_{k, p+d}\triangleq\left[ \begin{smallmatrix}
        v_{p+d-1} & \cdots& v_{k-1} \\
\vdots & & \vdots\\
v_{p} & & v_{k-d} \\
v_{p-1}^{c} & \cdots  & v_{k-d-1}^{c} \\
\vdots & & \vdots\\
v_{0}^{c} &\cdots &v_{k-p-d}^{c} \\
    \end{smallmatrix}\right]
$ denotes the noise part.
Then we have
\[
\begin{aligned}
    \bar{Z}_{k, p+d}\bar{Z}_{k, p+d}^\tr=\underbrace{\bar{Y}_{k, p+d}\bar{Y}_{k, p+d}^\tr+\tilde{V}_{k, p+d}\tilde{V}_{k, p+d}^\tr}_{\text{quadratic term}}\qquad\qquad\\+\underbrace{\bar{Y}_{k, p+d}\tilde{V}_{k, p+d}^\tr+\tilde{V}_{k, p+d}\bar{Y}_{k, p+d}^\tr}_{\text{cross term}}.
\end{aligned}
\]
Motivated by \cite{tsiamis9894660}, we will show that for sufficiently large $k\ge\poly(d,\beta,\log\left(\frac{1}{\delta}\right))$, the cross term $\mathcal{C}_{k,p+d}\triangleq\bar{Y}_{k, p+d}\tilde{V}_{k, p+d}^\tr\!+\!\tilde{V}_{k, p+d}\bar{Y}_{k, p+d}^\tr$ will be dominated by the quadratic term uniformly with high probability $1\!-\!\delta$, i.e.,
\[
\left|u^\tr\mathcal{C}_{k,p+d}u\right|\leq\frac{1}{2}u^\tr\left(\bar{Y}_{k, p+d}\bar{Y}_{k, p+d}^\tr+\tilde{V}_{k, p+d}\tilde{V}_{k, p+d}^\tr\right)u
\]
holds for all vector $u^\tr u=1$ with proper dimension.
Together with that, the quadratic term satisfies
\[
\bar{Y}_{k, p+d}\bar{Y}_{k, p+d}^\tr+\tilde{V}_{k, p+d}\tilde{V}_{k, p+d}^\tr\ge\tilde{V}_{k, p+d}\tilde{V}_{k, p+d}^\tr\ge \frac{\sigma_{\bar{R}}}{2}kI.
\]
with high probability $1-\delta$.
We will have
\[
\bar{Z}_{k, p+d}\bar{Z}_{k, p+d}^\tr\ge \frac{\sigma_{\bar{R}}}{4}kI 
\]
with high probability $1-2\delta$.

For the first step, we provide the condition for guaranteeing 
$
\tilde{V}_{k, p+d}\tilde{V}_{k, p+d}^\tr\ge \frac{\sigma_{\bar{R}}}{2}kI
$. We first consider the matrix 
\[\tilde{R}_{p+d}\triangleq \operatorname{diag}\Big(\underbrace{R,\dots,R}_d,\underbrace{\bar{R},\dots,\bar{R}}_p\Big).
\]
Then we have $\tilde{R}_{p+d}\ge \sigma_{\bar{R}} I$ and the matrix $\tilde{R}_{p+d}^{-\frac{1}{2}}\tilde{V}_{k,p+d}$ is composed with i.i.d Gaussian random variable.
With Lemma I.5 in \cite{tsiamis9894660}, for a fixed probability $\delta$, there exists a universal $c_1$ such that if
$
k \geq c_1 m_{p,d} \log (m_{p,d} / \delta),
$
there is
\begin{equation}\label{eq: noisePE}
    \frac{1}{2}kI\leq \tilde{R}_{p+d}^{-\frac{1}{2}}\tilde{V}_{k,p+d}\tilde{V}_{k,p+d}^\tr\tilde{R}_{p+d}^{-\frac{1}{2}}\leq \frac{3}{2}kI,
\end{equation}
Note that the above condition holds for one specific $k$; to obtain a uniform condition for all $k$, we can choose the probability as $\frac{\delta}{k^2}$ for each time step $k$ to obtain a uniform bound. Then for a fixed $\delta$, if
\begin{equation}\label{eq: PEcondition1}
    k \geq c_1 m_{p,d} \log (m_{p,d}k^2 / \delta)
\end{equation}
for each $k$, then the result \eqref{eq: noisePE} holds for all $k$
with probability at least $1-\frac{\pi^2}{6}\delta$. For condition \eqref{eq: PEcondition1}, note that $p\leq \beta\log k$, then the above condition can be relaxed to
$k \geq c_1 (\beta \bar{m}\log k+dm) \log ((\beta \bar{m}\log k+dm)k^2 / \delta)$.
We can further simplify the above condition as \[
k \geq \poly(d,\beta) \log k \left(\log k + \log\frac{1}{\delta}\right).
\]
With the property of $\log$ function, we have that for given $\beta, d,\delta$, there exists $k_1 = \poly\left(\beta, d,\log\left(\frac{1}{\delta}\right)\right)$, for any $k\ge k_1$, the condition \eqref{eq: noisePE} holds with probability $1-\frac{\pi^2}{6}\delta$. In the following analysis, we directly consider the case with $k\ge k_1$.

For the cross term $\mathcal{C}_{k,p+d}$, for any vector $u$ with $u^
\tr u=1$, we have
\[
\begin{aligned}
    &\left|\!u^\tr\! \mathcal{C}_{k,p+d} u\!\right|
    \!\!\leq \!2\!\left\|\!\left(\!\lambda I\!+\!\bar{Y}_{k,p+d} \bar{Y}_{k,p+d}^{\tr}\!\right)^{\!-\!\frac{1}{2}}\! \!\bar{Y}_{k,p+d} \!\tilde{V}_{k,p+d}^{\tr} \!\tilde{R}_{p+d}^{-\frac{1}{2}}\right\|_{2}\\&
    \times \sqrt{u^{\tr}\left(\lambda I+\bar{Y}_{k,p+d} \bar{Y}_{k,p+d}^{\tr}\right) u}\;\; \sqrt{u^{\tr} \tilde{R}_{p+d} u}\\
    &\leq \sqrt{\frac{2}{k}}\left\|\left(\lambda I+\bar{Y}_{k,p+d} \bar{Y}_{k,p+d}^{\tr}\right)^{-\frac{1}{2}} \bar{Y}_{k,p+d} \tilde{V}_{k,p+d}^{\tr} \tilde{R}_{p+d}^{-\frac{1}{2}}\right\|_{2}\\&
    \times \left(u^{\tr}\left(\lambda I+\bar{Y}_{k,p+d} \bar{Y}_{k,p+d}^{\tr}\right) u + \frac{k}{2}u^{\tr} \tilde{R}_{p+d} u\right)\\
    &\leq \sqrt{\frac{2}{k}}\left\|\left(\lambda I+\bar{Y}_{k,p+d} \bar{Y}_{k,p+d}^{\tr}\right)^{-\frac{1}{2}} \bar{Y}_{k,p+d} \tilde{V}_{k,p+d}^{\tr} \tilde{R}_{p+d}^{-\frac{1}{2}}\right\|_{2}\\&
    \times \left(u^{\tr}\left(\lambda I+\bar{Y}_{k,p+d} \bar{Y}_{k,p+d}^{\tr}\right) u + u^{\tr} \tilde{V}_{k,p+d}\tilde{V}_{k,p+d}^\tr u\right),
\end{aligned}
\]
where the last inequality is from condition \eqref{eq: noisePE}. 
Then we only need to show that there exists $k_0=\poly\left(d,\beta,\log\left(\frac{1}{\delta}\right)\right)$, such that for any $k\ge k_0$, there is \[\sqrt{\frac{2}{ k}}\left\|\left(\lambda I+\bar{Y}_{k,p+d} \bar{Y}_{k,p+d}^{\tr}\right)^{-\frac{1}{2}} \bar{Y}_{k,p+d} \tilde{V}_{k,p+d}^{\tr} \tilde{R}_{p+d}^{-\frac{1}{2}}\right\|_{2}\leq \frac{1}{2}\]
with high probability. Denote
\[
M_{k,p+d}\triangleq \bar{Y}_{k,p+d}^{\tr}\left(\lambda I+\bar{Y}_{k,p+d} \bar{Y}_{k,p+d}^{\tr}\right)^{-1} \bar{Y}_{k,p+d},
\] then the above inequality condition is also equivalent to
\begin{equation}\label{eq: MiddleCondition1}
    \tilde{R}_{p+d}^{-\frac{1}{2}}\tilde{V}_{k,p+d}M_{k,p+d}\tilde{V}_{k,p+d}^{\tr} \tilde{R}_{p+d}^{-\frac{1}{2}}\leq \frac{k}{8}I.
\end{equation}
Note that each row of matrix $\tilde{R}_{p+d}^{-\frac{1}{2}}\tilde{V}_{k,p+d}$ is i.i.d Gaussian, it is easy to verify the following facts: 
\[
\left\|M_{k,p+d}\right\|_2\leq 1,\quad \operatorname{trace}\left(M_{k,p+d}\right) \leq m_{p,d}.
\]
Moreover, we also have
\[
\left\|M_{k,p+d}\right\|_F^2\!\leq\! \operatorname{trace}\! \left(\!\left(\lambda I+\bar{Y}_{k,p+d} \bar{Y}_{k,p+d}^{\tr}\right)^{-1} \bar{Y}_{k,p+d}\bar{Y}_{k,p+d}^{\tr}\right),\]
then there is $\left\|M_{k,p+d}\right\|_F^2\leq m_{p,d}$. For each $k$, we choose 
\[t=K^2\left(\frac{1}{2}\left\|M_{k,p+d}\right\|_F^2+\frac{1}{c}\log\left(\frac{m_{p,d}k^2}{\delta}\right)\right),
\] 
where $K$ is the $\psi_2$ norm of standard Gaussian distribution and $c$ is the constant in Lemma \ref{lm: HWinequality} (Hanson-Wright inequality). Then we can verify that 
$
\frac{ct^2}{K^4\left\|M_{k,p+d}\right\|_F^2}\ge\log\left(\frac{m_{p,d}k^2}{\delta}\right)
$
and
$
\frac{ct}{K^2\left\|M_{k,p+d}\right\|_2}\ge\log\left(\frac{m_{p,d}k^2}{\delta}\right).
$
We denote the $i$-th row of matrix $\tilde{R}_{p+d}^{-\frac{1}{2}}\tilde{V}_{k,p+d}$ as $g_i^\tr$, i.e., $\tilde{R}_{p+d}^{-\frac{1}{2}}\tilde{V}_{k,p+d}=\begin{bmatrix}
    g_1,g_2,\dots,g_{\bar{m}p+md}
\end{bmatrix}^\tr$, then for each $i$, $g_i$ is a vector with i.i.d standard Gaussian random variable, therefore we have
\[
\mathbb{E}\left\{g_i^\tr M_{k,p+d}g_i\right\}=\operatorname{trace}(M_{k,p+d})\leq m_{p,d}.
\]
With Lemma \ref{lm: HWinequality}, we have
\[
\begin{aligned}
    \mathbb{P}\left(g_i^\tr M_{k,p+d}\;g_i\ge \mathbb{E}\left\{g_i^\tr M_{k,p+d}\;g_i\right\}+t \right)\qquad\qquad\qquad\\
    \leq 2e^{-\log \frac{m_{p,d}k^2}{\delta}}=\frac{2\delta}{m_{p,d}k^2}.
\end{aligned}
\]
Take a union over all $i=1,\dots, m_{p,d}$, we have 
$\operatorname{trace}\big(\tilde{R}_{p+d}^{-\frac{1}{2}}\tilde{V}_{k,p+d}M_{k,p+d}\tilde{V}_{k,p+d}^\tr\tilde{R}_{p+d}^{-\frac{1}{2}}\big)\leq m_{p,d}(m_{p,d}+t)$ holds with probability at least $1- \frac{2\delta}{k^2}$.
Together with the fact that 
$
\lambda_{\max}\left(P\right)\leq \operatorname{trace}\left(P\right)
$, where $P$ is positive semidefinite. Then we have the fact
\[
\mathbb{P}\!\Big(\!\tilde{R}_{p+d}^{-\frac{1}{2}}\!\tilde{V}_{k,p+d}\!M_{k,p+d}\!\tilde{V}_{k,p+d}^\tr\tilde{R}_{p+d}^{-\frac{1}{2}}\!\!\leq\!\! m_{p,d}(\!m_{p,d}\!+t\!)\!I\! \Big)\!\!\ge \!\!1\!-\frac{\pi^2\delta}{3}
\]
uniformly for all $k$.
Consider the term $B(p,t)\triangleq m_{p,d}(m_{p,d}+t)$, note that $p\leq \beta\log k$, together with the expression of $t$, then we can also relax the term $B(p,t)$ with
\[
B(p,t)\leq \poly\left(d,\beta\right)\log k\left(\log k+\log\left(\frac{1}{\delta}\right)\right).
\]
For the condition \eqref{eq: MiddleCondition1}, we can relax it by letting 
\[
\poly\left(d,\beta\right)\log k\left(\log k+\log\left(\frac{1}{\delta}\right)\right)\leq \frac{k}{8}.
\]
Therefore we can obtain that there exists $k_0=\poly\left(d, \beta,\log\left(\frac{1}{\delta}\right)\right)$, such that for any $k\ge k_0$, there is 
\[
\tilde{R}_{p+d}^{-\frac{1}{2}}\tilde{V}_{k,p+d}M_{k,p+d}\tilde{V}_{k,p+d}^\tr\tilde{R}_{p+d}^{-\frac{1}{2}}\leq \frac{k}{8} I
\]
with probability at least $1- \frac{2\pi^2\delta}{6}$. Then this lemma is proved.
\end{proof}

\section{Proof of Theorem \ref{thm: regret}}\label{Appen: RegretProof}

We provide all the proofs for Theorem~\ref{thm: regret} here.  
We first present the regression model here for reference
\[
y_{k+1}=G_{p+d}Z_{k+1,p+d}+C \Phi_d(A-\bar{L} \bar{C})^{p} \bar{x}_{k-d-p}+r_{k+1}.
\]
To simplify the proof, we denote $b_{k,p+d}=\Phi_d(A-\bar{L} \bar{C})^{p} \bar{x}_{k-d-p}$ as the bias term.  

\subsection{General Analysis}
For the regret $\mathcal{R}_N$, we first have 
\begin{align} \label{eq:decomposition-regret}
    \mathcal{R}_{N} \triangleq& \sum_{k=T_{\text {init }}}^{N}\left\|y_{k}-\tilde{y}_{k}\right\|_2^{2}\;-\sum_{k=T_{\text {init }}}^{N}\left\|y_{k}-\hat{y}_{k}\right\|_2^{2} \nonumber \\
    =&\underbrace{\sum_{k=T_{\text {init }}}^{N}\left\|\hat{y}_{k}-\tilde{y}_{k}\right\|_{2}^{2}}_{\mathcal{L}_{N}}\;+\;2 \underbrace{\sum_{k=T_{\text {init }}}^{N} r_{k}^{\tr}\left(\hat{y}_{k}-\tilde{y}_{k}\right)}_{\text {martingale term }} .
\end{align}
It is shown in \cite[Theorem 1]{tsiamis9894660} that the martingale term is with the same order as $\tilde{\mathcal{O}}(\sqrt{\mathcal{L}_N})$, hence we mainly need to analyze the term $\mathcal{L}_N$ to provide a bound for $\mathcal{R}_N$. 
With the regression model, we can rewrite the $\tilde{y}_{k+1}$ as
$$\begin{aligned}
\tilde{y}_{k+1}= & \left(\sum_{l=p+d}^{k}\left(G_{p+d} Z_{l, p+d}+b_{l, p+d}+r_{l}\right) Z_{l, p+d}^{{\tr}}\right) \\
&\times V_{k, p+d}^{-1} \!Z_{k+\!1, p\!+\!d} \\
=\!G_{p+d}&\!\left(\!I\!-\!\!\lambda V_{k, p+d}^{-1}\!\right)\!\! Z_{k+1, p+d}\!+\!\!\!\!\!\sum_{l=p\!+\!d}^{k} \!\!\!r_{l} Z_{l, p+d}^{{\tr}}\!V_{k, p+d}^{-1}\! Z_{k+1, p+d}\\&+\sum_{l=p+d}^{k} b_{l, p+d} Z_{l, p+d}^{{\tr}} V_{k\!, p+d}^{-1}\!Z_{k+1,\!p+d}.
\end{aligned}$$
Hence we have
$$
\begin{aligned}
\tilde{y}_{k+1}-\hat{y}_{k+1}= & -\lambda G_{p+d}  V_{k, p+d}^{-1} Z_{k+1, p+d}\\+\sum_{l=p+d}^{k}& b_{l, p+d} Z_{l, p+d}^{{\tr}}V_{k, p+d}^{-1} Z_{k+1, p+d}-b_{k+1, p+d}\\
+\sum_{l=p+d}^{k}&r_{l} Z_{l, p+d}^{{\tr}} V_{k, p+d}^{-1} Z_{k+1, p+d}.
\end{aligned}
$$
In the following analysis, inspired by the proof of \cite[Theorem 1]{tsiamis9894660}, we assume that the parameter $\beta$ has already been chosen to satisfy the requirement provided in Theorem~\ref{thm: regret}. Then, for each time step $k$, the parameter $p$ is fixed.  {\bf In order to simplify the notations, we will eliminate the subscript $p+d$ if unnecessary, for example, $Z_{k,p+d}$ will be replaced by $Z_k$.}
The difference at time step $k+1$ can be formulated as
$$
\begin{aligned}
\left\|\tilde{y}_{k+1}-\hat{y}_{k+1}\right\|_{2}^{2} \leq  6\left(\left\| \lambda G_{p+d} V_{k}^{-\frac{1}{2}}\right\|_{2}^{2}+\left\| B_{k} \bar{Z}_{k}^{{\tr}} V_{k}^{-\frac{1}{2}} \right\|_{2}^{2} \right.\\
\left.+\left\|\mathcal{R}_{k} \bar{Z}_{k}^{{\tr}} V_{k}^{-\frac{1}{2}}\right\|_{2}^{2}\right)\times\left\|V_{k}^{-\frac{1}{2}} Z_{k+1}\right\|_{2}^{2}+2\left\|b_{k+1}\right\|_{2}^{2}.
\end{aligned}$$ 
On considering that the parameter $p$ varies with the time step $k$, we also need to divide the gap term $\mathcal{L}_N$ into different epochs. Denote the number of all the epoch is $N_E$, i.e., $N=2^{N_E}T_{\text{init}}$, and
let $T_l = 2^{l-1}T_{\text{init}}+1$ be the beginning time step of the $l$-th epoch, then we have
$$
\begin{aligned}
\mathcal{L}_{N}=&\sum_{l=1}^{N_E}\sum_{k=T_l}^{2T_l-2}\left\|\tilde{y}_{k+1}-\hat{y}_{k+1}\right\|_{2}^{2}\\
\leq&\sum_{l=1}^{N_E}\sup_{T_l\leq k\leq 2T_l-2
}\!\!6\left(\left\| \lambda G_{p+d} V_{k}^{-\frac{1}{2}}\right\|_{2}^{2}\!\!+\!\left\| B_{k} \bar{Z}_{k}^{{\tr}} V_{k}^{-\frac{1}{2}} \right\|_{2}^{2} \right. \\
&\left. +\left\|\mathcal{R}_{k,p+d} \bar{Z}_{k}^{{\tr}} V_{k}^{-\frac{1}{2}}\right\|_{2}^{2}\right)\times\sum_{k=T_l}^{2T_l-1}\left\|V_{k}^{-\frac{1}{2}} Z_{k+1}\right\|_{2}^{2}\\&+2 \sum_{l=1}^{N_E}\sum_{k=T_l}^{2T_l-2}\left\|b_{k+1}\right\|_{2}^{2}.\\
\end{aligned}
$$
In the following analysis, we will gradually analyze the above terms from a new perspective and then prove the logarithm bound provided in Theorem~\ref{thm: regret} with high probability $1-\delta$. 

\subsection{Proof of Lemma \ref{lemma:bias-factor}: Uniform Boundedness of Bias error and the selection of parameter $\beta$}\label{sectionBeta}

Note that the bias error are in two parts, i.e., the bias regression error $\left\| B_{k} \bar{Z}_{k}^{{\tr}} V_{k}^{-\frac{1}{2}} \right\|_{2}^{2}$ and bias accumulation error $\sum_{k=T_l}^{2T_l-1}\left\|b_{k+1}\right\|_{2}^{2}$ for each epoch. We first consider the first term, note that at each epoch $l$, for any $T_l\leq k\leq 2T_l-2$, there is
$$
\begin{aligned}
    \left\| B_{k} \bar{Z}_{k}^{{\tr}} V_{k}^{-\frac{1}{2}} \right\|_{2}^{2}=& \left\|B_{k} \bar{Z}_{k}^{{\tr}} V_{k}^{-1}\bar{Z}_{k} B_{k}^{\tr}\right\|_2^2
    \leq \sum_{l=p+d}^{k}\left\|b_{l}\right\|_{2}^{2}\\
    \leq&\sum_{l=p+d}^{k}\left\|C\Phi_d(A-\bar{L}\bar{C})^p\right\|_2^2\left\|\bar{x}_{l-p-d}\right\|_2^2.
\end{aligned}
$$
where $p=\beta \log(T_l-1)$.
Denote $\Gamma_k=\mathbb{E}\left\{\bar{x}_k\bar{x}_k^\tr\right\}$, with Lemma \ref{lm: Deviation}, we have for a fixed probability $\delta$, there is
$$
\mathbb{P}\!\left(\!\left\|\bar{x}_{k}\right\|_2^2\!\leq\!\Big(2n\!+\!3\log \frac{k^2}{\delta}\!\Big)\left\|\Gamma_k\right\|_2^2,\forall k>T_{\textnormal{init}}\right)\!\ge\! 1-\frac{\pi^2\delta}{6}.
$$
Note that
\[
\left\|\Gamma_k\right\|_2^2\leq \left\|Q\right\|_2\sum_{i=0}^{k-1}\left\|A^i\right\|_2^2.
\]
Similar to Lemma \ref{lm: upperbound}, we have that
$$
\mathbb{P}\left(\left\|\bar{x}_{k}\right\|_2^2 \leq \poly\left(\log\frac{1}{\delta}\right) k^{2\kappa},\;\; \forall k\ge T_{\text{init}}\right)>1-\frac{\pi^2\delta}{6}.
$$
Note that $\left\|C\Phi_d(A-\bar{L}\bar{C})^p\right\|_2^2\leq M\rho(A-\bar{L}\bar{C})^p\rho_0^d$. 
We choose $\beta=\frac{M_3}{\log(1/\rho(A-\bar{L}\bar{C}))}$, then we have
$$
\begin{aligned}
    \left\| B_{k} \bar{Z}_{k}^{{\tr}} V_{k}^{-\frac{1}{2}} \right\|_{2}^{2}\leq& \poly\left(\log\frac{1}{\delta}\right)(2T_l)^{2\kappa+1}\rho_0^d\\&\times\rho(A-\bar{L}\bar{C})^{\frac{M3}{\log(1/\rho(A-\bar{L}\bar{C}))}\log(T_l-1)}\\
    \leq&\poly\left(\log\frac{1}{\delta}\right)(2T_l)^{2\kappa+1}\frac{\rho_0^d}{(T_l-1)^{M3}}.
\end{aligned}
$$
Then we only need to choose $M_3=2\kappa+1$, then there is
$$
\left\| B_{k} \bar{Z}_{k}^{{\tr}} V_{k}^{-\frac{1}{2}} \right\|_{2}^{2}\leq\sum_{l=p+d}^{2T_l-2}\left\|b_{l}\right\|_{2}^{2}\leq\operatorname{poly}\left(\log\frac{1}{\delta}\right).
$$
Note that the above bound holds for all $T_l\leq k\leq 2T_l-2$, together with the arbitrariness of $l$, we have 
$$
\sup_{1\leq l\leq N_E}\sup_{T_l\leq k\leq 2T_l-2}\left\| B_{k} \bar{Z}_{k}^{{\tr}} V_{k}^{-\frac{1}{2}} \right\|_{2}^{2}\leq\operatorname{poly}\left(\log\frac{1}{\delta}\right)
$$
with high probability $1-\frac{\pi^2\delta}{6}$
Moreover, for the term  $\sum_{l=1}^{N_E}\sum_{k=T_l}^{2T_l-2}\left\|b_{k+1}\right\|_{2}^{2}$, we have
$$
\sum_{l=1}^{N_E}\sum_{k=T_l}^{2T_l-2}\!\!\left\|b_{k+1}\right\|_{2}^{2}\!\leq\!\! \frac{\log(N/T_{\text{init}})}{\log2} \operatorname{poly}\!\Big(\!\log\frac{1}{\delta}\Big)\!=\!\mathcal{O}(\log N).
$$
Therefore it is sufficient to guarantee the uniform boundedness of bias error only with the parameter $\beta$ chosen to be proportional to $1/\log\rho(A-\bar{L}\bar{C})$ and the order $\kappa$ of the marginal stable Jordan block of $A$, which is independent to the time step $k$ or the initial time $T_{\text{init}}$.

\subsection{Persistent Excitation of the Gram matrix $\bar{Z}_{k }\bar{Z}_{k }^{\tr}$ for all $k\ge T_{\text{init}}$}
Consider the event
$$
\mathcal{E}_{PE} \triangleq\left\{\bar{Z}_{k}\bar{Z}_{k}^{\tr}\ge \frac{\sigma_{\bar{R}}}{4}kI,\;\; \forall k\ge T_{\text{init}}\right\}.$$
From Lemma \ref{lm: propersistent}, we can obtain that for the above fixed selecting policy $\beta$, time delay step $d$ and probability $\delta$, there exists a number $k_0=\allowbreak\operatorname{poly}\Big(d,\beta,\allowbreak \log\left(\frac{1}{\delta}\right)\Big),$ such that the event $\mathcal{E}_{PE}$ holds for all $k>k_0$ uniformly with probability at least $1-\delta$. Without loss of generality, we can assume that the length of the warm-up horizon $T_{\text{init}}$ satisfies $T_{\text{init}}>k_0$. Then the persistent excitation condition holds for all $k\ge T_{\text{init}}$ with probability $1-\delta$. This condition will be repeatedly used in the following analysis.

\subsection{Uniform boundedness of $\left\| \lambda G_{p+d}V_{k}^{-\frac{1}{2}}\right\|_{2}^{2}$ with high probability}
For the regularization error term $\left\| \lambda G_{p+d} V_{k}^{-\frac{1}{2}}\right\|_{2}^{2}$, we have the following inequality
$$
\begin{aligned}
    \lambda G_{p+d}\!V_{k}^{-\frac{1}{2}}\!\!\left(\!\lambda G_{p+d}V_{k}^{-\frac{1}{2}}\!\right)^{\tr}\!\!\!\!\!=&\lambda^2\!G_{p+d} \!V_{k}^{-1}\!G_{p+d}^{\tr}\!\!\leq\!\lambda G_{p+d}\!G_{p+d}^{\tr},
\end{aligned}
$$
with the expression of $G_{p+d}$ as $G_{p+d}=\begin{bmatrix}
    G_{p+d}^{(1)}&G_{p+d}^{(2)}
\end{bmatrix}$ and the two parts take the expression as
\[
\begin{aligned}
G_{p+d}^{(1)}\triangleq&\begin{bmatrix}
    C L^{(d)}& \cdots& C \Phi_{d-1}L^{(1)}
\end{bmatrix}\\
G_{p+d}^{(2)}\triangleq&\begin{bmatrix}
    C \Phi_d \bar{L} &\cdots &C \Phi_d(A-\bar{L} \bar{C})^{p-1} \bar{L}
\end{bmatrix}.
\end{aligned}
\]
Then we have the following bound for the norm
\[
\begin{aligned}
    G_{p+d}G_{p+d}^{\tr}\leq& \left\|\bar{L}\right\|_2^2\left\|C\right\|_2^2\left\|\Phi_d\right\|_2^2\sum_{l=0}^{p-1}\left\|(A-\bar{L}\bar{C})^l\right\|_2^2\\&+\left\|C\right\|_2^2\sum_{l=1}^{d}\left\|L^{(l)}\right\|_2^2\left\|\Phi_{d-l}\right\|_2^2.
\end{aligned}\]
From Assumption \ref{asp: Diagonal} of $A-\bar{L}\bar{C}$, i.e., $\left\|(A-\bar{L}\bar{C})^p\right\|_2\leq M_1\rho(A-\bar{L}\bar{C})^p$, and Theorem \ref{thm: property} with uniform boundedness of $\Phi_l,\;l=1\dots,d$, we have 
$$
\left\| \lambda G_{p+d} V_{k}^{-\frac{1}{2}}\right\|_{2}^{2}\leq \frac{M_2}{1-\rho(A-\bar{L}\bar{C})^2}, \;\;\forall k\ge T_{\text{init}}.
$$
where $M_1$ and $M_2$ are constants only related to system parameters.
\subsection{Proof of Lemma \ref{lemma:regression-factor}: Uniform Boundedness of regression factor $\mathcal{E}_k$ with high probability}\label{sectionRegression}

In this subsection, we will provide a uniform upper bound of the regression error term  $\left\|\mathcal{R}_{k} \bar{Z}_{k}^{{\tr}} V_{k}^{-\frac{1}{2}}\right\|_{2}^{2}$ for all $k\ge T_{\text{init}}$.
Note that $r_k$ are mutually uncorrelated Gaussian random variable, i.e., $r_k\sim \mathcal{N}(0,CP^{(d+1)}C^\tr+R)$
 and we denote $\bar{R}_{(d)}=CP^{(d+1)}C^\tr+R$, then the matrix  
$\bar{R}_{(d)}^{-\frac{1}{2}}\mathcal{R}_{k}=\begin{bmatrix}
   \bar{R}_{(d)}^{-\frac{1}{2}}r_{p+d}&\cdots& \bar{R}_{(d)}^{-\frac{1}{2}}r_k
\end{bmatrix}$ is composed of mutually uncorrelated standard Gaussian random vectors. Then we consider the matrix $$\mathcal{E}_{k}\triangleq\bar{R}_{(d)}^{-\frac{1}{2}}\mathcal{R}_{k}\bar{Z}_{k}^{{\tr}} V_{k}^{-1}\bar{Z}_{k}\mathcal{R}_{k}^{{\tr}}\bar{R}_{(d)}^{-\frac{1}{2}},$$
and denote $V_{k}^{Z}=\bar{Z}_{k}^{{\tr}} V_{k}^{-1}\bar{Z}_{k}$. Then
with a similar technique with the proof of Lemma \ref{lm: propersistent}, we first choose
\[t=K^2\left(\frac{1}{2}\left\|V_{k}^{Z}\right\|_F^2+\frac{1}{c}\log\left(\frac{mk^2}{\delta}\right)\right),
\] 
where $K$ is still the $\psi_2$ norm of standard Gaussian random variable.
We can also verify
\[
\begin{aligned}
    \text{trace}(V_{k}^{Z})=&\text{trace}\left( \left(\lambda I+\bar{Z}_{k}\bar{Z}_{k}^{{\tr}}\right)^{-1}\bar{Z}_{k}\bar{Z}_{k}^{{\tr}}\right)\leq m_{p,d},\\
    \left\|V_{k}^{Z}\right\|_F^2=&\text{trace}\left(V_{k}^{Z}(V_{k}^{Z})^{\tr}\right)\leq m_{p,d},\; \left\|V_{k}^{Z}\right\|_2\leq1.\\
\end{aligned}
\]
Denote each row of $\bar{R}_{(d)}^{-\frac{1}{2}}\mathcal{R}_{k}$ as $g^{(i)}$, $i=1,\dots,m$, then with Lemma \ref{lm: HWinequality}, we also have
\[
\mathbb{P}\left\{g^{(i)}V_k^Zg^{(i)\tr} \geq \operatorname{tr}(V_{k}^Z)+t\right\} \leq \frac{2\delta}{mk^2},\;\; i=1,\dots,m.
\]
Then, with the fact that
$\left\|\mathcal{E}_{k}\right\|_2\leq \sum_{i=1}^m g^{(i)}V_k^Zg^{(i)\tr}$, we have
\[\mathbb{P}\left\{\left\|\mathcal{E}_{k}\right\|_2 \leq m\operatorname{trace}(V_{k}^Z)+mt\right\} \ge  1-\frac{2\delta}{k^2}.\]
Take a uniform bound over all $k$, and note that
\[m\operatorname{trace}(V_{k}^Z)+mt\leq \poly\left(d,\beta,\log \frac{1}{\delta}\right)\log k.\]
$p\leq \beta \log(k)$, Then we finally have
$$
\mathbb{P}\left\{\!\left\|\mathcal{E}_{k}\right\|_2 \!\leq\!\operatorname{poly}\!\left(\!d,\!\beta\!,\log\frac{1}{\delta}\!\right) \!\log k,\!\forall k\!\ge\! T_{\text{init}}\!\right\}\!\ge\! 1\!-\!\frac{\pi^2\delta}{3},
$$
and with the previous notation $\mathcal{E}_k\triangleq\left\|\mathcal{R}_{k} \bar{Z}_{k}^{{\tr}} V_{k}^{-\frac{1}{2}}\right\|_{2}^{2}$
$$
\mathbb{P}\!\left\{\!\mathcal{E}_k\!\leq\! \bar{\sigma}_{\bar{R}_{(d)}}\!\operatorname{poly}\!\Big(\!d,\!\beta,\!\log\!\frac{1}{\delta}\!\Big) \!\log k,\forall k\ge T_{\text{init}}\right\}\!\ge\! 1-\frac{\pi^2\delta}{3},
$$
where $\bar{\sigma}_{\bar{R}_{(d)}}$ is the largest singular value of $\bar{R}_{(d)}$.

\subsection{Proof of Lemma~\ref{lemma:accumulation-factor}: Uniform Boundedness of Accumulation factor $\mathcal{V}_N$ with high probability}\label{sec:proofAccumulation}
Without loss of generality, we assume $N=2^{N_E} T_{\text{init}}$, where $N_E$ is the number of epochs. Then
    we can decompose the $\mathcal{V}_{N}$ into 
\vspace{-5pt}
\[
\begin{aligned}
    \mathcal{V}_{N}\!\leq\!\! \max_{T_{\text{init}}\leq k\leq N}\left\|V_{k-1,p+d}^{-\frac{1}{2}}V_{k,p+d}^{\frac{1}{2}}\right\|_2^2\qquad\qquad\qquad\qquad\\ \times\sum_{l=1}^{N_E}\!\sum_{k=T_l}^{2T_l-2}\!\left\|V_{k, p_l+d}^{-\frac{1}{2}} Z_{k,p_l+d}\right\|_{2}^{2},
\end{aligned}
\vspace{-5pt}
\]
where the subscript of $p_l$ is to highlight the $p$ varies with epoch number $l$.
The inequality
$
\textstyle \sum_{k=T_{l}}^{2T_l-2}\left\|V_{k, p_l}^{-\frac{1}{2}} Z_{k,p_l}\right\|_{2}^{2}\leq \log\big(\det V_{2T_l-2,p_l}/\det V_{T_l-1,p_l}\big)
$ 
 directly holds with \cite[Lemma 1]{tsiamis9894660}.
In the following proof, we will also omit the subscripts $p$ and $d$ when unnecessary, in order to simplify notation. 

{\bf Uniform Boundedness of $\left\|V_{k}^{-\frac{1}{2}}V_{k+1}^{\frac{1}{2}}\right\|_2^2$ with high probability: }
Inspired by \cite{tsiamis9894660}, we still consider the successive representation of $Z_{k}$ generated by the minimal polynomial of $A$.
For the matrix $A$, suppose the minimal polynomial of $A$ takes the form
$
A^{d_0}= a_{d_0-1}A^{d_0-1}+\dots+a_0,
$
where $d_0$ is the dimension of the minimal polynomial of $A$.
Then from the Lemma 2 of \cite{tsiamis9894660}, we have the following successive representation of the sample $Z_{k}$
\vspace{-5pt}
$$Z_{k}=a_{d_0-1} Z_{k-1}+\ldots+a_{0} Z_{k-d_0}+\Delta_k.\vspace{-5pt}$$
The term $\Delta_k=\left[\Delta_{1,k}^\tr,\Delta_{2,k}^{\tr}\right]^\tr$, where
\vspace{-5pt}
\[
\begin{aligned}
    \Delta_{1,k}=&\sum_{s=0}^{d_0} \operatorname{diag}\Big(\underbrace{L_{s}, \ldots, L_{s}}_{d}\Big) \tilde{E}_{k-s,d}^{(1)}\\
    \Delta_{2,k}=&\sum_{s=0}^{d_0} \operatorname{diag}\Big(\underbrace{\bar{L}_{s}, \ldots, \bar{L}_{s}}_p\Big) \tilde{E}_{k-s,p}^{(2)},
\end{aligned}
\vspace{-5pt}
\]
and the parameters take the form of
$
    L_{s}=-a_{d_0-s} I_{m}-\sum_{t=1}^{s-1} a_{d_0-s+t} C A^{t-1} L+C A^{s-1} L,
    \bar{L}_{s}=-a_{d_0-s} I_{m}-\sum_{t=1}^{s-1} a_{d_0-s+t} \bar{C} A^{t-1} \bar{L}+\bar{C} A^{s-1} \bar{L},
$ respectively. The augmented noise terms take the form of 
$
    \tilde{E}_{k-s, d}^{(1)}=[e_{k-s-1}^{(1)\tr},\dots,e_{k-s-d}^{(1)\tr}]^\tr,
    \tilde{E}_{k-s, p}^{(2)}=[e_{k-s-1}^{(2)\tr},\dots,e_{k-s-p}^{(2)\tr}]^\tr.
$
The random variable satisfies $e_{k}^{(1)}\sim\mathcal{N}\left(0,CPC^\tr+R\right)$ and $e_{k}^{(2)}\sim\mathcal{N}\left(0,\bar{C}\bar{P}\bar{C}^\tr+\bar{R}\right)$, respectively. 

For $\Delta_{1,k}$, as $d_0$ is fixed for a given $A$, we first have $\left\|L_{s}\right\|_{2}\leq M$ holds for all $s\leq d_0$, then we have the bound
\vspace{-5pt}
$$
\left\|\Delta_{1,k}\right\|_{2}^2 \leq (d_0+1)^2 \max _{0 \leq s \leq d_0}\left\|L_{s}\right\|_{2}^2\;  d\sup_{0\leq s\leq d+d_0}\big\|e_{k-s}^{(1)}\big\|_{2}^2.
\vspace{-5pt}
$$
With Lemma \ref{lm: Deviation}, we have the event 
\vspace{-5pt}
\[\mathcal{E}_{e^{(1)}}\triangleq\left\{\big\|e^{(1)}_k\big\|_{2}^2 \leq \bar{\sigma}^{(1)}(2m+3\log \frac{k^2}{\delta}), \;\; \forall k\ge 1\right\}
\vspace{-5pt}
    \]
    holds with $\mathbb{P}\left(\mathcal{E}_{e^{(1)}}\right)\ge 1-\frac{\pi^2 \delta}{6}$, where $\bar{\sigma}^{(1)}$ is the largest eigenvalue of $CPC^\tr+R$. Therefore conditioned on $\mathcal{E}_{e^{(1)}}$, we have
    $\mathbb{P}\big(\left\|\Delta_{1,k}\right\|_{2}^2\leq \poly\left(d_0,d,\log\frac{1}{\delta}\right)\log k,\forall k\ge T_{\text{init}}\big)\ge 1-\frac{\pi^2\delta}{6}.
    $
With a similar process and $p\leq\beta\log k$, we have
$
\mathbb{P}\big(\left\|\Delta_{2,k}\right\|_{2}^2\leq \poly\left(d_0,\beta,\log\frac{1}{\delta}\right)\log^2 k,\forall k\ge T_{\text{init}}\big)\ge 1-\frac{\pi^2\delta}{6}.
$
Then we consider the term $\left\|V_{k}^{-\frac{1}{2}}V_{k+1}^{\frac{1}{2}}\right\|_2^2$, together with the structure of $V_{k}$, we have
\vspace{-5pt}
$$
\begin{aligned}
    &\left\|V_{k}^{-\frac{1}{2}}V_{k+1}^{\frac{1}{2}}\right\|_2^2=1+Z_{k+1}V_{k}^{-1}Z_{k+1} \\
    &\quad\leq 1\!+\!2\Delta_{k+1}^{\tr} V_{k}^{-1}\Delta_{k+1}\!+\!2d_0\left\|a\right\|_2^2\sum_{s = 0}^{d_0-1}Z_{k-s}^{\tr} V_{k}^{-1}Z_{k-s}.
\end{aligned}
\vspace{-5pt}
$$
Conditioned on $\mathcal{E}_{e^{(1)}}$ and $\mathcal{E}_{e^{(2)}}$, together with Lemma~\ref{lm: propersistent}, we have the following bound
\vspace{-5pt}
$$
\begin{aligned}
    \Delta_{k+1}^{\tr} V_{k}^{-1}\!\Delta_{k+1}\!\leq\! \frac{\left\|\Delta_{k+1}\right\|^2}{\frac{\sigma_{\bar{R}}}{4}k}
\!\leq \!\frac{\poly\left(d_0,d,\beta,\log\frac{1}{\delta}\right)\!\log^2 k}{k}.
\end{aligned}
\vspace{-5pt}
$$
The second equality holds due to the fact that $\left\|\Delta_{k+1}\right\|^2=\left\|\Delta_{1,k+1}\right\|^2+\left\|\Delta_{2,k+1}\right\|^2$.
Note that 
$
\frac{\log^2 k}{k}\leq \frac{4}{e^2},\;\; \forall k\ge 1.
$
Consider the term $d_0$ as a constant, since the system model is fixed, then we have
\vspace{-5pt}
$$
\mathbb{P}\Big(\!\Delta_{k+1}^{\tr} \!V_{k}^{-1}\!\!\Delta_{k+1}\!\!\leq\! \poly\!\big(d,\!\beta,\!\log\!\frac{1}{\delta}\big),\!\forall k\!\ge \!T_{\textnormal{init}}\!\Big)\!\ge \!1\!-\frac{\pi^2 \delta}{2}.
\vspace{-5pt}
$$
Then for the quadratic form $\sum_{s = 0}^{d_0-1}Z_{k-s}^{\tr} V_{k}^{-1}Z_{k-s}$, from Woodbury Equality, there is
\vspace{-5pt}
$$
\begin{aligned}
    Z_{k}^{\tr}\left(V_{k-1}+Z_{k} Z_{k}^{{\tr}}\right)^{-1}Z_{k}=&\frac{Z_{k}^{{\tr}}V_{k-1}^{-1}Z_{k}}{1+Z_{k}^{{\tr}}V_{k-1}^{-1}Z_{k}}< 1.
\end{aligned}
\vspace{-5pt}
$$
With a similar technique, it is easy to verify that 
\vspace{-5pt}
$$
2d_0\left\|a\right\|_2^2\sum_{s = 0}^{d_0-1}Z_{k-s}^{\tr} V_{k}^{-1}Z_{k-s}\leq 2d_0^2\left\|a\right\|_2^2.
\vspace{-5pt}
$$
Then we can find a uniform {\bf constant} bound $M=\operatorname{poly}\left(d_0,d,\left\|a\right\|_2,\beta,\operatorname{log}\frac{1}{\delta}\right)$ for the term $\big\|V_{k}^{-\frac{1}{2}}V_{k+1}^{\frac{1}{2}}\big\|_2^2$ with high probability $1-\frac{\pi^2}{2}\delta$, which does not scale with the time step $k$.

\vspace{-5pt}
{\bf Uniform Boundedness of $\sum_{k=T_{\text{init}}}^{N}\left\|V_{k}^{-\frac{1}{2}} Z_{k}\right\|_{2}^{2}$ with high probability: }
The analysis of this part takes a similar procedure as that in \cite[Appendix C.7]{qian2025model}, where the persistent excitation of Gram matrix $V_{k,p}$ guarantees that the sum $\sum_{k=T_{\text{init}}}^{N}\big\|V_{k}^{-\frac{1}{2}} Z_{k}\big\|_{2}^{2}$ is dominated by the accumulation in the last epoch, i.e., $\sum_{k=T_{\text{init}}}^{N}\big\|V_{k}^{-\frac{1}{2}} Z_{k}\big\|_2^2=\mathcal{O}\big(\sum_{k=N/2}^{N}\big\|V_{k}^{-\frac{1}{2}} Z_{k}\big\|_{2}^{2}\big)$. Then the doubling trick does not yield additional regret in terms of $\log N$.

With Lemma 1 in \cite{tsiamis9894660}, we first have the accumulation error of two successive epochs, i.e.,
$$
\sum_{k=T_{l}}^{4T_l-4}\!\!\Big\|V_{k}^{-\frac{1}{2}} \!Z_{k}\Big\|_{2}^{2}\!\!\!\leq\! \log \!\frac{\det(V_{4T_l\!-\!4,p^\prime \!+d})}{\det(V_{2T_l\!-\!2,p^\prime \!+d})}+\log\!\!\frac{\det(V_{2T_l-2,p+d})}{\det(V_{T_l-1,p+d})},
$$
where $p'=\beta \log(2T_l-2)$, $p=\beta \log(T_l-1)$ and $T_l$ is the length of the $l$-th epoch. 
We have $p'-p = \beta \log 2$, and we can separate the augmented samples $Z_{k,p^\prime+d}$ as
\vspace{-5pt}
$$
Z_{k,p'+d} = \begin{bmatrix}
   Z_{k, p+d}^{\tr}&Y_{k-p^\prime-d:k-p-d-1}^{c \tr}
\end{bmatrix}^{\tr},
\vspace{-8pt}
$$
where
$
Y_{k-p^\prime-d:k-p-d-1}^{c}=\left[y_{k-p-d-1}^{c\tr},\dots,y_{k-p^\prime-d}^{c\tr}\right]^\tr
$
is a combination of a sequence of global observations arising from different values of 
$p$ across different epochs.
Then the Gram matrix $V_{2T_l-2,p^{\prime}+d}$ can be correspondingly partitioned as 
$
\begin{aligned}
V_{2T_l-1,p^\prime+d}=\lambda I+ \begin{bmatrix}
Z_{11}&Z_{12}\\Z_{12}^{\tr}&Z_{22}
\end{bmatrix},
\end{aligned} 
$
where
$
Z_{11}=V_{2T_l-2,p+d}-\sum_{k=p+d}^{p^\prime+d-1}Z_{k,p+d}Z_{k,p+d}^\tr.
$
Then we have $
    \det V_{2T_l-1,p'+d}=
\det(\lambda I + {Z}_{22}-{Z}_{12}^\tr(\lambda I + {Z}_{11})^{-1}{Z}_{12})
\cdot\det(\lambda I + {Z}_{11})
$
By Schur complement lemma and Lemma~\ref{lm: propersistent}, we have
\vspace{-5pt}
\[
{Z}_{22}-{Z}_{12}^\tr(\lambda I + {Z}_{11})^{-1}{Z}_{12}\ge {Z}_{22}-{Z}_{12}^\tr{Z}_{11}^{-1}{Z}_{12}\ge 0,
\vspace{-5pt}
\]
where the persistent excitation of the Gram matrix guarantees that $Z_{11}$ is invertible.
Then we have
\vspace{-5pt}
\[\det(\lambda I + {Z}_{22}-{Z}_{12}^\tr(\lambda I + {Z}_{11})^{-1}{Z}_{12})\ge \lambda^{\bar{m}(p^\prime-p)}.
\vspace{-5pt}\]
Consider the relationship between $Z_{11}$ and $V_{2T_l-2,p+d}$,
with Lemma \ref{lm: Deviation}, we have 
$
\mathbb{P}\Big(\left\|\left(\Gamma_{k, p+d}^Z\right)^{-1 / 2} Z_{k, p+d}\right\|_{2}^2 \leq 2m_{p,d}+3 \log \frac{k^2}{\delta},\;\forall k\ge T_{\textnormal{init}}\Big)\ge 1-\frac{\pi^2\delta}{6}$
where $m_{p,d}\triangleq\bar{m}p+md$ is the dimension of the sample $Z_{k,p+d}$ for given $p$ and $d$.
With the above analysis, we further obtain
\vspace{-5pt}
$$
\begin{aligned}
    Z_{k, p+d}Z_{k, p+d}^{\tr} \leq M\left(2m_{p,d}+3 \log \frac{k^2}{\delta}\right)(p+d) k^{2\kappa-1}I.\\
\end{aligned}
\vspace{-5pt}
$$
The above inequality holds with $\mathbb{E}\left\{Z_{k, p+d}^TZ_{k, p+d}\right\} \leq$
We have
\vspace{-5pt}
$$
\begin{aligned}
    \sum_{k=p+d}^{p^\prime+d-1}\!\!Z_{k,p+d}Z_{k,p+d}^\tr\!\leq& M(p^\prime-p)\!(p^\prime +d)^{2\kappa}\\
    \vspace{-5pt}&\times \Big(2m_{p,d}+\!3 \log (p^\prime + d)^2/\delta\Big)I.
\end{aligned}
\vspace{-5pt}
$$
Note that $p^\prime=\beta\log(2T_l-2)$, then we have
\[
\sum_{k=p+d}^{p^\prime+d-1}Z_{k,p+d}Z_{k,p+d}^\tr\leq\poly(d,\beta,\log\frac{1}{\delta})\log^{2\kappa+2}(2T_l-2).
\]
With Lemma~\ref{lm: propersistent}, we further have
$$
\begin{aligned}
    \sum_{k=p+d}^{p^\prime+d-1}\!\!\! Z_{k, p+d}Z_{k, p+d}^{\tr} \!\leq\! \poly(d,\!\beta,\!\log\frac{1}{\delta})\!\frac{\log^{2\kappa+2}(2T_l\!-\!2)}{\sigma_{\bar{R}}(2T_l\!-\!2)}\!Z_{11}.
\end{aligned}
$$
The above inequality holds due to 
$\lambda_{\min}(Z_{11})\ge \lambda_{\min}(V_{2T_l-2,p^\prime+d})\ge \sigma_{\bar{R}}(2T_l-2)/4$.
Note that there exists a uniform $M_1$, such that
$$
M_1\!\ge\!\frac{\poly(d,\beta,\log\frac{1}{\delta})\log^{2\kappa+2}(2T_l-2)}{\sigma_{\bar{R}}(2T_l-2)},\,\forall l=1,\dots,\!N_E,
$$
and we can verify that
$
M_1=\poly\left(d,\beta,\log\left(1/\delta\right)\right).$
Then we have
$V_{2T_l-2,p+d}\leq (1+M_1)(\lambda I +Z_{11}),$
and
$$
\log \det V_{2T_l-2,p+d}\leq m_{p,d}\log(1+M_1)+\log\det (\lambda I +Z_{11}).
$$
Finally, we have 
\[
\log\det V_{2T_l-2,p^\prime+d}\ge \log\det(\lambda I +Z_{11})+\bar{m}(p^\prime-p)\log \lambda,
\]
and correspondingly 
$$
\begin{aligned}
    \log\det V_{2T_l-2,p+d}-&\log\det V_{2T_l-2,p^\prime+d}\\
    \leq& m_{p,d}\log(1+M_1)-\bar{m}(p^{\prime}-p)\log(\lambda)
\end{aligned}
$$
holds for all $l=1,\dots,N_E$ with probability at least $1-\frac{\pi^2\delta}{3}$.
We denote $p_l=\beta\log(2^{l-1}T_{\text{init}})$, then we have
$$
\begin{aligned} &\sum_{k=T_{\text{init}}}^{N}\left\|V_{k,p+d}^{-\frac{1}{2}} Z_{k, p+d}\right\|_{2}^{2}=\sum_{l=1}^{N_E}\log\frac{\det(V_{2T_l-2,p_l+d})}{\det(V_{T_l-1,p_l+d})}\\
\leq& \log\det(V_{2T_{N_E}-2,p_{N_E+d}})+\sum_{l=1}^{N_E}(\bar{m}p_l+md)\log(1+M_1)\\&-\log\det(V_{T_{1}-1,p_{1}})-\bar{m}N_E\beta\log(2)\log(\lambda)\\
\leq &\log\det(V_{N,p_{N_E}})+N_E \log(1+M_1)(\bar{m}\beta\log(N)+md).
\end{aligned}
$$
Note that $N=2^{N_E}T_{\text{init}}$, we have $N_E=O(\log N)$. With Lemma \ref{lm: upperbound}, we also have
\[
V_{N,p_{N_E}}\leq \operatorname{poly}\left(d,\beta,\log 1/\delta\right)N^{2\kappa
+1}I,
\]
and
$$
\begin{aligned}
    &\log\det(V_{N,p_{N_E}})\leq \log\det\Big(\operatorname{poly}\big(d,\beta,\log 1/\delta\big)N^{2\kappa
+1}I\Big)\\&\qquad\quad\leq (\bar{m}p_{N_E}+md)\log\Big(\operatorname{poly}\big(d,\beta,\log 1/\delta\big)N^{2\kappa
+1}\Big).
\end{aligned}
$$
Hence we have 
$$
\sum_{k=T_{\text{init}}}^{N}\left\|V_{k,p+d}^{-\frac{1}{2}} Z_{k, p+d}\right\|_{2}^{2}\leq\operatorname{poly}\left(d,\beta,\log\frac{1}{\delta}\right)\log^2(N),
$$
holds for all $N$ with probability at least $1-\frac{3\pi^2\delta}{6}$.

\subsection{Uniform Boundedness of $\left(\hat{y}_k-\tilde{y}_k\right)^{\tr} r_k$}
In this subsection, in order to simplify the proof, we directly utilize the result in Theorem 1 and Theorem 3 in \cite{tsiamis9894660}, and we have
$$
\begin{aligned}
   \sum_{k=T_{\text{init}}}^{N}r_k^\top(\hat{y}_k-\tilde{y}_k)=&\operatorname{poly}\left(\frac{1}{\delta}\right)\sqrt{\mathcal{L}_N}\log(\mathcal{L}_N)\\=&\operatorname{poly}\left(\frac{1}{\delta}\right)o(\mathcal{L}_N) 
\end{aligned}
$$
with probability $1-\delta$.   
\end{document}